\newcommand{\NIL}{\mathit{NIL}}
\newcommand{\COMB}{\mathit{COMB}}
\newcommand{\DEF}{\mathit{DEF}}
\newcommand{\SYDEF}{\mathit{SYDEF}}
\newcommand{\COMM}{\mathit{COMM}}
\newcommand{\CIRC}{\mathit{CIRC}}
\newcommand{\SUF}{\mathit{SUF}}
\newcommand{\PRE}{\mathit{PRE}}
\newcommand{\INF}{\mathit{INF}}
\newcommand{\NC}{\mathit{NC}}
\newcommand{\SF}{\mathit{SF}}
\newcommand{\UF}{\mathit{UF}}
\newcommand{\FIN}{\mathit{FIN}}
\newcommand{\MON}{\mathit{MON}}
\newcommand{\PS}{\mathit{PS}}
\newcommand{\ORD}{\mathit{ORD}}
\newcommand{\STAR}{\mathit{STAR}}
\newcommand{\COM}{\mathit{COM}}
\newcommand{\LCOM}{\mathit{LCOM}}
\newcommand{\RCOM}{\mathit{RCOM}}
\newcommand{\TCOM}{\mathit{2COM}}
\newcommand{\REG}{\mathit{REG}}
\newcommand{\Suf}{\mathit{Suf}}
\newcommand{\Inf}{\mathit{Inf}}
\newcommand{\Pre}{\mathit{Pre}}
\newcommand{\ec}[1]{\ensuremath{\mathcal{EC}(#1)}}
\newcommand{\ic}[1]{\ensuremath{\mathcal{IC}(#1)}}
\newcommand{\ellA}{\ell_{\mathrm{A}}}
\newcommand{\ellC}{\ell_{\mathrm{C}}}
\newcommand{\qmand}{\quad\mbox{and}\quad}
\newcommand{\iirule}[4]{
\begin{align*}
  S_1 &= #1, & S_2 &= #3,\\
  C_1 &= #2, & C_2 &= #4
\end{align*}
}
\def\Set#1#2{\left\{\: #1\;|\; #2\:\right\}}
\def\Sets#1{\left\{\,#1\,\right\}}
\def\set#1#2{\{\; #1 \mid #2\;\}}
\def\sets#1{\{#1\}}
\def\cF{{\cal F}}
\def\cS{{\cal S}}
\def\cEC{{\cal EC}}
\def\cIC{{\cal IC}}
\DeclareSymbolFont{symbols}{OMS}{cmsy}{m}{n}
\def\Lra{\Longrightarrow}
\newtheorem{theorem}{Theorem}
\newtheorem{lemma}[theorem]{Lemma}
\newtheorem{example}[theorem]{Example}
\newenvironment{proof}{{\em Proof. }}{{}\hspace*{\fill}$\Box$ \par \medskip }
\newenvironment{proof*}{{\em Proof. }}{\par \medskip }
\newlength{\btlabelwidth}\setlength{\btlabelwidth}{\labelwidth}
\newlength{\btleftmargin}\setlength{\btleftmargin}{\leftmargin}
\newenvironment{btlists}{\begin{list}{{\rm--}}{%
\setlength{\labelwidth}{\btlabelwidth}\setlength{\leftmargin}{\btleftmargin}%
\setlength{\topsep}{0pt plus0.2ex}%
\setlength{\itemsep}{0ex plus0.2ex}%
\setlength{\parsep}{0pt plus0.2ex}}}{\end{list}}
\tikzstyle{to}=[->, >=stealth]
\tikzstyle{hier}=[->, >=angle 60]
\tikzstyle{hiero}=[->, >=angle 60, dashed]
\tikzstyle{state}=[circle,draw,inner sep=2pt,minimum size=8mm]
\tikzstyle{edgeLabel}=[inner sep=0.5mm,fill=white,text=black]
\title{Idefix-Closed Languages and Their Application\\ in Contextual Grammars}
\author{Marvin K\"odding
\institute{Institut f{\"u}r Mathematik und Informatik, P{\"a}dagogische Hochschule Heidelberg\\Im Neuenheimer Feld 561, 69120 Heidelberg, Germany}
\email{koedding@ph-heidelberg.de}
\and
Bianca Truthe
\institute{Institut f\"ur Informatik, Universit\"at Giessen\\Arndtstr. 2, 35392 Giessen, Germany}
\email{bianca.truthe@informatik.uni-giessen.de}
}
\begin{document}
\maketitle

\begin{abstract}
In this paper, we continue the research on the power of contextual grammars with selection languages from 
subfamilies of the family of regular languages. 
We investigate infix-, prefix-, and suffix-closed languages (referred to as idefix-closed languages) and compare
such language families to some other subregular families of languages (finite, monoidal, nilpotent, combinational, 
(symmetric) definite, ordered, non-counting,  power-sepa\-rating, commutative, circular, union-free, star, and comet languages). Further, we compare the families of the hierarchies obtained for external and internal
contextual grammars with the language families defined by these new types for the selection. In this way, we extend the existing hierarchies by new language families.
Moreover, we solve an open problem regarding internal contextual grammars with suffix-closed selection languages.
\end{abstract}


\section{Introduction}

Contextual grammars, first proposed by Solomon Marcus  \cite{Marcus.1969} provide a formal framework 
for modeling the generation of natural languages. In this model, derivations proceed by adjoining 
pairs of `contexts'~--that is, ordered pairs of words \((u,v)\) -- to existing well-formed sentences. 
Specifically, an external application of a context \((u,v)\) to a word \(x\) yields the word \(u\,x\,v\), 
whereas an internal application produces every word of the form \(x_1\,u\,x_2\,v\,x_3\) for which \(x_1x_2x_3 = x\). 
To regulate the derivational process, each context is associated with a `selection' language: a context \((u,v)\) 
may only be applied around a word~\(x\) if \(x\) belongs to its designated selection language. By constraining 
selection languages to belong to a prescribed family~\(F\), one obtains contextual grammars with selection in \(F\).

The initial investigations into external contextual grammars with regular selection languages were conducted 
by J\"urgen Dassow \cite{Dassow.2005} and were subsequently extended -- both for external and internal variants 
by J\"urgen Dassow, Florin Manea, and Bianca Truthe \cite{Dassow_Manea_Truthe.2012,DasManTru12b}. These studies 
examined the impact of various subregular restrictions on selection languages. In the present work, we further 
refine this hierarchy by introducing families of `idefix-closed' subregular languages and explore the generative 
power of both external and internal contextual grammars whose selection languages lie in these newly defined 
families. With `idefix-closed', we mean prefix-, suffix-, or infix-closed.

Especially, in the present paper, we solve an open problem regarding internal contextual grammars with 
suffix-closed selection languages which was raised already several years ago in \cite{Truthe.2017}.

\section{Preliminaries}

Throughout the paper, we assume that the reader is familiar with the basic concepts of the theory of automata 
and formal languages. For details, we refer to \cite{Rozenberg_Salomaa.1997}. Here we only recall some notation, 
definitions, and previous results which we need for the present research.

An alphabet is a non-empty finite set of symbols. For an alphabet $V$, we denote by~$V^*$ and $V^+$ the set of all 
words and the set of all non-empty words over $V$, respectively. The empty word is denoted by~$\lambda$. 
For a word $w$ and a letter $a$, we denote the length of $w$ by $|w|$ and the number
of occurrences of the letter~$a$ in the word $w$ by $|w|_a$. For a set $A$, we denote its cardinality by $|A|$.

The family of the regular languages is denoted by $\REG$.
Any subfamily of this set is called a subregular language family.

For a language $L$ over an alphabet $V$, we set
\begin{align*}
\Inf(L) &=\set{y}{xyz\in L \text{ for some } x,z\in V^*},\\
\Pre(L) &=\set{x}{xy\in L \text{ for some } y\in V^*},\\
\Suf(L) &=\set{y}{xy\in L \text{ for some } x\in V^*}
\end{align*}
as the infix-, prefix-, and suffix-closure of $L$, respectively.
If the language $L$ is regular, then also $\Inf(L)$ and $\Suf(L)$ are regular.

\subsection{Some Subregular Language Families}

We consider the following restrictions for regular languages. In the following list of properties, we give already 
the abbreviation which denotes the family of all languages with the respective property. 
Let $L$ be a regular 
language over an alphabet $V$. With respect to the alphabet $V$, the language $L$ is said to be
\begin{itemize}
\item \emph{monoidal} ($\MON$) if and only if $L=V^*$,
\item \emph{nilpotent} ($\NIL$) if and only if it is finite or its complement $V^*\setminus L$ is finite,
\item \emph{combinational} ($\COMB$) if and only if it has the form
$L=V^*X$
for some subset~$X\subseteq V$,
\item \emph{definite} ($\DEF$) if and only if it can be represented in the form
$L=A\cup V^*B$
where~$A$ and~$B$ are finite subsets of $V^*$,
\item \emph{symmetric definite} ($\SYDEF$) if and only if $L = EV^*H$ for some regular languages $E$ and $H$,
\item \emph{infix-closed} ($\INF$) if
and only if, for any three words over $V$, say $x\in V^*$, $y\in V^*$ and $z \in V^*$, the relation $xyz\in L$ implies
the relation~$y\in L$,
\item \emph{prefix-closed} ($\PRE$) if
and only if, for any two words over $V$, say $x\in V^*$ and $y \in V^*$, the relation $xy\in L$ implies
the relation~$x\in L$,
\item \emph{suffix-closed} ($\SUF$) if
and only if, for any two words over $V$, say $x\in V^*$ and~$y\in V^*$, the relation $xy\in L$ implies
the relation~$y\in L$,
\item \emph{ordered} ($\ORD$) if and only if the language is accepted by some deterministic finite
automaton 
\[{\cal A}=(V,Z,z_0,F,\delta)\]
with an input alphabet $V$, a finite set $Z$ of states, a start state $z_0\in Z$, a set $F\subseteq Z$ of
accepting states and a transition mapping $\delta$ where $(Z,\preceq )$ is a totally ordered set and, for
any input symbol~$a\in V$, the relation $z\preceq z'$ implies~$\delta (z,a)\preceq \delta (z',a)$,
\item \emph{commutative} ($\COMM$) if and only if it contains with each word also all permutations of this
word,
\item \emph{circular} ($\CIRC$) if and only if it contains with each word also all circular shifts of this
word,
\item \emph{non-counting} ($\NC$) if and only if there is a natural
number $k\geq 1$ such that, for any three words~$x\in V^*$, $y\in V^*$, and $z\in V^*$, it 
holds~$xy^kz\in L$ if and only if~$xy^{k+1}z\in L$,
\item \emph{star-free} ($\SF$) if and only if $L$ can be described by a regular expression which is built by 
concatenation, union, and complementation,     
\item \emph{power-separating} ($\PS$) if and only if, there is a natural number $m\geq 1$ such that
for any word~$x\in V^*$, either
$J_x^m \cap L = \emptyset$
or
$J_x^m\subseteq L$
where
$J_x^m = \set{ x^n}{n\geq m}$,
\item \emph{union-free} ($\UF$) if and only if $L$ can be described by a regular expression which
is only built by concatenation and Kleene closure,
\item \emph{star} ($\STAR$) if and only if $L = H^*$ for some regular language $H \subseteq V^*$,
\item \emph{left-sided comet} ($\LCOM$) if and only if $L = EG^*$ for some regular language $E$ and a regular 
language $G \notin \{\emptyset, \{\lambda\}\}$,
\item \emph{right-sided comet} ($\RCOM$) if and only if $L = G^*H$ for some regular language $H$ and a regular 
language $G \notin \{\emptyset, \{\lambda\}\}$,
\item \emph{two-sided comet} ($\TCOM$) if and only if $L = EG^*H$ for two regular languages $E$ and $H$ and a 
regular language $G \notin \{\emptyset, \{\lambda\}\}$.
\end{itemize}

We group the language families $\SUF$, $\PRE$ and $\INF$ under the term idefix-closed families.
We remark that monoidal, nilpotent, combinational, (symmetric) definite, ordered, star-free, 
union-free, star, and (left-, right-, or two-sided) comet languages are regular, whereas non-regular languages 
of the other types mentioned above exist.
Here, we consider among the suffix-closed, commutative, circular, non-counting, 
and power-separating languages only those which are also regular.
By $\FIN$, we denote the family of languages with finitely many words.
In \cite{McNaughton_Papert.1971}, it was shown that the families of the regular non-counting languages and 
the star-free languages are equivalent~($\NC=\SF$).

Some properties of the languages of the classes mentioned above can be found in
\cite{Shyr.1991} (monoids),
\cite{Gecseg_Peak.1972} (nilpotent languages),
\cite{Havel.1969} (combinational and commutative languages),
\cite{Perles_Rabin_Shamir.1963} (definite languages),
\cite{Paz_Peleg.1965} (symmetric definite languages),
\cite{Brzozowski_Jiraskova_Zou.2014} (prefix-closed languages),
\cite{Gill_Kou.1974} and \cite{Brzozowski_Jiraskova_Zou.2014} (suffix-closed languages),
\cite{Shyr_Thierrin.1974.ord} (ordered languages),
\cite{Kudlek.2004} (circular languages),
\cite{McNaughton_Papert.1971} (non-counting and 
star free
languages),
\cite{Shyr_Thierrin.1974.ps} (power-separating languages),
\cite{Brzozowski.1962} (union-free languages),
\cite{Brzozowski.1967} (star languages),
\cite{Brzozowski_Cohen.1969} (comet languages).

\subsection{Contextual Grammars}

Let $\cF$ be a family of languages. A contextual grammar with selection in $\cF$ is a triple~$G=(V,\cS,A)$
with the following components:
\begin{btlists}
\item $V$ is an alphabet. 
\item $\cS$ is a finite set of selection pairs $(S,C)$ where $S$ is called selection language and $C$ is a set of 
so-called contexts. Any selection language $S$ is a language in the family $\cF$ with respect to an alphabet~$U \subseteq V$. 
Any set $C$ of contexts is a finite set~\hbox{$C\subset V^*\times V^*$} where, for each context~$(u,v)\in C$, 
at least one side is not empty: $uv\not=\lambda$.
\item $A$ is a finite subset of $V^*$ (its elements are called axioms).
\end{btlists}
We write a selection pair $(S,C)$ also as $S\to C$. In the case that $C$ is a singleton set~$C=\{(u,v)\}$, we
also write $S\to(u,v)$.

For a contextual grammar 
$G=(V,\sets{S_1\to C_1, S_2\to C_2,\dots, S_n\to C_n},A)$,
we set
\[\ellA(G) = \max \Set{ |w| }{ w\in A },\
\ellC(G) = \max \Set{ |uv| }{(u,v)\in C_i, 1\leq i\leq n},\ 
\ell(G)   = \ellA(G)+\ellC(G)+1.\]
We now define the derivation modes for contextual grammars with selection.

Let $G=(V,\cS,A)$ be a contextual grammar with selection.
A direct external derivation step in $G$ is defined as follows: a word~$x$ derives a word $y$ 
(written as~$x\Lra_\mathrm{ex} y$) if and only if there is a pair~$(S,C)\in\cS$ such that~$x\in S$ and $y=uxv$ 
for some pair $(u,v)\in C$.
Intuitively, one can only wrap a context $(u,v)\in C$ around a word $x$ if $x$ belongs to the corresponding
selection language $S$.

A direct internal derivation step in $G$ is defined as follows: a word~$x$ derives a word~$y$ 
(written as~$x\Lra_\mathrm{in} y$) if and only if there are words $x_1$,~$x_2$,~$x_3$
with~$x_1x_2x_3=x$ and there is a selection pair~$(S,C)\in\cS$ such that $x_2\in S$ and~$y=x_1ux_2vx_3$ 
for some pair $(u,v)\in C$.
Intuitively, we can only wrap a context~$(u,v)\in C$ around a subword~$x_2$ of~$x$ if $x_2$ belongs to 
the corresponding selection language~$S$.
%

By $\Lra^*_\mu$ we denote the reflexive and transitive closure of the relation~$\Lra_\mu$ 
for~$\mu\in\sets{\mathrm{ex},\mathrm{in}}$.
The language generated by $G$ is defined as
\[
L_\mu(G)=\set{ z }{ x\Lra^*_\mu z \mbox{ for some } x\in A }.
\]
We omit the index $\mu$ if the derivation mode is clear from the context.

By~$\cEC(\cF)$, we denote the family of all languages generated externally by contextual grammars 
with selection in $\cF$. When a contextual grammar works in the external mode, we call it an external 
contextual grammar. 
By~$\cIC(\cF)$, we denote the family of all languages generated internally by contextual grammars 
with selection in $\cF$. When a contextual grammar works in the internal mode, we call it an internal 
contextual grammar. 

\section{Results on families of idefix-closed languages
}

In this section, we investigate inclusion relations between various subregular languages classes.
Figure~\ref{fig:lang_erg_1} shows the results. 

\begin{figure}[htb]
  \centering
  \scalebox{.9}{\begin{tikzpicture}[node distance=15mm and 16mm, on grid]
    \node (MON) {$\MON$};
    \node (d0)[above=of MON] {};
    \node (FIN)[right=of d0] {$\FIN$};
    \node (NIL)[above=of d0] {$\NIL$};
    \node (COMB)[left=of NIL] {$\COMB$};
    \node (DEF)[above=of NIL] {$\DEF$};
    \node (d1)[left=of DEF] {};
    \node (SYDEF)[left=of d1] {$\SYDEF$};
    \node (d2)[right=of NIL] {};
    \node (ORD)[above=of DEF] {$\ORD$};
    \node (INF)[right=of d2] {$\INF$};
    \node (PRE)[right=of ORD] {$\PRE$};
    \node (SUF)[right=of PRE] {$\SUF$};
    \node (NC)[above=of ORD] {$\NC\stackrel{\text{\cite{McNaughton_Papert.1971}}}{=}\SF$};
    \node (PS)[above=of NC] {$\PS$};
    \node (RCOM)[above=of SYDEF] {$\RCOM$};
    \node (LCOM)[left=of RCOM] {$\LCOM$};
    \node (TCOM)[above=of RCOM] {$\TCOM$};
    \node (COMM)[right=of SUF] {$\COMM$};
    \node (CIRC)[above=of COMM] {$\CIRC$};
    \node (UF)[left=of LCOM] {$\UF$};
    \node (STAR)[below=of UF] {$\STAR$};
    \node (REG) [above of = PS] {$\REG$};
  
    \draw[hier, bend left] (MON) to node[edgeLabel] {\small\cite{Koedding.Truthe.2024}} (STAR);
    \draw[hier, bend left] (MON) to node[edgeLabel] {\small\cite{Koedding.Truthe.2024}} (SYDEF);
    \draw[hier] (RCOM) to node[pos=.45,edgeLabel]{\small\cite{Bordihn_Holzer_Kutrib.2009}}(TCOM);
    \draw[hier] (LCOM) to node[edgeLabel]{\small\cite{Koedding.Truthe.2024}}(TCOM);
    \draw[hier, bend left=20] (TCOM) to node[edgeLabel]{\small\cite{Bordihn_Holzer_Kutrib.2009}}(REG);
    \draw[hier] (STAR) to node[pos=.45,edgeLabel]{\small\cite{Koedding.Truthe.2024}} (UF);
    \draw[hier, bend left] (UF) to node[edgeLabel]{\small\cite{Holzer_Truthe.2015}}(REG);
    \draw[hier, bend right=40] (MON) to node[pos=.7,edgeLabel]{\small\cite{Truthe.2018}} (COMM);
    \draw[hier] (COMM) to node[pos=.45,edgeLabel]{\small\cite{Holzer_Truthe.2015}}(CIRC);
    \draw[hier, bend right] (CIRC) to node[edgeLabel]{\small{\cite{Holzer_Truthe.2015}}}(REG);
    \draw[hier] (MON) to node[pos=.45,edgeLabel]{\small\cite{Truthe.2018}} (NIL);
    \draw[hier] (NIL) to node[pos=.45,edgeLabel]{\small\cite{Wiedemann.1978}}(DEF);
    \draw[hier] (DEF) to node[pos=.45,edgeLabel]{\small\cite{Holzer_Truthe.2015}}(ORD);
    \draw[hier] (ORD) to node[pos=.45,edgeLabel]{\small\cite{Shyr_Thierrin.1974.ord}}(NC);
    \draw[hier] (NC) to node[pos=.4,edgeLabel]{\small\cite{Shyr_Thierrin.1974.ps}}(PS);
    \draw[hier] (PS) to node[pos=.45,edgeLabel]{\small\cite{Holzer_Truthe.2015}}(REG);
    \draw[hier, bend right] (MON) to node[pos=.7,edgeLabel]{\small \ref{srl:subsets}}(INF);
    \draw[hier, bend right] (SUF) to node[edgeLabel]{\small\cite{Holzer_Truthe.2015}}(PS);
    \draw[hier] (FIN) to node[pos=.45,edgeLabel]{\small\cite{Wiedemann.1978}}(NIL);
    \draw[hier] (COMB) to node[pos=.45,edgeLabel]{\small\cite{Havel.1969}}(DEF);
    \draw[hier] (COMB) to node[pos=.45,edgeLabel]{\small\cite{Olejar_Szabari.2023}}(SYDEF);
    \draw[hier] (SYDEF) to node[pos=.45,edgeLabel]{\small\cite{Koedding.Truthe.2024}}(LCOM);
    \draw[hier] (SYDEF) to node[pos=.45,edgeLabel]{\small\cite{Olejar_Szabari.2023}}(RCOM);
    \draw[hier] (SYDEF) to node[edgeLabel]{\small\cite{Olejar_Szabari.2023}}(PS);

    \draw[hier, bend left] (INF) to node[edgeLabel]{\small \ref{srl:subsets}}(PRE);
    
    \draw[hier] (INF) to node[edgeLabel]{\small \ref{srl:subsets}}(SUF);
    
    \draw[hier, bend right] (PRE) to node[edgeLabel]{\small \ref{srl:subsets_pre_ps}}(PS);
  \end{tikzpicture}}
  \caption{Resulting hierarchy of subregular language families.}
\label{fig:lang_erg_1}
  \end{figure}
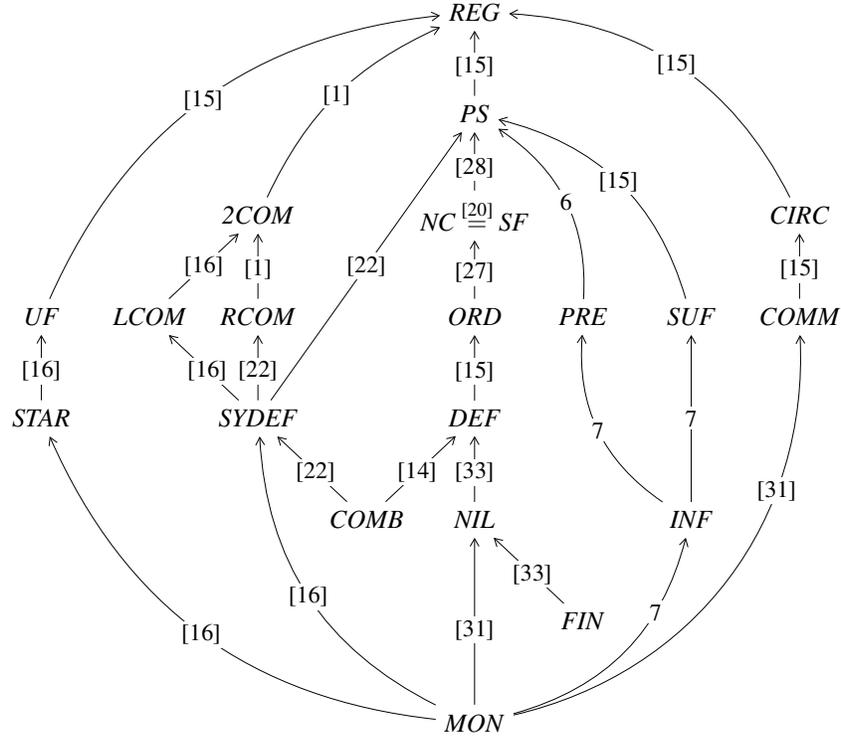

An arrow from a node $X$ to a node~$Y$ stands for the proper inclusion $X \subset Y$. 
  If two families are not connected by a directed path, they are incomparable. 
  An edge label refers to the paper where the proper inclusion has been shown (in some cases, it might be that it is not the 
  first paper where the respective inclusion has been mentioned, since it is so obvious that it was not emphasized in a 
  publication) or the lemma of this paper where the proper inclusion will be shown.

  In the literature, it is often said that two languages are equivalent if they are equal or differ
  at most in the empty word. Similarly, two families can be regarded to be equivalent if they differ 
  only in the languages $\emptyset$ or $\{\lambda\}$. Therefore, the set $\STAR$ of all star languages is 
  sometimes regarded as a proper subset of the set $\COM$ of all (left-, right-, or two-sided) comet languages 
  although $\{\lambda\}$ belongs to the family $\STAR$ but not to $\LCOM$, $\RCOM$ 
  or $\TCOM$. We regard $\STAR$ and $\STAR\setminus\{\{\lambda\}\}$ as different.

%

  We now present some languages which will serve later as witness languages for proper inclusions or
  incomparabilities.\pagebreak
  
\begin{lemma}\label{srl:pre_o_suf}
    Let $L = \{ab, a, \lambda\}$. Then, it holds $L \in (\FIN \cap \PRE) \setminus \SUF$.
\end{lemma}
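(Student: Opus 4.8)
The plan is to show the membership claim by checking the three conditions separately: finiteness, prefix-closedness, and non-suffix-closedness.

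First I would observe that $L = \{ab, a, \lambda\}$ is a finite set of words, so $L \in \FIN$ immediately by the definition of the family $\FIN$ of languages with finitely many words.

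Next, to see that $L \in \PRE$, I would verify the defining implication: for all $x, y \in V^*$ with $V = \{a, b\}$, if $xy \in L$ then $x \in L$. It suffices to check, for each word $w \in L$, that every prefix of $w$ again lies in $L$. The prefixes of $ab$ are $\lambda$, $a$, and $ab$, all in $L$; the prefixes of $a$ are $\lambda$ and $a$, both in $L$; the only prefix of $\lambda$ is $\lambda \in L$. Hence $L$ is prefix-closed.

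Finally, to see $L \notin \SUF$, I would exhibit a single counterexample to the suffix-closedness implication: take $x = a$ and $y = b$, so $xy = ab \in L$, but $y = b \notin L$ since $L = \{ab, a, \lambda\}$ does not contain the word $b$. Therefore $L$ is not suffix-closed, and combining the three observations gives $L \in (\FIN \cap \PRE) \setminus \SUF$. There is no real obstacle here; the only thing to be careful about is to make the prefix check exhaustive over all words of $L$ rather than just one, and to state the alphabet $V = \{a,b\}$ explicitly so the closure conditions are unambiguous.
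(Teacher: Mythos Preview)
Your proof is correct and follows essentially the same approach as the paper: both verify prefix-closedness by checking the prefixes of each word in $L$, note finiteness directly, and use the same counterexample $ab \in L$ with suffix $b \notin L$ to refute suffix-closedness.
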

\begin{proof}
    For the word $ab$, the prefixes $a$ and $\lambda$ are in the language. For the word $a$, 
    the prefix $\lambda$ is in the language. Furthermore, $L$ is finite. Therefore, $L \in \FIN \cap \PRE$.
    For the word $ab$, the suffix $b$ is not in the language. Therefore, $L \notin \SUF$ holds.
\end{proof}

\begin{lemma}\label{srl:fin_suf_o_pre}
    Let $L = \{ab, b, \lambda\}$ . Then, it holds $L \in (\FIN \cap \SUF) \setminus \PRE$.
\end{lemma}
\begin{proof}
    For the word $ab$, the suffixes $b$ and $\lambda$ are in the language. For the word $b$, the suffix $\lambda$ 
    is in the language. Furthermore, $L$ is finite. Therefore, $L \in \SUF$.
    For the word $ab$, the prefix $a$ is not in the language. Therefore, $L \notin \PRE$ holds.
\end{proof}

\begin{lemma}\label{srl:inf_o_uf_circ_tcom}
    Let $L = \{ab, a, b, \lambda\}$ . Then, it holds $L \in \INF \setminus (\UF \cup \CIRC \cup \TCOM)$.
\end{lemma}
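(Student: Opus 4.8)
The plan is to prove the four assertions $L \in \INF$, $L \notin \UF$, $L \notin \CIRC$, and $L \notin \TCOM$ separately, each essentially by inspection.

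For $L \in \INF$: since every word of $L$ has length at most two, I would simply enumerate all infixes. The infixes of $ab$ are $\lambda$, $a$, $b$, and $ab$; the infixes of $a$ are $\lambda$ and $a$; the infixes of $b$ are $\lambda$ and $b$; and the only infix of $\lambda$ is $\lambda$. All of these belong to $L$, so $L$ is infix-closed.

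For $L \notin \UF$: the key point is the observation that every union-free language is empty, a singleton, or infinite. I would establish this by structural induction on a union-free regular expression $r$ (built from the atomic expressions $\emptyset$, $\lambda$, and single letters using only concatenation and Kleene closure): each atom denotes $\emptyset$ or a singleton; a concatenation of two singletons is a singleton, a concatenation with an empty factor is empty, and a concatenation in which one factor is infinite and the other non-empty is infinite; finally, for a Kleene closure $s^*$, either $L(s) \subseteq \{\lambda\}$, in which case $L(s^*) = \{\lambda\}$, or $L(s)$ contains a non-empty word $w$, in which case $L(s^*) \supseteq \{w^n \mid n \ge 0\}$ is infinite. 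Hence no union-free language has cardinality in $\{2, 3, 4, \dots\}$. Since $|L| = 4$, we conclude $L \notin \UF$.

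For $L \notin \CIRC$: the word $ab$ lies in $L$, but its circular shift $ba$ does not, so $L$ is not closed under circular shifts. For $L \notin \TCOM$: suppose $L = E G^* H$ with regular languages $E, H$ and $G \notin \{\emptyset, \{\lambda\}\}$; then $G$ contains a word $w \neq \lambda$, so $G^*$ is infinite, and if $E$ or $H$ were empty then $L$ would be empty, whence we may pick $u \in E$ and $v \in H$ and obtain the infinite subset $\{u w^n v \mid n \ge 0\} \subseteq L$, contradicting the finiteness of $L$. (In fact this argument shows that every two-sided comet language is empty or infinite.) Therefore $L \notin \TCOM$, which completes the proof. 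All four parts are short; the only step requiring a genuine, though routine, argument is the cardinality dichotomy for union-free languages, while the remaining three follow directly from the definitions in Section~2.
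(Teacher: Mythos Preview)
Your proof is correct and follows essentially the same strategy as the paper: the $\INF$ and $\CIRC$ parts are identical, and for $\UF$ and $\TCOM$ you use exactly the same two key facts (a union-free language has at most one word or is infinite; a non-empty two-sided comet language is infinite). The only difference is that the paper obtains these two facts by citing \cite{Nagy.2019} and \cite{Koedding.Truthe.2024}, whereas you supply short self-contained proofs (structural induction for $\UF$, a direct pumping argument for $\TCOM$); this makes your version more self-contained but otherwise matches the paper's approach.
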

\begin{proof}
    For the word $ab$, all infixes $ab$, $a$, $b$ and~$\lambda$ are in the language. For the word $b$, all infixes $b$ and~$\lambda$ are also in the language. For the word $a$, all infixes $a$ and $\lambda$ are in the language, and for the word~$\lambda$, the infix $\lambda$ is in the language. It therefore holds that for every word, all infixes are in the language $L$, so~$L \in \INF$.
    
    For the word $ab$, the circular permutation $ba$ is not in the language. Therefore, $L \notin \CIRC$.
    According to 
    \cite{Nagy.2019}, a language has an infinite number of words or at most one word if it is union-free. However, this language has four words and therefore $L \notin \UF$ holds.
    Every non-empty language from the family $\TCOM$ is infinite by 
    \cite{Koedding.Truthe.2024}. Since $L$ is non-empty but finite, we have~$L \not\in \TCOM$.
\end{proof}

\begin{lemma}\label{srl:inf_o_nc}
    Let $L = \Inf(\set{ab^{2n}a}{n \geq 1})$. Then, it holds $L \in \INF \setminus \NC$.
\end{lemma}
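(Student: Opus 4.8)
The plan is to show two things: first, that $L = \Inf(\set{ab^{2n}a}{n \geq 1})$ is infix-closed (which is immediate), and second, that $L$ is not non-counting. For the first part, note that $L$ is by construction the infix-closure of a language, and the infix-closure operation is idempotent: $\Inf(\Inf(M)) = \Inf(M)$ for any $M$. Hence for any $x, y, z$ with $xyz \in L = \Inf(M)$ where $M = \set{ab^{2n}a}{n\geq 1}$, we have $y \in \Inf(xyz) \subseteq \Inf(\Inf(M)) = \Inf(M) = L$, so $L \in \INF$. (One should also note $L$ is regular, so that it legitimately counts as a member of the regular family $\INF$ as the paper uses it; this follows since $M$ is regular and, as recalled in the preliminaries, the infix-closure of a regular language is regular.)

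The core of the argument is $L \notin \NC$. First I would give an explicit description of the words in $L$. A word is an infix of some $ab^{2n}a$ exactly when it has one of the forms: $b^j$ for any $j \geq 0$; $ab^j$ for $j \geq 0$ with the constraint coming from being a prefix-infix (namely $ab^j$ is an infix of $ab^{2n}a$ iff $j \leq 2n$, and since $n$ ranges over all positive integers, every $ab^j$ with $j \geq 0$ qualifies); $b^j a$ for $j \geq 0$ symmetrically; and $ab^{2n}a$ itself for $n \geq 1$. So
\[
L = \set{b^j}{j \geq 0} \cup \set{ab^j}{j \geq 0} \cup \set{b^j a}{j \geq 0} \cup \set{ab^{2n}a}{n \geq 1}.
\]
The key observation is that the only words in $L$ containing two occurrences of $a$ are those of the form $ab^{2n}a$ with $n \geq 1$ — so the block of $b$'s between two $a$'s must have \emph{even} length.

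Now suppose, for contradiction, that $L$ is non-counting with constant $k \geq 1$. Apply the defining condition with $x = a$, $y = b^2$, $z = b^{2k-1}a$ — here I pick the exponents so that $xy^kz$ and $xy^{k+1}z$ differ in the parity of the central $b$-block. Then $xy^kz = a b^{2k} b^{2k-1} a = a b^{4k-1} a$, which has an odd number of $b$'s between the two $a$'s, hence $xy^kz \notin L$; whereas $xy^{k+1}z = a b^{2k+2} b^{2k-1} a = a b^{4k+1} a$, also odd, also $\notin L$. That gives equality of memberships, not a contradiction — so I need to choose $y$ of \emph{odd} length instead. Take $y = b$: then with $x = a$ and $z = a$ we get $xy^k z = ab^k a$ and $xy^{k+1}z = ab^{k+1}a$. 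Exactly one of $k$, $k+1$ is even, so exactly one of these two words lies in $L$, contradicting the non-counting condition $xy^k z \in L \iff xy^{k+1}z \in L$. Hence no such $k$ exists and $L \notin \NC$.

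The main obstacle — really the only subtlety — is getting the pumping word right: one must pick the pumped factor $y$ to have odd length so that iterating it flips the parity of the $b$-block sitting between the two $a$'s, and one must keep the $a$'s as the fixed prefix $x$ and suffix $z$ so that both candidate words are of the "two $a$'s with a $b$-block in between" shape, where membership in $L$ is governed purely by that parity. With $x = z = a$ and $y = b$ this is transparent. Everything else (the explicit form of $L$, regularity, idempotence of $\Inf$) is routine.
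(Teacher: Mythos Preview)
Your proposal is correct and follows essentially the same approach as the paper: both arguments observe that $L\in\INF$ immediately from the closure construction, and both refute $L\in\NC$ by taking $x=z=a$, $y=b$ and noting that exactly one of $ab^ka$ and $ab^{k+1}a$ lies in $L$ because membership of two-$a$ words is governed by the parity of the intervening $b$-block. Your explicit description of $L$ and the discarded attempt with $y=b^2$ are extra scaffolding not present in the paper, but the final argument is identical.
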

\begin{proof}
Since $L$ is the infix-closure of a language, we see $L\in\INF$.

Assuming that $L$ is non-counting, it follows from the definition that for all words $x,y,z \in \{a,b\}^*$ and 
for a number $k \geq 1$, the equivalence 
$xy^kz \in L \Longleftrightarrow xy^{k+1}z \in L$
applies. We now set $x = z = a$ and  $y = b$. 

If $k$ is even, $ab^ka \in L$ but~$ab^{k+1}a \in L$, which is a contradiction. If $k$ is odd, $ab^{k+1}a \in L$ 
but~$ab^ka \in L$, which is also a contradiction. It follows that $L \notin \NC$ holds.
\end{proof}

\begin{lemma}\label{srl:comb_o_pre}
    Let 
    $L = \{a,b\}^*\{b\}$. Then, it holds $L \in \COMB \setminus \PRE$.
\end{lemma}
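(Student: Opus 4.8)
The plan is to show two things: first, that $L=\{a,b\}^*\{b\}$ is combinational, and second, that it is not prefix-closed. The combinational part is immediate from the definition: a language over an alphabet $V$ is combinational if it has the form $V^*X$ for some $X\subseteq V$. Here we take $V=\{a,b\}$ and $X=\{b\}$, so $L=V^*\{b\}=V^*X$ witnesses $L\in\COMB$ directly, with no computation needed.

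For the separation $L\notin\PRE$, I would exhibit a single counterexample to the prefix-closure condition. The word $ab$ lies in $L$ (it ends in $b$), and it factors as $ab$ with prefix $a$ and suffix $b$; but $a\notin L$ since $a$ does not end with the letter $b$. Hence the implication ``$xy\in L\Rightarrow x\in L$'' fails for $x=a$, $y=b$, so $L$ is not prefix-closed. One could equally use the word $b\in L$ with prefix $\lambda\notin L$.

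There is essentially no obstacle here — both parts are one-line verifications against the respective definitions. The only thing to be mildly careful about is keeping the witness alphabet fixed (it must be exactly $\{a,b\}$, not a larger alphabet), but since every letter appearing in $L$ is used and the defining regular expression $\{a,b\}^*\{b\}$ names the alphabet explicitly, this is unproblematic. I would write the proof as follows.

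\begin{proof}
    The language $L$ has the form $L=V^*X$ for the alphabet $V=\{a,b\}$ and the subset $X=\{b\}\subseteq V$. Hence $L$ is combinational, i.e., $L\in\COMB$.

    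For the word $ab\in L$, the prefix $a$ is not in $L$ because $a$ does not end with the letter $b$. Therefore, the relation $xy\in L$ does not imply $x\in L$ (take $x=a$ and $y=b$), so $L\notin\PRE$ holds.
\end{proof}
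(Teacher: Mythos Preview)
Your proof is correct and follows essentially the same approach as the paper: verify the form $V^*X$ for $\COMB$, then exhibit a word in $L$ with a prefix not in $L$. The paper happens to use $aab$ with prefix $aa$ instead of your $ab$ with prefix $a$, but this is an immaterial difference.
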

\begin{proof}
    With $V = \{a,b\}$ and 
    $A = \{b\}$, the language $L$ has the structure $L = V^*A$. Therefore, $L \in \COMB$.
    The word $aab$ is in $L$ but the prefix $aa$ is not. Hence, $L \notin \PRE$ holds.
\end{proof}

%

We now prove some inclusion relations.

\begin{lemma}\label{srl:subsets_pre_ps}
    The proper inclusion $\PRE \subset \PS$ holds.
\end{lemma}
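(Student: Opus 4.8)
The statement $\PRE \subset \PS$ has two parts: the inclusion $\PRE \subseteq \PS$, and its properness. For the inclusion, the plan is to take an arbitrary regular prefix-closed language $L$ over an alphabet $V$ and show it is power-separating, i.e. to exhibit a natural number $m \geq 1$ such that for every word $x \in V^*$, either $J_x^m \cap L = \emptyset$ or $J_x^m \subseteq L$, where $J_x^m = \set{x^n}{n \geq m}$. The natural value to try is $m = 1$, so that $J_x^1 = \set{x^n}{n \geq 1}$. The key observation is that prefix-closure interacts very rigidly with powers of a single word: if $x^n \in L$ for some $n \geq 1$, then for every $k \leq n$ the prefix $x^k$ lies in $L$; conversely, I need to rule out the possibility that some small power $x^k$ is in $L$ while a larger power $x^n$ is not.

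**Handling the powers.** The heart of the argument is: if $L$ is prefix-closed and $x^k \in L$ for some $k \geq 1$, does it follow that $x^n \in L$ for all $n \geq 1$? This is not true for all prefix-closed languages (e.g. $\Pre(\sets{ab})$ contains $a$ but not $aa$), but the point is that $a \neq x$ for a single letter $x$ here — we are looking at powers of one fixed word $x$. Consider $x \in V^+$ (the case $x = \lambda$ is trivial since then $J_x^m = \sets{\lambda}$). If $x^k \in L$ for some $k\ge 1$, I want $x^n \in L$ for all $n$. The obstruction: $L$ might contain $x$ but not $x^2$. So $m=1$ does not obviously work either, and in fact I suspect one cannot take a uniform $m$ that works for \emph{all} $x$ simultaneously unless one argues via the finite automaton. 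The cleaner route is: let $\cA = (V, Z, z_0, F, \delta)$ be a DFA for $L$ with $|Z| = s$ states, and set $m = s$. For a fixed $x$, consider the states $q_i = \delta(z_0, x^i)$ for $i \geq 0$. Among $q_0, \dots, q_s$ two coincide, so the sequence $(q_i)$ is eventually periodic with preperiod and period bounded by $s$. Then for $n \geq m = s$, whether $x^n \in L$ depends only on the residue of $n$ in the periodic part. If for \emph{all} $n \geq s$ we have $x^n \notin L$, then $J_x^s \cap L = \emptyset$. Otherwise some $x^{n_0} \in L$ with $n_0 \geq s$; I then use prefix-closure to pull this back and pump it up, showing $x^n \in L$ for all $n \geq s$ — this is where prefix-closure is essential and is the step requiring care.

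**The crucial pumping step.** Suppose $n_0 \geq s$ and $x^{n_0} \in L$. Since $L$ is prefix-closed and each $x^i$ ($i \leq n_0$) is a prefix of $x^{n_0}$, we get $x^i \in L$ for all $i \leq n_0$; in particular $x^s, x^{s+1}, \dots, x^{n_0} \in L$. Now among $x^s, \dots, x^{n_0}$, if $n_0 - s \geq s$ we already see a full period of accepting behaviour on powers of $x$ lying \emph{inside} $L$, forcing the periodic part of the state sequence to be entirely accepting, hence $x^n \in L$ for all $n \geq s$. If $n_0 - s < s$, i.e. $n_0 < 2s$, I argue instead: the state sequence $q_0, q_1, \dots$ enters its cycle by index $s$; since $q_{n_0}$ is accepting and $q_{n_0}$ lies on the cycle (as $n_0 \geq s$), and every index $i$ with $s \leq i \leq n_0$ has $q_i$ accepting, and the cycle length is at most $s \leq n_0$... here I must make sure the stretch $s, \ldots, n_0$ covers at least one full period; if not, I loop back and pick a \emph{larger} accepting power. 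The safe formulation: take $m = 2s$ (or $m = $ preperiod $+$ period, at most $2s$). Then once one power $x^{n_0}$ with $n_0 \geq m$ is accepted, prefix-closure gives all of $x^s, \ldots, x^{n_0}$ with $n_0 - s \geq s \geq$ period, covering a full period of cycle states, all accepting, so $x^n \in L$ for all $n \geq m$. Thus $m = 2s$ witnesses the power-separating property, proving $\PRE \subseteq \PS$. I expect getting the constant $m$ exactly right (reconciling preperiod, period, and the prefix-closure pull-back) to be the main technical obstacle.

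**Properness.** For $\PRE \neq \PS$, I need a power-separating language that is not prefix-closed. The simplest candidate is $L = \sets{a,b}^*\sets{b}$, which by Lemma~\ref{srl:comb_o_pre} lies in $\COMB \setminus \PRE$; since $\COMB \subseteq \DEF \subseteq \ORD \subseteq \NC = \SF \subseteq \PS$ (all shown in Figure~\ref{fig:lang_erg_1}), we get $L \in \PS \setminus \PRE$. Alternatively, and more directly, one can simply note $L \in \PS$ by the chain of inclusions already established in the figure and $L \notin \PRE$ by Lemma~\ref{srl:comb_o_pre}. Combining the inclusion with this witness yields the proper inclusion $\PRE \subset \PS$.
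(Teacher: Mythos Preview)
Your proposal is correct and follows essentially the same approach as the paper: a pigeonhole/pumping argument exploiting regularity together with prefix-closure for the inclusion $\PRE \subseteq \PS$ (the paper refers to the analogous proof for $\SUF \subseteq \PS$ and, in its detailed version, pumps via grammar nonterminals where you pump via DFA states), and the identical witness $L = \{a,b\}^*\{b\} \in \COMB \setminus \PRE$ from Lemma~\ref{srl:comb_o_pre} for properness. Your choice $m = 2s$ with $s$ the number of DFA states works cleanly, since $r + p \leq s$ guarantees that the accepted stretch $x^s,\dots,x^{2s}$ covers a full period of the cycle.
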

\begin{proof}
The inclusion $\PRE \subseteq \PS$ can be shown similarly to the inclusion $\SUF \subset \PS$ which was proved 
in~\cite{Holzer_Truthe.2015}.
The language $L = \{a,b\}^*\{b\}$ from Lemma~\ref{srl:comb_o_pre} is a witness language since it is in $\COMB \setminus \PRE$ and therefore in $\PS \setminus \PRE$.
\end{proof}

\begin{lemma}\label{srl:subsets}
    The proper inclusions $\MON \subset \INF \subset \PRE$ and $\INF \subset \SUF$ hold.
\end{lemma}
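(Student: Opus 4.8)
The plan is to establish the three claimed proper inclusions separately, first arguing set-inclusion and then exhibiting witness languages (mostly already provided in the preceding lemmas) for properness.

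\emph{The inclusion $\MON \subset \INF$.} First I would note that every monoidal language $L = V^*$ is trivially infix-closed: if $xyz \in V^*$ then certainly $y \in V^*$. Hence $\MON \subseteq \INF$. For properness, the language $L = \{ab, a, b, \lambda\}$ of Lemma~\ref{srl:inf_o_uf_circ_tcom} lies in $\INF$ but is not of the form $V^*$ (it is finite and non-empty but, over the alphabet $\{a,b\}$, omits e.g.\ the word $aa$), so $L \in \INF \setminus \MON$.

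\emph{The inclusions $\INF \subseteq \PRE$ and $\INF \subseteq \SUF$.} The key observation is that a prefix $x$ of a word $w = xy$ is in particular an infix of $w$ (take the surrounding words to be $\lambda$ on the left and $y$ on the right); symmetrically every suffix is an infix. Thus, if $L$ is infix-closed and $xy \in L$, then $x \in L$ (so $L \in \PRE$) and $y \in L$ (so $L \in \SUF$). For the properness of $\INF \subset \PRE$, I would use the language $L = \{ab, a, \lambda\}$ of Lemma~\ref{srl:pre_o_suf}, which lies in $\PRE$ but not in $\SUF$, hence a fortiori not in $\INF$ (since $\INF \subseteq \SUF$); thus $L \in \PRE \setminus \INF$. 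For the properness of $\INF \subset \SUF$, symmetrically the language $L = \{ab, b, \lambda\}$ of Lemma~\ref{srl:fin_suf_o_pre} lies in $\SUF \setminus \PRE$, hence in $\SUF \setminus \INF$.

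I do not expect any real obstacle here: all three inclusions are immediate from the definitions of infix, prefix, and suffix, and every witness language needed for properness has already been verified in Lemmas~\ref{srl:pre_o_suf}, \ref{srl:fin_suf_o_pre}, and~\ref{srl:inf_o_uf_circ_tcom}. The only point requiring a moment's care is the bookkeeping that a separating language for $\INF \subset \PRE$ can be obtained from one separating $\PRE$ from $\SUF$ (and dually), which relies precisely on the two inclusions $\INF \subseteq \SUF$ and $\INF \subseteq \PRE$ just established.
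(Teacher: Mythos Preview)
Your proposal is correct and follows essentially the same approach as the paper: the inclusions are immediate from the definitions, and properness of $\INF \subset \PRE$ and $\INF \subset \SUF$ is witnessed exactly as you do, via Lemmas~\ref{srl:pre_o_suf} and~\ref{srl:fin_suf_o_pre} together with the already-established inclusions $\INF \subseteq \SUF$ and $\INF \subseteq \PRE$. The only difference is the witness for $\MON \subset \INF$: the paper uses the language of Lemma~\ref{srl:inf_o_nc} (arguing via $\MON \subset \NC$), whereas your choice of the finite language $\{ab,a,b,\lambda\}$ from Lemma~\ref{srl:inf_o_uf_circ_tcom} is a perfectly valid, and arguably more direct, alternative.
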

\begin{proof*}
The inclusions
$\MON \subseteq \INF \subseteq \PRE$ and $\INF \subseteq \SUF$
follow from the definition. For their properness, we have the following witness languages:
    \begin{enumerate}
        \item $\MON \subset \INF$: The language $L = \Inf\left(\set{ab^{2n}a}{n\geq 1}\right)$ from Lemma~\ref{srl:inf_o_nc} is a witness language since it is in $\INF \setminus \NC$ and therefore in $\INF \setminus \MON$ (because $\MON\subset\NC$).
        \item $\INF \subset \PRE$: The language $L = \{ab,a,\lambda\}$ from Lemma~\ref{srl:pre_o_suf} is a witness language since it belongs to the set $\PRE \setminus \SUF$ and therefore, 
        it also belongs to the set $\PRE \setminus \INF$.
        \item $\INF \subset \SUF$: The language $L = \{ab,b,\lambda\}$ from Lemma~\ref{srl:fin_suf_o_pre} is a witness language since it belongs to the set $\SUF \setminus \PRE$ and therefore, 
        it also belongs to the set $\SUF \setminus \INF$.\hfill$\Box$
    \end{enumerate}
\end{proof*}

We now prove the incomparability relations mentioned in Figure~\ref{fig:lang_erg_1} which have not been proved earlier. 
These are the relations regarding the families $\PRE$ and $\INF$. 
In most cases, we show the incomparability of whole `strands' in the hierarchy. For a strand consisting of language
families $F_1, F_2, \dots, F_n$ where $F_1$ is a subset of every family $F_i$ with $1\leq i\leq n$ and every such family
is a subset of the family $F_n$ and a strand consisting of families $F'_1, F'_2, \dots, F'_m$ where $F'_1\subseteq F'_j$
and $F'_j\subseteq F'_m$ for $1\leq j\leq m$, it suffices to show that there are a language $L$ in $F_1 \setminus F'_m$ 
and a language $L'$ in $F'_1 \setminus F_n$ in order to show that every family $F_i$ is incomparable to every family $F'_j$
with $1\leq i\leq n$ and $1\leq j\leq m$ (because $L\in F_i \setminus F'_j$ and $L'\in F'_j \setminus F_i$).
So, we give only two witness languages $L$ and $L'$ for every pair of strands.

\begin{lemma}\label{lemma:pre_unvergleichbarzu_suf}
  The language families $\SUF$ and $\PRE$ are incomparable to each other.
\end{lemma}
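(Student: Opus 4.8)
The plan is to simply exhibit one witness language in each direction, using the finite examples already established. For $\SUF \not\subseteq \PRE$, I would invoke Lemma~\ref{srl:fin_suf_o_pre}, which shows that $L = \{ab, b, \lambda\}$ belongs to $(\FIN \cap \SUF) \setminus \PRE$; in particular $L \in \SUF \setminus \PRE$, so $\SUF$ is not contained in $\PRE$. For $\PRE \not\subseteq \SUF$, I would invoke Lemma~\ref{srl:pre_o_suf}, which shows that $L' = \{ab, a, \lambda\}$ belongs to $(\FIN \cap \PRE) \setminus \SUF$; in particular $L' \in \PRE \setminus \SUF$, so $\PRE$ is not contained in $\SUF$.

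Putting the two together: neither family is a subset of the other, hence $\SUF$ and $\PRE$ are incomparable. There is essentially no obstacle here — the genuine work was already done in proving Lemmas~\ref{srl:pre_o_suf} and~\ref{srl:fin_suf_o_pre}, where one checks directly that each three-word language is closed under the relevant operation (prefix or suffix closure) while failing the other (because $ab$ has the "wrong-end" factor, namely $b$ as a suffix in the first case and $a$ as a prefix in the second, missing from the language). The only thing to be careful about is to state the right containment in each direction, i.e.\ not to swap the two witnesses.

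\begin{proof}
By Lemma~\ref{srl:fin_suf_o_pre}, the language $\{ab, b, \lambda\}$ belongs to $\SUF \setminus \PRE$, so $\SUF \not\subseteq \PRE$. By Lemma~\ref{srl:pre_o_suf}, the language $\{ab, a, \lambda\}$ belongs to $\PRE \setminus \SUF$, so $\PRE \not\subseteq \SUF$. Hence the families $\SUF$ and $\PRE$ are incomparable.
\end{proof}

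If one wanted a uniform treatment in line with the "strand" methodology announced just before the lemma, one could equally phrase this as: the family $\FIN \cap \PRE$ contains a language outside $\SUF$ and the family $\FIN \cap \SUF$ contains a language outside $\PRE$, which immediately gives the incomparability of $\PRE$ and $\SUF$ and, since $\FIN$ is contained in both $\PRE$ and $\SUF$ only via intersection, keeps the witnesses as small as possible.
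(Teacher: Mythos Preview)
Your proof is correct and follows exactly the same approach as the paper: both simply cite Lemmas~\ref{srl:pre_o_suf} and~\ref{srl:fin_suf_o_pre} to obtain the two witness languages $\{ab,a,\lambda\}\in\PRE\setminus\SUF$ and $\{ab,b,\lambda\}\in\SUF\setminus\PRE$.
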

\begin{proof}
Witness languages are given in Lemmas~\ref{srl:pre_o_suf} and~\ref{srl:fin_suf_o_pre}.
\end{proof}

\begin{lemma}\label{lemma:pre_inf_unvergleichbarzu_comb_nc}
    Let $\cF = \{\COMB, \DEF, \SYDEF, \ORD, \NC\}$. 
  Every family in $\cF$ is incomparable to the families ${\PRE}$ and $\INF$.
\end{lemma}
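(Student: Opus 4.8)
The plan is to apply the ``strand'' argument formulated above, so I only need two witness languages. On the one side we have the strand $\MON\subset\INF\subset\PRE$ from Lemma~\ref{srl:subsets}, so in particular every infix-closed language is prefix-closed. On the other side, every family in $\cF$ contains $\COMB$ (using $\COMB\subseteq\DEF\subseteq\ORD\subseteq\NC$ and $\COMB\subseteq\SYDEF$, all recorded in Figure~\ref{fig:lang_erg_1}). Hence it suffices to exhibit a language $L_1\in\COMB$ with $L_1\notin\PRE$ and a language $L_2\in\INF$ with $L_2\notin F$ for every $F\in\cF$: then $L_1\in F\setminus\PRE\subseteq F\setminus\INF$ for every $F\in\cF$, and $L_2\in\INF\setminus F\subseteq\PRE\setminus F$ for every $F\in\cF$, which gives all the claimed incomparabilities.

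For $L_1$ I would take the language $\{a,b\}^*\{b\}$ from Lemma~\ref{srl:comb_o_pre}: it is combinational, hence lies in every family of $\cF$, and it is not prefix-closed, hence not infix-closed.

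For $L_2$ I would take $L_2=\Inf(\set{ab^{2n}a}{n\geq 1})$ from Lemma~\ref{srl:inf_o_nc}: it is infix-closed, hence prefix-closed, and by Lemma~\ref{srl:inf_o_nc} it is not non-counting, so it is not in $\COMB$, $\DEF$, or $\ORD$ either, since all three are contained in $\NC$. The remaining -- and main -- obstacle is to show $L_2\notin\SYDEF$; this does \emph{not} follow from $L_2\notin\NC$, because $\SYDEF$ is not a subfamily of $\NC$, so a direct argument is required. The argument I have in mind: every word of $L_2$ is a factor of some word $ab^{2n}a$ with $n\geq 1$, and in such a word the two occurrences of $a$ are separated by at least two occurrences of $b$; hence no word of $L_2$ contains $aa$ as a factor. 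On the other hand, if $L_2=EV^*H$ for regular languages $E,H$ over its alphabet $V=\{a,b\}$, then $E$ and $H$ are non-empty because $L_2\neq\emptyset$, and for any $e\in E$ and $h\in H$ the word $e\,aa\,h$ belongs to $EV^*H=L_2$ and contains $aa$ as a factor -- a contradiction. Thus $L_2\notin\SYDEF$, so $L_2\notin F$ for every $F\in\cF$.

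Combining the two witnesses via the strand argument then yields that each of $\COMB$, $\DEF$, $\SYDEF$, $\ORD$, and $\NC$ is incomparable to both $\PRE$ and $\INF$. The only step that is not routine bookkeeping with Figure~\ref{fig:lang_erg_1} and Lemmas~\ref{srl:comb_o_pre}, \ref{srl:inf_o_nc}, and~\ref{srl:subsets} is the verification that the witness $L_2$ escapes $\SYDEF$.
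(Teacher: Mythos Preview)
Your proof is correct and uses exactly the same two witness languages as the paper: $\{a,b\}^*\{b\}\in\COMB\setminus\PRE$ (Lemma~\ref{srl:comb_o_pre}) and $\Inf(\set{ab^{2n}a}{n\geq 1})\in\INF\setminus\NC$ (Lemma~\ref{srl:inf_o_nc}).

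Where you differ from the paper is in your treatment of $\SYDEF$. The paper's proof simply records the two witnesses and appeals to the strand argument, but as you correctly observe, $\SYDEF$ is \emph{not} contained in $\NC$ (they are incomparable in Figure~\ref{fig:lang_erg_1}), so the set $\cF$ is not a strand with top $\NC$, and the relation $L_2\notin\NC$ does not by itself yield $L_2\notin\SYDEF$. You close this gap with a short direct argument (no word of $L_2$ contains the factor $aa$, whereas any nonempty $EV^*H$ with $V=\{a,b\}$ does), which is valid. In the paper this point is left implicit; the incomparability of $\SYDEF$ with $\PRE$ and $\INF$ is in any case re-established in Lemma~\ref{lemma:pre_inf_unvergleichbarzu_sydef_2com} via a different witness ($\{ab,a,b,\lambda\}\in\INF\setminus\TCOM\subseteq\INF\setminus\SYDEF$). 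So your argument is the same route as the paper's, but with the $\SYDEF$ case made explicit and self-contained rather than deferred.
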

\begin{proof}
    As witness languages, we have 
    \[L_1 = \Inf\left(\set{ab^{2n}a}{n \geq 1}\right) \in \INF \setminus \NC \qmand
    L_2 = \{a,b\}^*\{b\} \in \COMB \setminus \PRE\]
    from Lemma~\ref{srl:inf_o_nc} and 
    Lemma~\ref{srl:comb_o_pre}, respectively.
\end{proof}

\begin{lemma}\label{lemma:pre_inf_unvergleichbarzu_fin_nil}
    Let $\cF = \{\FIN, \NIL\}$. 
  Every family in $\cF$ is incomparable to the families ${\PRE}$ and $\INF$.
\end{lemma}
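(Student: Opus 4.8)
The plan is to apply the strand principle stated just before the lemma. On one side we have the strand $\FIN \subset \NIL$ (so $F_1 = \FIN$ and $F_n = \NIL$); on the other side, since $\INF \subseteq \PRE$ by Lemma~\ref{srl:subsets}, the families $\INF$ and $\PRE$ form a strand with $F'_1 = \INF$ and $F'_m = \PRE$. Hence it suffices to exhibit one language $L_1 \in \INF \setminus \NIL$ and one language $L_2 \in \FIN \setminus \PRE$; the principle then yields at once that each of $\FIN$ and $\NIL$ is incomparable to each of $\INF$ and $\PRE$.

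For $L_1$ I would reuse the witness $L_1 = \Inf(\set{ab^{2n}a}{n \geq 1})$ from Lemma~\ref{srl:inf_o_nc}. It lies in $\INF$ by construction. To see that $L_1 \notin \NIL$, I would appeal to the chain $\NIL \subset \DEF \subset \ORD \subset \NC$ already drawn in Figure~\ref{fig:lang_erg_1}: since $L_1 \notin \NC$ by Lemma~\ref{srl:inf_o_nc} and $\NIL \subseteq \NC$, it follows that $L_1 \notin \NIL$. (If a self-contained argument is preferred, one just notes that $L_1$ is neither finite -- it contains $b^k$ for every $k \geq 0$ -- nor co-finite -- it omits $ab^{2k+1}a$ for every $k \geq 0$ -- so it cannot be nilpotent.)

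For $L_2$ I would take a small finite language that is not prefix-closed, for instance $L_2 = \{ab\}$: it is finite, and $ab \in L_2$ while the prefix $a$ of $ab$ is not in $L_2$, so $L_2 \notin \PRE$; since $\INF \subseteq \PRE$, also $L_2 \notin \INF$. Then $L_1$ witnesses $\INF \not\subseteq \FIN$, $\INF \not\subseteq \NIL$, $\PRE \not\subseteq \FIN$ and $\PRE \not\subseteq \NIL$, while $L_2$ witnesses the four reverse non-inclusions, which is precisely the asserted incomparability. The argument contains no real difficulty; the only spot that calls for a little care is confirming that $L_1$ is not nilpotent, and this is handled uniformly by the position of $\NIL$ below $\NC$ in the hierarchy already established, so that Lemma~\ref{srl:inf_o_nc} can be quoted verbatim.
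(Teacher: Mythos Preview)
Your argument is correct and follows essentially the same route as the paper: the same strand principle, and the same witness $\Inf(\set{ab^{2n}a}{n \geq 1})$ for the $\INF\setminus\NIL$ direction via $\NIL\subset\NC$. The only difference is cosmetic: for the $\FIN\setminus\PRE$ direction the paper reuses the language $\{ab,b,\lambda\}$ from Lemma~\ref{srl:fin_suf_o_pre}, whereas you introduce the ad-hoc (and equally valid) witness $\{ab\}$.
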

\begin{proof}
    As witness languages, we have 
    \[L_1 = \{ab,b,\lambda\} \in \FIN \setminus \PRE \qmand
    L_2  = \Inf\left(\set{ab^{2n}a}{n \geq 1}\right) \in \INF \setminus \NC\]
    from Lemma~\ref{srl:fin_suf_o_pre} and Lemma~\ref{srl:inf_o_nc}, respectively.
\end{proof}

\begin{lemma}\label{lemma:pre_inf_unvergleichbarzu_sydef_2com}
    Let $\cF = \{\SYDEF, \RCOM, \LCOM, \TCOM\}$. 
  Every family in $\cF$ is incomparable to the families ${\PRE}$ and $\INF$.
\end{lemma}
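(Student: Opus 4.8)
The plan is to apply the strand argument set out just before Lemma~\ref{lemma:pre_unvergleichbarzu_suf}. On one side we have the strand $\SYDEF \subseteq \LCOM \subseteq \TCOM$ and $\SYDEF \subseteq \RCOM \subseteq \TCOM$, whose smallest member is $\SYDEF$ and whose largest member is $\TCOM$; on the other side we have the strand $\INF \subseteq \PRE$ from Lemma~\ref{srl:subsets}, with smallest member $\INF$ and largest member $\PRE$. Hence it suffices to produce one language in $\INF \setminus \TCOM$ and one language in $\SYDEF \setminus \PRE$, and the incomparability of every family in $\cF$ with both $\PRE$ and $\INF$ follows.

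For the first witness I would reuse $L_1 = \{ab, a, b, \lambda\}$ from Lemma~\ref{srl:inf_o_uf_circ_tcom}, which was already shown there to lie in $\INF \setminus (\UF \cup \CIRC \cup \TCOM)$; in particular $L_1 \in \INF \setminus \TCOM$, so $L_1$ separates $\INF$ (and therefore $\PRE$) from $\TCOM$ (and therefore from $\SYDEF$, $\LCOM$, and $\RCOM$).

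For the second witness I would take $L_2 = \{a,b\}^*\{b\}$ from Lemma~\ref{srl:comb_o_pre}. It is combinational, and writing it as $\{\lambda\}\{a,b\}^*\{b\}$ exhibits it in the form $EV^*H$, so $L_2 \in \SYDEF$ (this is just the inclusion $\COMB \subseteq \SYDEF$); by Lemma~\ref{srl:comb_o_pre} it is not prefix-closed. Thus $L_2 \in \SYDEF \setminus \PRE$, which separates $\SYDEF$ (and hence $\LCOM$, $\RCOM$, $\TCOM$) from $\PRE$ (and hence $\INF$).

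No real obstacle is expected here: everything needed has already been recorded in the earlier lemmas, so the argument is essentially bookkeeping with the strand lemma. The only point worth stating explicitly is that $\TCOM$ contains neither $\emptyset$ nor $\{\lambda\}$, so that a finite non-empty language such as $L_1$ genuinely escapes $\TCOM$; but this is exactly the fact (every non-empty two-sided comet language is infinite, \cite{Koedding.Truthe.2024}) already used in the proof of Lemma~\ref{srl:inf_o_uf_circ_tcom}, so nothing new has to be proved.
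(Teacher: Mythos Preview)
Your proof is correct and matches the paper's own argument essentially verbatim: the paper uses the same two witnesses $L_1=\{ab,a,b,\lambda\}\in\INF\setminus\TCOM$ from Lemma~\ref{srl:inf_o_uf_circ_tcom} and $L_2=\{a,b\}^*\{b\}\in\COMB\setminus\PRE$ from Lemma~\ref{srl:comb_o_pre}, invoking the strand argument and the inclusion $\COMB\subseteq\SYDEF$ just as you do.
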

\begin{proof}
    As witness languages, we have 
    \[L_1 = \{ab,a,b,\lambda\} \in \INF \setminus \TCOM \qmand 
    L_2 = \{a,b\}^*\{b\} \in \COMB \setminus \PRE\]
    from Lemma~\ref{srl:inf_o_uf_circ_tcom} and Lemma~\ref{srl:comb_o_pre}, respectively.
\end{proof}

\begin{lemma}\label{lemma:pre_inf_unvergleichbarzu_star_uf}
    Let $\cF = \{\STAR, \UF\}$. 
  Every family in $\cF$ is incomparable to the families ${\PRE}$ and $\INF$.
\end{lemma}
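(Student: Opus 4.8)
The plan is to follow the same ``two witness languages per pair of strands'' strategy used in the preceding lemmas. Here the two strands are $\{\STAR,\UF\}$ (with $\STAR\subset\UF$) on one side and $\{\INF,\PRE\}$ (with $\INF\subset\PRE$ by Lemma~\ref{srl:subsets}) on the other. So it suffices to exhibit one language in $\INF\setminus\UF$ and one language in $\STAR\setminus\PRE$; then every family among $\STAR,\UF$ is incomparable to every family among $\INF,\PRE$.

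For the first witness, I would reuse $L_1=\{ab,a,b,\lambda\}$ from Lemma~\ref{srl:inf_o_uf_circ_tcom}, which already establishes $L_1\in\INF\setminus(\UF\cup\CIRC\cup\TCOM)$; in particular $L_1\in\INF\setminus\UF$. For the second witness, I would take $L_2=\{a,b\}^*\{b\}$ from Lemma~\ref{srl:comb_o_pre}. That lemma gives $L_2\in\COMB\setminus\PRE$, so I only need to additionally check $L_2\in\STAR$. One way is to write $L_2=H^*$ for a suitable regular $H$: indeed $\{a,b\}^*\{b\}$ is the set of all non-empty words ending in $b$, and this equals $\bigl(\{a\}^*\{b\}\bigr)^*\setminus\{\lambda\}$ --- hmm, more carefully, $\bigl(\{a\}^*\{b\}\bigr)^+ = \{a,b\}^*\{b\}$, and one must verify $H^*$ rather than $H^+$. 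Since $H=\{a\}^*\{b\}$ does not contain $\lambda$, we have $H^*=H^+\cup\{\lambda\}=\{a,b\}^*\{b\}\cup\{\lambda\}$, which is \emph{not} $L_2$. So a cleaner choice is needed: let $H=\{b\}\{a\}^*$? No. The right move is to pick a different star witness language for $\STAR\setminus\PRE$ entirely, e.g.\ $L_2'=\{ab\}^*$, and check directly that $\{ab\}^*\notin\PRE$ because $ab\in\{ab\}^*$ but the prefix $a\notin\{ab\}^*$, while trivially $\{ab\}^*=H^*$ with $H=\{ab\}$, so $L_2'\in\STAR$.

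So the proof body I would write: \emph{As witness languages, we have $L_1=\{ab,a,b,\lambda\}\in\INF\setminus\UF$ from Lemma~\ref{srl:inf_o_uf_circ_tcom}, and $L_2=\{ab\}^*$, which is in $\STAR$ since $L_2=H^*$ with $H=\{ab\}$, but not in $\PRE$ since $ab\in L_2$ and the prefix $a\notin L_2$.} Then invoke the strand argument: since $\INF\subseteq\PRE$ and $\STAR\subseteq\UF$, we get $L_1\in F\setminus\{\STAR,\UF\text{-membership}\}$ for $F\in\{\INF,\PRE\}$ and $L_2\in G\setminus\{\INF,\PRE\text{-membership}\}$ for $G\in\{\STAR,\UF\}$, yielding pairwise incomparability.

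The only genuine obstacle is picking a $\STAR\setminus\PRE$ witness whose star-membership is immediate; $\{ab\}^*$ works because it is literally of the form $H^*$ and its failure of prefix-closure is a one-line check. Everything else is bookkeeping via Lemma~\ref{srl:subsets} and the already-proven witness lemmas. I do not foresee needing any new non-counting or comet-language machinery here. Below is the proof as I would record it.

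\begin{proof}
    As witness languages, we have
    \[L_1 = \{ab,a,b,\lambda\} \in \INF \setminus \UF \qmand
    L_2 = \{ab\}^* \in \STAR \setminus \PRE .\]
    The relation $L_1\in\INF\setminus\UF$ was shown in Lemma~\ref{srl:inf_o_uf_circ_tcom}.
    The language $L_2$ has the form $L_2=H^*$ with the regular language $H=\{ab\}$, hence $L_2\in\STAR$;
    but $ab\in L_2$ while its prefix $a$ does not belong to $L_2$, so $L_2\notin\PRE$.
    Since $\INF\subseteq\PRE$ (Lemma~\ref{srl:subsets}) and $\STAR\subseteq\UF$, this yields $L_1\in\INF\setminus\UF$,
    $L_1\in\PRE\setminus\UF$ (as $\STAR\subseteq\UF$ and $L_1\notin\UF$ implies $L_1\notin\STAR$),
    $L_2\in\STAR\setminus\PRE$, and $L_2\in\UF\setminus\INF$ (as $\INF\subseteq\PRE$ and $L_2\notin\PRE$).
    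Therefore every family in $\cF=\{\STAR,\UF\}$ is incomparable to each of $\PRE$ and $\INF$.
\end{proof}
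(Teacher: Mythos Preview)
Your proof is correct and uses the same two-witness strand strategy as the paper, with the identical first witness $L_1=\{ab,a,b,\lambda\}$ from Lemma~\ref{srl:inf_o_uf_circ_tcom}. The only difference is the second witness: the paper takes $L_2=\{aa\}^*\in\STAR\setminus\PS$ from \cite[Lemma~11]{Koedding.Truthe.2024} (and then uses $\PRE\subset\PS$), whereas your choice $L_2=\{ab\}^*$ with its one-line prefix-closure check is self-contained and equally valid.
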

\begin{proof}
    As witness languages, we have 
    \[L_1 = \{ab,a,b,\lambda\} \in \INF \setminus \UF \qmand
    L_2 = \{aa\}^* \in \STAR \setminus \PS\]
    from Lemma~\ref{srl:inf_o_uf_circ_tcom} and \cite[Lemma~11]{Koedding.Truthe.2024}, respectively.
\end{proof}

\begin{lemma}\label{lemma:pre_inf_unvergleichbarzu_comm_circ}
    Let $\cF = \{\COMM, \CIRC\}$. 
  Every family in $\cF$ is incomparable to the families ${\PRE}$ and $\INF$.
\end{lemma}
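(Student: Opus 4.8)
The plan is to follow the same strand-based method used in the preceding lemmas: exhibit one language in $\INF \setminus \CIRC$ (which, since $\COMM \subseteq \CIRC$, also lies outside $\COMM$) and one language in $\COMM \setminus \PRE$ (which, since $\COMM \subseteq \CIRC$, also serves for the $\CIRC$ side). Because $\INF \subseteq \PRE$ and $\INF \subseteq \SUF$ by Lemma~\ref{srl:subsets}, a witness in $\INF$ automatically handles both $\PRE$ and $\INF$; similarly a language outside $\PRE$ is a fortiori outside $\INF$. So it suffices to produce these two witnesses.

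For the first witness I would reuse $L_1 = \{ab, a, b, \lambda\}$ from Lemma~\ref{srl:inf_o_uf_circ_tcom}: it was already shown there that $L_1 \in \INF$ and that $L_1 \notin \CIRC$ (the circular shift $ba$ of $ab$ is missing). Hence $L_1 \in \INF \setminus (\CIRC \cup \COMM)$, since any commutative language containing $ab$ would also contain $ba$. For the second witness I would take the commutative (indeed, finite, hence regular) language $L_2 = \{\lambda, a, b, ab, ba\}$, i.e.\ all words over $\{a,b\}$ of length at most one together with all length-two words having one $a$ and one $b$. This is closed under permutations, so $L_2 \in \COMM \subseteq \CIRC$; but $ab \in L_2$ while its prefix $a$\dots{} — here one must be slightly careful, since $a \in L_2$. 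A cleaner choice is $L_2 = \set{w \in \{a,b\}^*}{|w|_a = |w|_b}$, the Dyck-type commutative language of words with equally many $a$'s and $b$'s: it is regular (accepted by counting the difference, which stays bounded only on accepted runs — actually this is not regular). To stay within the regular world I would instead use $L_2 = \set{w \in \{a,b\}^*}{|w| \text{ is even}}$ — wait, that is commutative and regular, and $ab \in L_2$ but the prefix $a \notin L_2$, so $L_2 \in \COMM \setminus \PRE$, hence $L_2 \in (\COMM \cap \CIRC) \setminus (\PRE \cup \INF)$.

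With these two witnesses, incomparability follows formally exactly as described in the paragraph preceding Lemma~\ref{lemma:pre_unvergleichbarzu_suf}: taking the strand $\INF \subseteq \PRE$ (so $F_1 = \INF$, $F_n = \PRE$ — note $\INF \subseteq \SUF$ too, but here the relevant containment is $\INF \subseteq \PRE$) against the strand $\COMM \subseteq \CIRC$, the language $L_1 \in \INF \setminus \CIRC$ and the language $L_2 \in \COMM \setminus \PRE$ together certify that each of $\INF, \PRE$ is incomparable to each of $\COMM, \CIRC$. The verification that $L_2$ is commutative and regular but not prefix-closed is immediate, and the properties of $L_1$ are already established in Lemma~\ref{srl:inf_o_uf_circ_tcom}.

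I expect no real obstacle here; the only thing requiring a moment's thought is picking a second witness that is simultaneously (i) genuinely commutative, (ii) regular, and (iii) not prefix-closed — the parity language $\set{w}{|w|\text{ even}}$ over $\{a,b\}$ does all three and keeps the argument short, so that is what I would state in the proof. (If one prefers a witness that also separates $\CIRC$ from $\PRE$ in a more ``robust'' way, $\{a,b\}^*$ restricted by any commutative non-prefix-closed regular condition works equally well.)
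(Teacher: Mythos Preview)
Your proposal is correct and follows essentially the same strand-based approach as the paper: the first witness $L_1=\{ab,a,b,\lambda\}\in\INF\setminus\CIRC$ is identical to the paper's choice, and your second witness $\{\,w\in\{a,b\}^*:|w|\text{ even}\,\}\in\COMM\setminus\PRE$ is a minor variant of the paper's $\{aa\}^*\in\COMM\setminus\PS$ (cited from \cite{Holzer_Truthe.2015}). The only difference is cosmetic---the paper's witness even lies outside the larger class $\PS$, but your parity language does too, and either suffices here; you should, however, trim the exploratory false starts from the write-up.
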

\begin{proof}
    As witness languages, we have 
    \[L_1 = \{ab,a,b,\lambda\} \in \INF \setminus \CIRC \qmand
    L_2 = \{aa\}^* \in \COMM \setminus \PS
    \]
    from Lemma~\ref{srl:inf_o_uf_circ_tcom} and \cite[Lemma~4.9]{Holzer_Truthe.2015}.
\end{proof}

From all these relations, the hierarchy presented in Figure~\ref{fig:lang_erg_1} follows.
An edge label refers to the paper or lemma in the present paper where the proper inclusion is shown.
The incomparability results are proved in Lemmas~\ref{lemma:pre_unvergleichbarzu_suf} 
through~\ref{lemma:pre_inf_unvergleichbarzu_comm_circ}.

\begin{theorem}[Resulting hierarchy for subregular families]\label{theorem:neue_hierarchie}
The inclusion relations presented in Figure \ref{fig:lang_erg_1} hold. An arrow from an entry $X$ to
an entry~$Y$ depicts the proper inclusion $X \subset Y$; if two families are not connected by a directed
path, they are incomparable.
\end{theorem}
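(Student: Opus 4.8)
The plan is to assemble Theorem~\ref{theorem:neue_hierarchie} from the individual lemmas that have already been established, treating the figure as a bookkeeping device rather than a single monolithic claim. The statement asserts two things: (i) every arrow drawn in Figure~\ref{fig:lang_erg_1} is a \emph{proper} inclusion, and (ii) any two families not linked by a directed path are incomparable. Each of these is local, so the proof is really a verification that the lemmas proved above — together with the cited literature — cover every edge and every non-edge.

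For part (i), I would go through the arrows of the figure in three groups. First, the arrows carrying a citation label (e.g.\ $\MON\subset\STAR$, $\NIL\subset\DEF$, $\ORD\subset\NC$, $\SYDEF\subset\LCOM$, and so on) are proper inclusions established in the referenced papers; nothing new is needed beyond quoting them, and $\NC=\SF$ by \cite{McNaughton_Papert.1971}. Second, the arrows labelled by a lemma number of the present paper, namely $\MON\subset\INF$, $\INF\subset\PRE$, $\INF\subset\SUF$ (all from Lemma~\ref{srl:subsets}) and $\PRE\subset\PS$ (Lemma~\ref{srl:subsets_pre_ps}): here I just invoke those lemmas, which already contain both the inclusion (``follows from the definition'') and an explicit witness language separating the two families. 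Third, I would note that transitivity of $\subseteq$ closes up the diagram — any inclusion obtained by composing arrows is automatic — and that properness of a composite follows from properness of any single edge along the path.

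For part (ii), incomparability, the key observation is the ``strand'' reduction already spelled out in the text preceding Lemma~\ref{lemma:pre_unvergleichbarzu_suf}: to separate a whole lower-to-upper chain $F_1\subseteq\cdots\subseteq F_n$ from another chain $F'_1\subseteq\cdots\subseteq F'_m$ it is enough to exhibit $L\in F_1\setminus F'_m$ and $L'\in F'_1\setminus F_n$. Lemmas~\ref{lemma:pre_unvergleichbarzu_suf} through~\ref{lemma:pre_inf_unvergleichbarzu_comm_circ} do exactly this for every pair of strands involving the new families $\PRE$ and $\INF$ (versus $\SUF$; versus $\{\COMB,\DEF,\SYDEF,\ORD,\NC\}$; versus $\{\FIN,\NIL\}$; versus $\{\SYDEF,\RCOM,\LCOM,\TCOM\}$; versus $\{\STAR,\UF\}$; versus $\{\COMM,\CIRC\}$), while all incomparabilities \emph{not} involving $\PRE$ or $\INF$ were already part of the hierarchy in \cite{Holzer_Truthe.2015,Koedding.Truthe.2024} and the other cited works. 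So the proof of (ii) is: cite the old incomparabilities wholesale, then run through Lemmas~\ref{lemma:pre_unvergleichbarzu_suf}--\ref{lemma:pre_inf_unvergleichbarzu_comm_circ} one strand-pair at a time.

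The one step that needs genuine care — the ``main obstacle'' — is the \emph{completeness} check: one must be sure that every pair of nodes in the figure is accounted for, i.e.\ that there is no pair which is neither joined by a directed path nor covered by one of the incomparability lemmas. Concretely this means (a) verifying that $\PRE$ and $\INF$ sit exactly where the arrows put them relative to $\MON$, $\SUF$, $\PS$, $\REG$, and are incomparable to everything else, and (b) double-checking a few potentially subtle placements, e.g.\ that $\INF\subseteq\SUF$ but $\INF\not\subseteq\PRE\cap\SUF$ is still consistent with $\INF\subseteq\PRE$ (it is, since $\INF\subseteq\PRE\cap\SUF$ actually holds), and that $\PRE$ really is a proper subfamily of $\PS$ with no finer known bound, which is Lemma~\ref{srl:subsets_pre_ps}. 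Once this exhaustive case analysis is done, the theorem follows by simply citing, for each edge and each non-edge, the lemma or reference that justifies it.
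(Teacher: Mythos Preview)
Your proposal is correct and mirrors the paper's own argument: the paper simply states that each arrow is justified by its edge label (a reference or one of Lemmas~\ref{srl:subsets_pre_ps}--\ref{srl:subsets}) and that the new incomparabilities are exactly the content of Lemmas~\ref{lemma:pre_unvergleichbarzu_suf}--\ref{lemma:pre_inf_unvergleichbarzu_comm_circ}, with all remaining incomparabilities inherited from the cited earlier hierarchies. Your write-up is more explicit about the bookkeeping (the strand reduction and the completeness check), but the underlying decomposition and the lemmas invoked are identical.
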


\section{Results on subregular control in external contextual grammars}

In this section, we include the families of languages generated by external contextual grammars with 
selection languages from the subregular families under investigation into the existing hierarchy 
with respect to external contextual grammars.


\begin{lemma}[Monotonicity $\cEC$]\label{lemma:ec_monoton}
  For any two language classes $X$ and $Y$ with~$X\subseteq Y$,
  we have the inclusion~$\cEC(X)\subseteq\cEC(Y)$.
\end{lemma}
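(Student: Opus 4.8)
The plan is to prove the monotonicity of the operator $\cEC$ directly from the definitions, by observing that every contextual grammar with selection in the smaller family $X$ is already a contextual grammar with selection in the larger family $Y$. First I would take an arbitrary language $L \in \cEC(X)$. By definition of $\cEC$, there is a contextual grammar $G = (V,\cS,A)$ with selection in $X$ such that $L = L_{\mathrm{ex}}(G)$. Here $\cS = \{S_1 \to C_1, \dots, S_n \to C_n\}$ is a finite set of selection pairs, and each selection language $S_i$ lies in $X$ with respect to some alphabet $U_i \subseteq V$.

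The key step is then simply to note that, since $X \subseteq Y$, each selection language $S_i$ also lies in $Y$ (with respect to the same alphabet $U_i \subseteq V$). Hence the very same triple $G = (V,\cS,A)$ is a contextual grammar with selection in $Y$. Since the derivation relation $\Lra_{\mathrm{ex}}$ and the generated language $L_{\mathrm{ex}}(G)$ depend only on the components $V$, $\cS$, $A$ and not on which family the selection languages are declared to belong to, we have $L_{\mathrm{ex}}(G) = L$ unchanged. Therefore $L \in \cEC(Y)$, and since $L$ was arbitrary, $\cEC(X) \subseteq \cEC(Y)$.

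There is essentially no obstacle here: the statement is a routine consequence of the fact that the notion ``contextual grammar with selection in $\cF$'' is monotone in $\cF$ by construction, and that the external derivation mode is defined without reference to $\cF$. The only point worth stating carefully is that requiring $S_i \in X$ is a stronger constraint than requiring $S_i \in Y$, so nothing is lost when passing to the larger family. The same argument will apply verbatim to the internal operator $\cIC$, which is presumably recorded as a companion lemma later in the paper.
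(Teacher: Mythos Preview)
Your argument is correct and is exactly the natural one: the same grammar witnesses membership in $\cEC(Y)$ once $X\subseteq Y$, since the derivation relation does not depend on the ambient family. The paper itself states this lemma without proof, treating it as immediate from the definitions, so there is nothing further to compare.
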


Figure~\ref{fig:ec_erg} shows 
the inclusion relations between language families 
which are generated by external
contextual grammars where the selection languages belong to subregular classes investigated before. The 
hierarchy contains results which were already known (marked by a reference to the literature) and results which 
are new.

  \begin{figure}[htb]
  \centering
  \scalebox{.77}{
  \begin{tikzpicture}[node distance=15mm and 25mm,on grid=true
  ]
    \node (MON) {$\ec{\MON}$};
    \node (FIN) [below=of MON] {$\ec{\FIN}$};
    \node (COMB)[above=of MON] {$\ec{\COMB}$};
    \node (NIL) [right=of COMB] (NIL) {$\ec{\NIL}$};
    \node (DEF) [above=of COMB] {$\ec{\DEF}$};
    \node (ORD) [above=of DEF] {$\ec{\ORD}$};
    \node (SYDEF) [right=of ORD] {$\ec{\SYDEF}$};
    \node (INF) at (-3.7, 3) {$\ec{\INF}$};
    \node (SUF) [left=of ORD] {$\ec{\SUF}$};
    \node (PRE) [left=of SUF] {$\ec{\PRE}$};
    
    \node (COMM) [right=of SYDEF] {$\ec{\COMM}$};
    \node (STAR) [below right=of COMM] {$\ec{\STAR}$};
    \node (NC) [above=of ORD] {$\ec{\NC}$};
    \node (PS) [above=of NC] {$\ec{\PS}$};
    \node (CIRC) [above=of COMM] {$\ec{\CIRC}$};
    \node (REG) [above=of PS] {$\ec{\REG} \stackrel{\text{\cite{Dassow_Manea_Truthe.2012}}}{=} \ec{\UF} \stackrel{\text{\cite{Koedding.Truthe.2024}}}{=} \cEC(\LCOM) \stackrel{\text{\cite{Koedding.Truthe.2024}}}{=} \cEC(\RCOM) \stackrel{\text{\cite{Koedding.Truthe.2024}}}{=} \cEC(\TCOM)$};

    \draw[hier] (FIN) to node[pos=.45, edgeLabel]{\small\cite{Dassow.2005}} (MON);
    \draw[hier, bend right] (MON) to node[pos=.45, edgeLabel]{\small\cite{Dassow.2005}} (NIL);
    \draw[hier] (MON) to node[pos=.45, edgeLabel]{\small\cite{Dassow.2015}} (COMB);
    \draw[hier] (COMB) to node[pos=.45, edgeLabel]{\small\cite{Truthe.2021}} (DEF);
    \draw[hier] (DEF) to node[pos=.45, edgeLabel]{\small\cite{Truthe.2014}} (ORD);
    \draw[hier, bend right=15] (DEF) to node[pos=.45, edgeLabel]{\small\cite{Koedding.Truthe.2024}} (SYDEF);
    \draw[hier] (ORD) to node[pos=.45, edgeLabel]{\small\cite{Dassow_Truthe.2023}} (NC);
    \draw[hier] (NC) to node[pos=.45, edgeLabel]{\small\cite{Truthe.2021}} (PS);
    \draw[hier, bend left=30] (MON) to 
    (INF);
    \draw[hier, bend right=30] (INF) to 
    (SUF);
    \draw[hier, bend left=30] (INF) to 
    (PRE);
    \draw[hier, bend left=40] (SUF) to node[pos=.45, edgeLabel]{\small\cite{Truthe.2021}} (PS);
    \draw[hier, bend left=30] (PRE) to 
    (PS);
    \draw[hier, bend right=35] (MON) to node[pos=.45, edgeLabel]{\small\cite{Koedding.Truthe.2024}} (STAR);
    \draw[hier, bend right=35] (STAR) to node[pos=.45, edgeLabel]{\small\cite{Koedding.Truthe.2024}} (REG);
    \draw[hier, bend right] (SYDEF) to node[pos=.45, edgeLabel]{\small\cite{Koedding.Truthe.2024}} (PS);
    \draw[hier, bend right] (NIL) to node[pos=.45, edgeLabel]{\small\cite{Dassow.2005}} (COMM);
    \draw[hier, bend right] (NIL) to node[pos=.45, edgeLabel]{\small \cite{Dassow.2005}} (DEF);
    \draw[hier] (COMM) to node[pos=.45, edgeLabel]{\small\cite{Dassow_Manea_Truthe.2012}} (CIRC);
    \draw[hier, bend right=20] (CIRC) to node[pos=.45, edgeLabel]{\small\cite{Dassow_Manea_Truthe.2012}} (REG);
    \draw[hier] (PS) to node[pos=.45, edgeLabel]{\small\cite{Truthe.2021}} (REG);
  \end{tikzpicture}
  }
  \caption{Resulting hierarchy of language families by external contextual grammars with special selection languages.}
\label{fig:ec_erg}
  \end{figure}

We now present some languages which 
serve 
as witness languages for proper inclusions or
incomparabilities.

\begin{lemma}\label{ec:pre_o_suf}
    The language $L = \set{a^nb^n}{n \geq 1} \cup \set{b^n}{n \geq 1}$ is in $\ec{\PRE} \setminus \ec{\SUF}$.
\end{lemma}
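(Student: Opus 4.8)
The plan is to verify the two memberships separately.

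For $L\in\ec{\PRE}$ I would exhibit the external contextual grammar
\[G=(\{a,b\},\,\{S_1\to(a,b),\ S_2\to(\lambda,b)\},\,\{ab,b\})\]
with selection languages $S_1=\{\lambda\}\cup\set{a^ib^j}{i\geq 1,\ j\geq 0}$ and $S_2=\{b\}^*$. Both are regular and prefix-closed: the only prefix of $\lambda$ is $\lambda\in S_1$, and every prefix of a word $a^ib^j$ with $i\geq 1$ is $\lambda$, or $a^k$ with $1\leq k\leq i$, or $a^ib^k$ with $0\leq k\leq j$, all of which lie in $S_1$; and $\{b\}^*$ is trivially prefix-closed. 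One then checks $L_{\mathrm{ex}}(G)=L$: from the axiom $ab$ the rule $S_2\to(\lambda,b)$ is never applicable (no word $a^nb^n$ with $n\geq 1$ lies in $\{b\}^*$), so only $S_1\to(a,b)$ is ever used and it generates precisely $\set{a^nb^n}{n\geq 1}$; from the axiom $b$ the rule $S_1\to(a,b)$ is never applicable (no word $b^n$ with $n\geq 1$ lies in $S_1$), so only $S_2\to(\lambda,b)$ is ever used and it generates precisely $\set{b^n}{n\geq 1}$. Hence $L\in\ec{\PRE}$.

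For $L\notin\ec{\SUF}$ I would argue by contradiction. Assume some $G=(V,\cS,A)$ generates $L$ externally with every selection language suffix-closed, and set $c=\ellC(G)$. Fix $n$ with $n>c$ and $2n>\ellA(G)$, so $z=a^nb^n\in L=L_{\mathrm{ex}}(G)$ is not an axiom; hence there is a last external step $z'\Lra_{\mathrm{ex}}z$, that is $z=uz'v$ for a selection pair $(S,C)\in\cS$ with $z'\in S$, a context $(u,v)\in C$ with $uv\neq\lambda$ and $|uv|\leq c$, and $z'\in L$ (being derivable). Since $|u|,|v|\leq c<n$, the prefix $u$ of $a^nb^n$ is $a^{|u|}$ and the suffix $v$ is $b^{|v|}$. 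As $z'\in L$, it has the form $b^{n'}$ or $a^{n'}b^{n'}$; the first case gives $a^nb^n=a^{|u|}b^{n'+|v|}$, forcing $|u|=n>c$, which is impossible, while the second case gives $a^nb^n=a^{|u|+n'}b^{n'+|v|}$, forcing $|u|=|v|=:p$ with $1\leq p\leq c$ and $z'=a^{n-p}b^{n-p}$. So every such $a^nb^n$ is obtained from $a^{n-p}b^{n-p}$ via a selection pair $(S,C)$ with $(a^p,b^p)\in C$ for some $p\geq 1$ and with $a^{n-p}b^{n-p}\in S$.

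Because $\cS$ is finite and $p\leq c$, there are one fixed pair $(S,C)\in\cS$ and one fixed $p\geq1$ with $(a^p,b^p)\in C$ such that $a^mb^m\in S$ for arbitrarily large $m$. Since $S$ is suffix-closed and each $b^k$ is a suffix of $a^mb^m$, we get $b^k\in S$ for every $k\geq0$, in particular $b\in S$; as also $b\in L=L_{\mathrm{ex}}(G)$, the pair $(S,C)$ applies to $b$ with the context $(a^p,b^p)$ and yields $a^pb^{p+1}\in L_{\mathrm{ex}}(G)=L$. But $a^pb^{p+1}$ has $p\geq1$ occurrences of $a$ and $p+1\neq p$ occurrences of $b$, so it is of neither form allowed in $L$ --- a contradiction. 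This proves $L\notin\ec{\SUF}$, and together with the first part $L\in\ec{\PRE}\setminus\ec{\SUF}$. The main obstacle is this second direction, and within it the step fixing the shape of the last external step: the length bound $|uv|\leq c$ is exactly what makes it impossible to produce a long $a^nb^n$ from a word $b^{n'}$, so the derivation is forced down the diagonal $a^mb^m$; only then does suffix-closedness push the short word $b$ into the governing selection language, and wrapping the accompanying context $(a^p,b^p)$ around $b$ leaves $L$.
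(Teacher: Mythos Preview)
Your proof is correct. For the membership $L\in\ec{\PRE}$ your grammar is in fact identical to the paper's: the paper takes $S_1=\Pre(\set{a^nb^m}{n,m\geq 1})$, which equals your $\{\lambda\}\cup\set{a^ib^j}{i\geq 1,\ j\geq 0}$, and $S_2=\{b\}^*$ with the same contexts and axioms.

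For $L\notin\ec{\SUF}$ the paper simply cites \cite[Lemma~3.3]{Dassow.2005}, whereas you give a self-contained argument. Your argument is fine; one small simplification is that the pigeonhole over $(S,C,p)$ is unnecessary: a single sufficiently large $n$ already yields a pair $(S,C)$ with $(a^p,b^p)\in C$ and $a^{n-p}b^{n-p}\in S$ with $n-p\geq 1$, and suffix-closure of $S$ then puts $b\in S$ immediately, which is all you need for the contradiction. (Correspondingly, the claim ``$b^k\in S$ for every $k\geq 0$'' only follows once you invoke arbitrarily large $m$; but since you only use $b\in S$, a single $m\geq 1$ suffices.)
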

\begin{proof}
    The contextual grammar $G = (\{a,b\}, \{S_1 \to C_1, S_2 \to C_2\}, \{ab,b\})$ with
\[S_1=\Pre(\set{a^nb^m}{n,m\geq 1}),\
C_1=\sets{(a,b)},\ S_2=\sets{b}^*,\ C_2=\sets{(\lambda, b)}
\]
    generates the language $L$ as can be seen as follows.

    The first rule can be applied to the axiom $ab$ and then to every other word $a^kb^k$ for $k\geq 1$. This allows us to generate the language $\set{a^nb^n}{n\geq 1}$. From the axiom $ab$, no other word can be generated.
    The second rule can be applied to the axiom $b$ and then to every other word $b^k$ for $k\geq 1$. This allows us to generate the language $\set{b^n}{n\geq 1}$. From the axiom $b$, no other word can be generated.
    Together, we obtain that $L(G) = L$.
    Furthermore, all selection languages are prefix-closed. Hence, $L\in\ec{\PRE}$.
    From \cite[Lemma 3.3]{Dassow.2005}, we know that the language $L$ is not in $\ec{\SUF}$.
\end{proof}

\begin{lemma}\label{ec:suf_o_pre}
    The language $L = \set{a^nb^n}{n \geq 1} \cup \set{a^n}{n \geq 1}$ is in $\ec{\SUF} \setminus \ec{\PRE}$.
\end{lemma}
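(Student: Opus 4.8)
The proof should be the mirror image of the proof of Lemma~\ref{ec:pre_o_suf}, exploiting the left--right symmetry between prefix- and suffix-closure. First I would exhibit a contextual grammar with suffix-closed selection languages generating $L = \set{a^nb^n}{n \geq 1} \cup \set{a^n}{n \geq 1}$. The natural candidate is $G = (\{a,b\}, \{S_1 \to C_1, S_2 \to C_2\}, \{ab,a\})$ with
\[S_1 = \Suf(\set{a^nb^m}{n,m\geq 1}),\quad C_1 = \sets{(a,b)},\quad S_2 = \sets{a}^*,\quad C_2 = \sets{(a,\lambda)}.\]
Here $S_2 = \{a\}^*$ is trivially suffix-closed, and $S_1$, being a suffix-closure, is suffix-closed by construction; both are regular, so $G$ is a contextual grammar with selection in $\SUF$.

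Next I would verify $L_{\mathrm{ex}}(G) = L$. Starting from the axiom $ab$, the first rule wraps $(a,b)$ around it — this is legal because $ab \in S_1$ — giving $a^2b^2$, and inductively $a^kb^k \in S_1$ for every $k$, so every word $a^nb^n$ with $n\geq 1$ is generated; no word outside $\set{a^nb^n}{n\geq 1}$ arises from $ab$ because the second rule's selection language $\{a\}^*$ never matches a word containing $b$, and the first rule keeps the $a^kb^k$ shape. Starting from the axiom $a$, only the second rule applies (since $a \notin S_1$), appending an $a$ on the left, yielding every $a^n$ with $n\geq 1$; no word of the form $a^nb^n$ with $n\geq 1$ has an empty $b$-part so the two parts stay disjoint. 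Hence $L_{\mathrm{ex}}(G) = L$, so $L \in \ec{\SUF}$.

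Finally, for $L \notin \ec{\PRE}$, I would appeal to the symmetric counterpart of the argument used in \cite[Lemma~3.3]{Dassow.2005} (which gave $\set{a^nb^n}{n\geq1}\cup\set{b^n}{n\geq1} \notin \ec{\SUF}$). Concretely: the reversal of $L$ is $\set{b^na^n}{n\geq1}\cup\set{a^n}{n\geq1}$, and since $\ec{\PRE}$ and $\ec{\SUF}$ are interchanged under reversal (a prefix-closed selection language becomes suffix-closed, and external wrapping $uxv$ becomes $v^Rx^Ru^R$), membership of $L$ in $\ec{\PRE}$ would be equivalent to membership of a language of the type excluded by \cite[Lemma~3.3]{Dassow.2005} in $\ec{\SUF}$ — a contradiction. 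The main obstacle is making this reversal/duality step rigorous: one must check that reversal is indeed a bijection between $\ec{\PRE}$ and $\ec{\SUF}$ (straightforward, since reversing all axioms, selection languages, and contexts of an external contextual grammar yields one generating $L_{\mathrm{ex}}(G)^R$, and $\Pre(L)^R = \Suf(L^R)$), after which the desired non-membership is immediate. If a direct pumping-style argument is preferred instead, one would show that any external contextual grammar with prefix-closed selection for $L$ must, on the infinite subset $\set{a^n}{n\geq1}$, use a context whose insertion also applies inside words $a^nb^n$ (because every long enough $a^n$ is a prefix, hence a member, of the selection language matching some $a^mb^m$), producing a word outside $L$.
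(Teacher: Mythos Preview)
Your proposal is correct and takes essentially the same route as the paper, which simply says the proof is symmetric to that of Lemma~\ref{ec:pre_o_suf}. Your grammar matches the paper's (up to the immaterial choice of $(a,\lambda)$ versus $(\lambda,a)$ for $C_2$), and your reversal/duality argument is a legitimate way to make the appeal to symmetry precise; just note that after reversing you also need an $a\leftrightarrow b$ relabelling to land exactly on the language of \cite[Lemma~3.3]{Dassow.2005}, which you gloss over with the phrase ``a language of the type''.

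One caution: your alternative direct pumping sketch is garbled. The correct argument (which the paper spells out in its extended version) starts from a long word $a^nb^n$, not from the $a^n$ branch: the selection language $S$ used in the last derivation step contains some $a^mb^m$ with $1\le m<n$ and the context is $(a^{n-m},b^{n-m})$; prefix-closure then forces $a^m\in S$, and since $a^m\in L$ one can derive $a^m\Lra a^nb^{n-m}\notin L$. Your phrasing ``on the infinite subset $\{a^n\}$ \ldots\ every long enough $a^n$ is a prefix, hence a member, of the selection language matching some $a^mb^m$'' reverses the roles of the two sublanguages and does not yield a contradiction, because the selection language used to grow the $a^n$ words need not contain any $a^mb^m$, and prefix-closure does not put longer words into $S$.
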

\begin{proof}
The proof is similar to the one for Lemma~\ref{ec:pre_o_suf} due to the symmetry.
\end{proof}

\begin{lemma}\label{ec:inf_o_star}
The language $L=\sets{a,b}^*\set{a^nb^m}{n\geq 1, m\geq 1} \cup \set{ca^nb^mc}{n\geq 1, m\geq 1}$
belongs to the family $\ec{\INF} \setminus \ec{\STAR}$.
\end{lemma}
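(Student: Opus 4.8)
The plan is to proceed in two parts: first exhibit a contextual grammar with infix-closed selection languages generating $L$, then show that no contextual grammar with star selection languages can generate $L$.

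For membership in $\ec{\INF}$, I would build $G = (\{a,b,c\},\{S_1\to C_1, S_2\to C_2\},A)$ where the first selection pair handles the prepending of arbitrary letters from $\{a,b\}$ and the second pair wraps the context $(c,c)$ around suitable words. Concretely, take $A = \{ab\} \cup \{cabc\}$ (or similar small axioms), let $S_1$ be an infix-closed language such as $\{a,b\}^*$ with $C_1 = \{(a,\lambda),(b,\lambda)\}$ so that one can grow the $\{a,b\}^*$-prefix, and let $S_2$ be an infix-closed language capturing words of the shape $a^nb^m$ together with their infixes — but the point is one needs $S_2$ to be infix-closed, so one must take the infix-closure $\Inf(\set{a^nb^m}{n\ge 1,m\ge 1})$, and check that wrapping $(c,c)$ only around such words (the axiom $cabc$ already has the $c$'s) reproduces exactly $\set{ca^nb^mc}{n\ge 1,m\ge 1}$. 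The slightly delicate bookkeeping is arranging the axioms and contexts so that (i) every word of $\sets{a,b}^*\set{a^nb^m}{n\ge 1,m\ge 1}$ is reached by prepending $\{a,b\}$-letters in front of words of the form $a^nb^m$ generated from $ab$, and (ii) the $c$-delimited words are generated from $cabc$ by growing the interior $a^nb^m$ block while $S_2$ stays infix-closed — one has to make sure the selection language used for the interior growth does not accidentally let a context be inserted in a way that leaves the intended shape. I would verify $L(G)=L$ by the usual two-inclusion argument and observe all selection languages are infix-closed, giving $L\in\ec{\INF}$.

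For $L\notin\ec{\STAR}$, suppose toward a contradiction that $L = L_{\mathrm{ex}}(G')$ for some external contextual grammar $G'$ all of whose selection languages $S$ satisfy $S = H_S^*$ for regular $H_S$. The key structural observation is that a star language is closed under concatenation of its members and, in particular, always contains $\lambda$; so if any word of the $c$-delimited part of $L$, say $w = ca^nb^mc$ with $n,m$ large, lies in some selection language $S = H_S^*$, then $ww = ca^nb^mc\,ca^nb^mc \in S$ as well, and wrapping any context of $G'$ around $ww$ (whose result must still be in $L$) is impossible, because no word of $L$ contains the factor $cc$. One then argues, as in the standard pumping/length arguments for external contextual grammars (cf.\ the use of $\ell(G')$ in the commented-out proof of Lemma~\ref{ec:suf_o_pre}), that a sufficiently long $c$-delimited word $ca^nb^mc$ of $L$ must be obtained by a derivation step $u\,z\,v = ca^nb^mc$ with $z\in L$ in a selection language $S$ of $G'$; since the outer letters of $ca^nb^mc$ are $c$'s and $L$ has no other $c$'s except in such delimited words, $z$ itself must be a word containing $c$'s, hence $z$ is of the form $ca^{n'}b^{m'}c \in S = H_S^*$, which forces $zz = ca^{n'}b^{m'}c\,ca^{n'}b^{m'}c \in S$ — a word outside $L$ that is nonetheless a selection word, and applying the same context $(u,v)$ to it produces $u\,zz\,v \in L$, contradicting that $L$ contains no word with factor $cc$. (A symmetric but simpler sub-case handles the possibility that only the $\{a,b\}^*$-part words ever appear in selection languages, using that the $c$-delimited words of $L$ cannot be generated at all then.)

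The main obstacle is the second part: pinning down exactly which words can serve as selection words in $G'$ and ruling out every way the $c$-delimited words could be produced. The argument must cover the case where a $c$-delimited word is produced from an axiom directly (only finitely many such, handled by taking $n,m$ larger than $\ell(G')$) and the case where it is produced by wrapping a context around a shorter word of $L$; in the latter case one has to be careful that the shorter word might itself be $c$-free (a word of $\sets{a,b}^*\set{a^nb^m}{\cdots}$) with the context supplying both $c$'s, i.e.\ $(u,v) = (c\,\alpha, \beta\,c)$ — but then varying which $c$-free word of $L$ one wraps this context around (there are arbitrarily long ones of shape $a^{n'}b^{m'}$, and these lie in at most finitely many selection languages of $G'$, each a star language hence containing the square of such a word $a^{n'}b^{m'}a^{n'}b^{m'}$, which is not of the required interior shape) again yields a word outside $L$ after wrapping. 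Making this case analysis airtight, while keeping it short, is where the real work lies; everything else is routine.
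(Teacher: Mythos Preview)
Your argument for $L\notin\ec{\STAR}$ has its case analysis backward. You claim that the predecessor $z$ of a long word $ca^nb^mc$ (with $uzv=ca^nb^mc$) must contain $c$'s, but the opposite holds: if $z\in L$ had two $c$'s, say $z=ca^{n'}b^{m'}c$, then $|uv|_c=0$ forces $u,v\in\{a,b\}^*$; since $ca^nb^mc$ begins and ends with $c$, this gives $u=v=\lambda$, contradicting $uv\ne\lambda$. Hence $z$ is necessarily $c$-free, and your ``main case'' is vacuous. Your argument for that case is also independently broken: even if $zz\in S$, you cannot conclude $u\,zz\,v\in L(G')$, because $zz$ (having four $c$'s) is not in $L=L(G')$, so no derivation of $G'$ ever reaches $zz$.

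What you relegate to a ``sub-case'' is in fact the entire proof, and it is essentially what the paper does. Since $z$ is $c$-free and $uzv=ca^nb^mc$, the word $z$ is an infix of $a^nb^m$, hence $z=a^kb^l$ with $k,l\ge 1$ (using $z\in L$). The star property gives $z^2=a^kb^la^kb^l\in S$, and crucially $z^2\in L$ (it lies in $\{a,b\}^*\{a\}\{b\}^+$), so one can legitimately derive $z^2\Lra uz^2v$. The resulting word contains two $c$'s and the factor $ba$, so it is not of the form $ca^pb^qc$ and hence not in $L$. Your sketch omits both why $z$ must have the shape $a^kb^l$ (rather than an arbitrary word of $\{a,b\}^*\{a\}\{b\}^+$) and why $z^2\in L$, without which the derivation step from $z^2$ is not available.

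For the membership part, your grammar sketch also needs repair: with only the contexts $(a,\lambda)$ and $(b,\lambda)$ you can prepend letters but never append $b$'s, so you cannot reach $ab^m$ for $m\ge 2$; and ``growing the interior $a^nb^m$ block'' from the axiom $cabc$ is internal-mode reasoning, not external. The paper uses the single axiom $ab$, the contexts $\{(a,\lambda),(b,\lambda),(\lambda,b)\}$ on $\{a,b\}^*$, and wraps $(c,c)$ around words in $\Inf(\{a^nb^m\mid n,m\ge 1\})=\{a\}^*\{b\}^*$.
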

\begin{proof}
    The contextual grammar $G = (\{a,b\}, \{S_1 \to C_1, S_2 \to C_2\}, \{ab\})$ with
\[S_1=\Inf(\set{a^nb^m}{n\geq 1, m\geq 1}),\ C_1=\{(c,c)\},\ S_2=\Inf(\sets{a,b}^*),\ C_2=\{(a, \lambda),(b, \lambda),(\lambda, b)\}\]
    generates the language $L$ where all selection languages are infix-closed.
With star languages as selection languages, the structure of a word cannot be checked before adjoining the letter $c$.
\end{proof}

\begin{lemma}\label{ec:inf_o_nc}
    Let
    $L= \set{a^mbc^{2n}ba^m}{n\geq 1, m \geq 0}\cup\set{c^n}{n \geq 2}\cup\set{bc^nb}{n \geq 2}\cup\set{ac^na}{n \geq 2}.$
    Then, $L \in \ec{\INF} \setminus \ec{\NC}$.
\end{lemma}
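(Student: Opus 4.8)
The plan is to establish the two memberships separately. For the positive direction $L \in \ec{\INF}$, I would construct an explicit external contextual grammar with infix-closed selection languages that generates $L$. The natural choice is to take as axioms the short words that seed each of the four components -- for instance $cc$, $bccb$, $acca$ and $bccb$ (the word $ba^0bc^0\ldots$ adjustments aside) -- and then use selection languages such as $\Inf(\sets{c}^+)$ together with context $(c,c)$ to pump the middle block of $c$'s in pairs, $\Inf(\sets{c}^*)$-type languages to wrap single $b$'s or $a$'s, and an infix-closed language recognizing words of the form $a^mbc^*ba^m$ to add outer $a$'s symmetrically via a context $(a,a)$. One must be careful that the selection language controlling the symmetric $a$-growth is genuinely infix-closed; since $\Inf$ of any language is infix-closed by the very definition recalled in the preliminaries, the cleanest route is to define each selection language as $\Inf(\cdot)$ of the appropriate target set and then verify that the generated language is exactly $L$ -- showing ``$\subseteq$'' by induction on derivation length and ``$\supseteq$'' by exhibiting for each word of $L$ a concrete derivation.

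For the negative direction $L \notin \ec{\NC}$, I would argue by contradiction in the style of Lemma~\ref{srl:inf_o_nc} and Lemma~\ref{ec:inf_o_star}. Suppose $G' = (\{a,b,c\}, \cS', B')$ generates $L$ with all selection languages non-counting. The key is to find a word in $L$ whose derivation must pass through a selection language $S$ that, being non-counting, is forced to also contain a nearby word which, when the same context is applied, produces something outside $L$. A good candidate is a word $ac^{2n}a$ with $2n$ large compared to $\ell(G')$: its last derivation step wraps a context $(u,v)$ around some $z \in S \cap L$ with $uzv = ac^{2n}a$, and because the context has bounded length, $z$ already contains a long block of $c$'s. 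Then non-countingness of $S$ (with its parameter $k$) lets me replace $c^j$ by $c^{j+1}$ inside $z$ to get another word $z' \in S$; applying the same context $(u,v)$ yields $uz'v$, which is $ac^{2n+1}a$ -- not in $L$, since only even powers of $c$ appear between two $a$'s. That contradiction gives $L \notin \ec{\NC}$. I would also need to rule out the degenerate case where $ac^{2n}a$ is itself an axiom (avoided by taking $n > \ell(G')$) and the case where the varied position of $c$'s sits partly in $u$ or $v$ (handled since $|uv| \le \ellC(G')$ is bounded, so almost all of the $c$-block lies inside $z$).

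The main obstacle I anticipate is the negative direction's bookkeeping: I must ensure that the pumped word $z'$ genuinely lies in the selection language $S$ (non-counting gives $xy^kw \in S \Leftrightarrow xy^{k+1}w \in S$ only for a fixed $k$, so I need the $c$-block in $z$ to have length at least $k$, which forces choosing $n$ large relative to \emph{both} $\ell(G')$ and the finitely many non-counting parameters of the selection languages in $\cS'$), and that applying $(u,v)$ to $z'$ really produces a word outside $L$ rather than one that accidentally matches another component of $L$ (the components $\sets{bc^nb}$, $\sets{ac^na}$, $\sets{c^n}$, $\sets{a^mbc^{2n}ba^m}$ are designed to be disjoint in shape, so $ac^{\text{odd}}a$ escapes all of them -- but this disjointness must be stated carefully). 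A secondary obstacle is making the positive-direction grammar's selection languages simultaneously infix-closed and precise enough that no spurious words leak in; here the safeguard is that infix-closed languages are quite permissive, so the burden is really on checking $L(G') \subseteq L$, which I would do by tracking the invariant that every generated word has the prescribed shape and that the two outer $a$-counts stay equal -- the latter being exactly what the infix-closed selection language $\Inf(\set{a^mbc^{*}ba^m}{m\ge 0})$ enforces when combined with the single context $(a,a)$.
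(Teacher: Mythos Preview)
Your negative direction has a genuine gap: the target word $ac^{2n}a$ does not lead to a contradiction. The fourth component of $L$ is $\set{ac^na}{n\geq 2}$ with \emph{no} parity constraint, so $ac^{2n+1}a\in L$ and your pumped word lands back inside $L$. More concretely, if $uzv=ac^{2n}a$ with $z\in L$, then (since $z$ is obtained by stripping a prefix $u$ and a suffix $v$) the only shape of $z$ lying in $L$ is $z=c^k$ with $u=ac^i$, $v=c^ja$; pumping gives $z'=c^{k+1}\in L$ and $uz'v=ac^{2n+1}a\in L$. No contradiction arises.

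The paper instead targets the component $L_1=\set{a^mbc^{2n}ba^m}{n\geq 1,\,m\geq 0}$, which \emph{does} carry a parity constraint. One follows a derivation of a long $L_1$-word to the last step at which the first letter $a$ is introduced; the word $w_1$ just before that step satisfies $|w_1|_a=0$ and $|w_1|_c$ large, hence $w_1=bc^kb$ (with $k$ even) or $w_1=c^k$. Non-countingness of the selection language then forces $bc^{k+1}b$, respectively $c^{k+1}$, into $S$, and crucially these words are in $L$ via the components $\set{bc^nb}{n\geq 2}$ and $\set{c^n}{n\geq 2}$ (which have no parity constraint). Wrapping the same context around them produces an $L_1$-shaped word with an odd $c$-block, which is outside $L$. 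The architecture of $L$---parity-constrained $L_1$ reachable only through parity-free $L_2$, $L_3$---is precisely what makes the argument work; your choice of $L_4$ bypasses this.

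A smaller issue in your positive direction: the selection language you sketch, $\Inf(\set{a^mbc^*ba^m}{m\geq 0})$, contains $bc^kb$ for odd $k$ (take $m=0$), and since $bc^kb\in L$ for $k\geq 2$, applying the context $(a,a)$ would yield $abc^kba\notin L$. You need the parity built in, e.g.\ $S_2=\Inf(\set{a^mbc^{2n}ba^m}{n\geq 1,\,m\geq 0})$, exactly as the paper does; this set does \emph{not} contain $bc^kb$ for odd $k$ because any infix containing both $b$'s must enclose the full even $c$-block. The paper's grammar is also simpler than yours: it starts from the single axiom $cc$ and uses $S_1=\{c\}^*$ with contexts $(\lambda,c)$ and $(b,b)$ to generate $\set{c^n}{n\geq 2}\cup\set{bc^nb}{n\geq 2}$, then $S_2$ with context $(a,a)$ to grow the $a$'s.
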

\begin{proof}
    The contextual grammar $G = (\{a,b,c\}, \{S_1 \to C_1, S_2 \to C_2\}, \{cc\})$ with 
    \[S_1=\{c\}^*,\ C_1=\{(\lambda, c),(b, b)\},\ S_2=\Inf(\set{a^mbc^{2n}ba^m}{n\geq 1, m \geq 0}),\ C_2=\sets{(a, a)}\]
    generates the language $L$ and all selection languages are infix-closed.
    With non-counting selection languages, one could also wrap letters $a$ around a word $bc^nb\in L$ for an odd number $n$ which is a contradiction.
\end{proof}

\begin{lemma}\label{ec:inf_o_sydef}
    The language $L = \{a,b\}^* \cup \{c\}\{\lambda, b\}\{ab\}^*\{\lambda, a\}\{c\}$ is in $\ec{\INF} \setminus \ec{\SYDEF}$.
\end{lemma}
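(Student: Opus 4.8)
The plan is to exhibit a concrete contextual grammar with infix-closed selection languages generating $L = \{a,b\}^* \cup \{c\}\{\lambda, b\}\{ab\}^*\{\lambda, a\}\{c\}$, and then to argue that $L$ cannot lie in $\cEC(\SYDEF)$. For the positive direction, I would take axioms $\{\lambda, cabc, cc\}$ (or a similar small finite set capturing the base words $c c$, $c a b c$, $b$, etc.), and use one selection pair of the form $S_1 = \{a,b\}^*$ with $C_1 = \{(a,\lambda),(b,\lambda),(\lambda,a),(\lambda,b)\}$ to generate all of $\{a,b\}^*$ from $\lambda$, and a second pair $S_2 = \Inf(\{c\}\{ab\}^*\{c\})$ with $C_2 = \{(\lambda, ab)\}$ (together with small extra pieces for the optional prefixed $b$ and suffixed $a$) to grow the words between the two $c$'s. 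Since $S_1 = \{a,b\}^*$ is monoidal (hence infix-closed) and $S_2$ is an explicit infix-closure (hence in $\INF$), this establishes $L \in \cEC(\INF)$. I would verify by the standard back-and-forth argument that the grammar generates exactly $L$: every word with no $c$ arises from $\lambda$ via $S_1$; every word with two $c$'s has the prescribed shape because the only context touching $c$ is built into the axioms and never reapplied once the word leaves $S_2$.

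For the negative direction $L \notin \cEC(\SYDEF)$, the intuition stated in the excerpt is that a symmetric-definite selection language $S = E V^* H$ is "too permissive in the middle": once a long enough word of the correct shape $c\,w\,c$ with $w \in \{ab\}^*$ lies in such an $S$, many words that are \emph{not} of that shape also lie in $S$, because $E V^* H$ cannot constrain the infix between the $E$-part and the $H$-part. Concretely, suppose $L = L_{\mathrm{ex}}(G')$ for some $G'$ with all selection languages symmetric definite. Take $n$ large (exceeding $\ell(G')$) and consider the word $c (ab)^n c \in L$. It is obtained as $u z v = c(ab)^n c$ for some earlier word $z \in L$ lying in some selection language $S = E V^* H \in \cS'$ with context $(u,v) \in C$. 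Because $n$ is large relative to $\ell(G')$, the word $z$ must itself be of the form $c w c$ (or $c w$, or $w c$, depending on whether $u$ or $v$ carries the boundary $c$'s), with $w$ a long factor of $(ab)^n$; in particular $z$ is long and sits in $S = EV^*H$.

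The key step is then a pumping/substitution argument inside $S = EV^*H$: since $z = z_E z_M z_H$ with $z_E \in E$, $z_H \in H$, $z_M \in V^*$, and $z$ is long, we can replace the middle block $z_M$ by another word $z_M'$ over $\{a,b,c\}$ so that $z_E z_M' z_H \in S$ but $z' := z_E z_M' z_H$ is \emph{not} of the required alternating form — for instance insert a doubled letter $aa$ or a spurious $c$, or scramble $(ab)^k$ into $(ba)^k$. Applying the same context $(u,v)$ to $z' \in S$ yields $u z' v$, which still has exactly two $c$'s (the context contributes none) but whose infix between the $c$'s is no longer in $\{a,b\}^*\{\lambda,b\}\{ab\}^*\{\lambda,a\}$-normalized form — hence $u z' v \notin L$, contradicting $L = L_{\mathrm{ex}}(G')$. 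The main obstacle, and the place requiring care, is the case analysis on \emph{where the boundary letters $c$ come from}: they may be supplied by the context $(u,v)$ rather than by $z$, so one must separately handle (a) $z$ already contains both $c$'s, (b) $z$ contains one $c$ and the context supplies the other, (c) $z$ contains no $c$ and both come from $(u,v)$. In cases (b) and (c) the word $z$ is a long word over $\{a,b\}$, so $z \in \{a,b\}^*$ and lies in some $S$; but then $S \cap \{a,b\}^+$ being symmetric definite forces, again by the $EV^*H$ structure and the length of $z$, that $S$ contains words of $\{a,b\}^*$ that break the $(ab)^n$ pattern, and wrapping the same $c$-context around such a word produces something outside $L$. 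Making the "$EV^*H$ with a long admissible word must contain an inadmissible one" step precise — i.e. showing that $E$ and $H$ cannot both pin down the alternating structure arbitrarily far from the ends — is the technical heart; it uses that $E$ and $H$ are fixed regular languages while $n$ grows, so one of $E$, $H$ only controls a bounded prefix/suffix and the unbounded middle $V^*$ is genuinely free.
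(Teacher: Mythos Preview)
Your negative direction is on the right track and close to the paper's argument, but your positive direction has a genuine error.

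\textbf{Positive direction.} Your proposed grammar does not generate $L$. You suggest growing the words ``between the two $c$'s'' using a context $(\lambda,ab)$ with selection $S_2=\Inf(\{c\}\{ab\}^*\{c\})$. But we are in the \emph{external} mode: applying $(\lambda,ab)$ to a word $x\in S_2$ yields $x\,ab$, not an insertion into the interior of $x$. For instance, the axiom $cabc$ lies in $S_2$, and one step produces $cabc\,ab\notin L$. In fact no external context can turn $c(ab)^kc$ into $c(ab)^{k+1}c$, since matching the outer $c$'s would force $u=v=\lambda$. The paper's grammar avoids this by going the other way: first generate all of $\{a,b\}^*$ from the single axiom $\lambda$ via $S_1=\{a,b\}^*$ with contexts $(\lambda,a),(\lambda,b)$; then use $S_2=\Inf(\{ab\}^*)$ with the single context $(c,c)$ to wrap the two $c$'s around an infix of $(ab)^n$ in one step. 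Since $\Inf(\{ab\}^*)=\{\lambda,b\}\{ab\}^*\{\lambda,a\}$, this produces exactly the $c$-words of $L$, and once a word contains a $c$ it leaves both selection languages, so nothing further is derived.

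\textbf{Negative direction.} Your case split on how many $c$'s the predecessor $z$ carries is unnecessary. Since $z\in L$, it has either zero or two $c$'s; if two, then $z=c\,w'\,c$ and $uzv=c(ab)^nc$ with $|uv|_c=0$ forces $u=v=\lambda$, contradicting $uv\ne\lambda$. So only your case~(c) occurs: $z\in\{a,b\}^+$ and $u$ begins with $c$, $v$ ends with $c$. This is the paper's one-line reduction. The ``technical heart'' you worry about is also simpler than you suggest: from $z\in S=EU^*H$ take the actual decomposition $z=e\,m\,h$ with $e\in E$, $h\in H$; since $z$ is a factor of $(ab)^n$ and (by the choice of $n$) contains both letters, we have $e,h\in\{a,b\}^*$ and $b\in U$, hence $e\,bb\,h\in EU^*H=S$ and $e\,bb\,h\in\{a,b\}^*\subseteq L$. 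Applying the same context yields $u\,e\,bb\,h\,v$, a word that starts and ends with $c$ but contains the factor $bb$, so it is not in $L$. No asymptotic comparison of $n$ against the ``sizes'' of $E,H$ is needed; one substitution suffices. (Be careful with your alternative suggestion of inserting ``a spurious $c$'': then $z'$ need not lie in $L$, so you could not start a derivation from it.)
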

\begin{proof}
    The language $L$ is generated by the contextual grammar 
  $G=(\sets{a,b,c},\sets{S_1\to C_1, S_2\to C_2},\sets{\lambda})$
  with 
  \[S_1 = \sets{a,b}^*,\ C_1 = \sets{(\lambda,a),(\lambda,b)},\
  S_2 = \Inf(\sets{ab}^*),\ C_2 = \sets{(c,c)}\]
%
  All selection languages are infix-closed; therefore,  we have $L\in\cEC(\INF)$.
  With symmetric definite selection languages, the alternation between $a$ and $b$ cannot be checked.
\end{proof}

\begin{lemma}\label{ec:inf_o_circ}
    The language $L = \set{a^nb^n}{n \geq 1} \cup \set{b^na^n}{n \geq 1}$ is in $\ec{\INF} \setminus \ec{\CIRC}$.
\end{lemma}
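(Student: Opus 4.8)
The plan is to prove the two parts of the claim separately: $L\in\ec{\INF}$ by a direct construction, and $L\notin\ec{\CIRC}$ by contradiction.

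For $L\in\ec{\INF}$, I would use the contextual grammar $G=(\sets{a,b},\sets{S_1\to C_1,S_2\to C_2},\sets{ab,ba})$ where
\[S_1=\sets{a}^*\sets{b}^*,\ C_1=\sets{(a,b)},\ S_2=\sets{b}^*\sets{a}^*,\ C_2=\sets{(b,a)}.\]
Both selection languages are infix-closed, since every factor of a word of the form $a^pb^q$ is again of that shape, and likewise for $b^pa^q$. The point is that from the axiom $ab$ only the first rule is ever applicable -- because $a^kb^k\notin S_2$ for every $k\geq 1$ -- and it rewrites $a^kb^k$ into $a^{k+1}b^{k+1}$; hence exactly $\set{a^nb^n}{n\geq 1}$ is reachable from $ab$, and symmetrically exactly $\set{b^na^n}{n\geq 1}$ from $ba$. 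Thus $L_{\mathrm{ex}}(G)=L$, so $L\in\ec{\INF}$.

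For $L\notin\ec{\CIRC}$, suppose to the contrary that $L=L_{\mathrm{ex}}(G')$ for a contextual grammar $G'=(\sets{a,b},\cS',B')$ whose selection languages are all circular, and fix $n\geq\ell(G')$. As $2n>\ellA(G')$, the word $a^nb^n$ is not an axiom, so some derivation of it ends with a step $z\Lra uzv=a^nb^n$ with $z\in L$, $z\in S$, and $(u,v)\in C$ for a selection pair $(S,C)\in\cS'$. Since $|uv|\leq\ellC(G')<n$, both $u$ and $v$ are shorter than $n$, hence $u=a^{|u|}$ (a prefix of $a^nb^n$) and $v=b^{|v|}$ (a suffix), so $z=a^{n-|u|}b^{n-|v|}$ with $n-|u|\geq 1$ and $n-|v|\geq 1$. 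For $z$ to lie in $L$ it must have the form $a^mb^m$, which forces $|u|=|v|=:t$; moreover $t\geq 1$ because $uv\neq\lambda$, and $n-t\geq\ell(G')-\ellC(G')=\ellA(G')+1\geq 1$.

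Now I would invoke circularity: since $z=a^{n-t}b^{n-t}\in S$ and $S$ is circular, the rotation $b^{n-t}a^{n-t}$ also lies in $S$. This word belongs to $L$ as well (it has the form $b^ma^m$ with $m=n-t\geq 1$), hence is derivable in $G'$, so we may wrap the context $(u,v)=(a^t,b^t)$ around it and obtain $a^tb^{n-t}a^{n-t}b^t\in L_{\mathrm{ex}}(G')=L$. But with $t\geq 1$ and $n-t\geq 1$ this word starts with the letter $a$ and also contains an $a$ immediately after a $b$, so it is of neither form $a^mb^m$ nor $b^ma^m$ and is therefore not in $L$ -- a contradiction. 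Hence $L\notin\ec{\CIRC}$. The construction for $\ec{\INF}$ is routine, the only care needed being that the two `strands' $\set{a^nb^n}{n\geq 1}$ and $\set{b^na^n}{n\geq 1}$ stay apart, which they do because each is disjoint from the selection language of the other rule. The genuinely delicate part is the non-membership argument: one has to pin $u$ and $v$ down to the pure blocks $a^{|u|}$, $b^{|v|}$ -- this is exactly where $n>\ellC(G')$ is used -- and, equally importantly, one has to make sure the rotated word to which the context is re-applied is itself a member of $L$, so that the contradicting word $a^tb^{n-t}a^{n-t}b^t$ is genuinely generated by $G'$ rather than just formally `wrappable'.
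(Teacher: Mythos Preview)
Your proof is correct and follows essentially the same approach as the paper's: the same grammar (your $S_1=\sets{a}^*\sets{b}^*$ equals the paper's $\Inf(\sets{a}^*\sets{b}^*)$, and likewise for $S_2$), and the same contradiction argument via the rotation $b^{n-t}a^{n-t}\in S$ leading to the illegal word $a^tb^{n-t}a^{n-t}b^t$. Your write-up is in fact slightly more careful than the paper's sketch, in that you explicitly verify $n-t\geq 1$ so that the rotated word genuinely lies in $L$.
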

\begin{proof}
    The contextual grammar 
$G = (\{a,b\}, \{S_1 \to C_1, S_2 \to C_2\}, \{ab,ba\})$
with 
\[S_1=\Inf(\{a\}^*\{b\}^*),\ C_1=\sets{(a, b)},\ S_2=\Inf(\{b\}^*\{a\}^*),\ C_2=\sets{(b, a)}\]
generates $L$ where all selection languages are infix-closed.
With circular selection languages, a word of the language $\sets{a}^+\sets{b}^+\sets{a}^+\sets{b}^+$ could be generated.
\end{proof}

\begin{lemma}\label{ec:comb_o_pre}
    The language $L = \set{b^na}{n \geq 0}\cup \sets{\lambda}$ is in $\ec{\COMB} \setminus \ec{\PRE}$.
\end{lemma}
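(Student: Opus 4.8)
The proof splits into the two set-membership claims, and I would handle them separately.

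For $L\in\ec{\COMB}$ I would just exhibit a witness grammar. Take $G=(\{a,b\},\{S\to C\},\{\lambda,a\})$ with the combinational selection language $S=\{a,b\}^*\{a\}$ (it has the form $V^*X$ with $V=\{a,b\}$ and $X=\{a\}$) and $C=\{(b,\lambda)\}$. From the axiom $a$ the single rule can only prepend letters $b$, and since every word $b^na$ again lies in $S$, the words derivable from $a$ are exactly $\set{b^na}{n\ge 0}$; from the axiom $\lambda$ nothing can be derived because $\lambda\notin S$. Hence $L(G)=L$, so $L\in\ec{\COMB}$.

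For $L\notin\ec{\PRE}$ I would argue by contradiction. Suppose $L=L(G')$ for an external contextual grammar $G'=(\{a,b\},\mathcal{S}',B')$ all of whose selection languages are prefix-closed, and put $n_0=\ell(G')$. First I would analyse, for each $n\ge n_0$, the last derivation step $z\Lra_{\mathrm{ex}}b^na$ producing the word $b^na$ (this step exists because $|b^na|>\ellA(G')$, so $b^na$ is not an axiom): thus $b^na=uzv$ for some $(S,C)\in\mathcal{S}'$ with $z\in S\cap L$, $(u,v)\in C$ and $uv\neq\lambda$. Here $z\neq\lambda$, since $z=\lambda$ would force $|uv|=n+1>\ellC(G')$; hence $z=b^ma$ for some $m$, and comparing the last letters and the numbers of occurrences of $a$ on the two sides of $uzv=b^na$ forces $v=\lambda$ and then $u=b^{\,n-m}$ with $1\le n-m\le\ellC(G')$. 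In other words, every long word $b^na$ is produced by wrapping a context $(b^{j},\lambda)$, $1\le j\le\ellC(G')$, around a word $b^{\,n-j}a$ lying in some selection language of $G'$.

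The decisive step is then a pigeonhole argument: $G'$ has only finitely many selection pairs and $j$ ranges over a finite set, so there is a single selection pair $(S,C)$ and a single $j\ge 1$ with $(b^{j},\lambda)\in C$ for which $b^ka\in S$ holds for infinitely many $k$. Since $S$ is prefix-closed and contains $b^ka$ for arbitrarily large $k$, it contains all the prefixes $b^i$, in particular $\lambda\in S$. As $\lambda\in L=L(G')$ as well, applying the context $(b^{j},\lambda)\in C$ to $\lambda$ gives $b^{j}\in L(G')$; but $j\ge 1$, so $b^{j}\notin L$ --- a contradiction. Hence $L\notin\ec{\PRE}$.

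I expect the second paragraph to be the main obstacle: the bookkeeping that rules out $z=\lambda$ and pins down the last length-increasing step as a pure left concatenation of $b$'s. The key conceptual point, once that is in place, is that prefix-closedness forces $\lambda$ into any selection language that contains arbitrarily long words $b^ka$, so the very context that legitimately builds $b^na$ also illegitimately sends $\lambda\in L$ to $b^{j}\notin L$.
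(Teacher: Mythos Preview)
Your proof is correct and follows essentially the same approach as the paper: exhibit the combinational grammar $G=(\{a,b\},\{\{a,b\}^*\{a\}\to(b,\lambda)\},\{\lambda,a\})$ for membership, and for the negative part observe that any derivation step producing a long $b^na$ must wrap a context $(b^j,\lambda)$ around some $b^ma\in S$, whence prefix-closure puts $\lambda\in S$ and the axiom $\lambda$ derives $b^j\notin L$.

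One minor simplification: the pigeonhole step is unnecessary. A \emph{single} derivation step $b^{m}a\Lra b^na$ already gives you a selection pair $(S,C)$ with $(b^j,\lambda)\in C$ and $b^ma\in S$; since $\lambda$ is a prefix of $b^ma$, prefix-closure of $S$ immediately yields $\lambda\in S$, and the contradiction follows. You do not need $S$ to contain $b^ka$ for arbitrarily large $k$.
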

\begin{proof}
    The contextual grammar $G = (\{a,b\}, \{\{a,b\}^*\{a\} \to (b, \lambda)\},\sets{\lambda,a})$ generates the language $L$ and all the selection languages are combinational. 
With prefix-closed selection languages, also words without $a$ could be generated.
\end{proof}

\begin{lemma}\label{ec:nil_o_pre}
    The language $L = \set{bbba^n}{n\geq 1} \cup \sets{bb}$ is in $\ec{\NIL} \setminus \ec{\PRE}$.
\end{lemma}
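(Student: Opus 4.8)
The plan is to exhibit an explicit external contextual grammar with nilpotent selection languages that generates $L=\set{bbba^n}{n\geq 1}\cup\sets{bb}$, and then to argue that no external contextual grammar with prefix-closed selection languages can generate $L$. For membership in $\cEC(\NIL)$, I would take the axioms to be $\sets{bb, bbba}$ and use a single selection pair whose selection language is the finite (hence nilpotent) set $\set{bbba^n}{n\geq 1}$ together with the context $(\lambda, a)$; applied repeatedly to $bbba$ this yields every word $bbba^n$ with $n\geq 1$, while the axiom $bb$ is not in the selection language, so the grammar generates exactly $L$. Alternatively, one can use a cofinite (nilpotent) selection language such as $\sets{a,b}^*\setminus\sets{\lambda, bb}$ with the same context, which avoids having $bbba$ as an axiom; either way the key point is that a finite or cofinite language suffices and both are nilpotent.

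The harder direction is $L\notin\cEC(\PRE)$. Suppose for contradiction that $G'=(\sets{a,b},\cS',B')$ generates $L$ with all selection languages prefix-closed. Since $L$ is infinite and $B'$ is finite, some word $w=bbba^n$ with $n$ large (say $n>\ell(G')$) is obtained by a derivation step $z\Lra_{\mathrm{ex}} w=uzv$ using a selection pair $(S,C)\in\cS'$ with $z\in S$ and $(u,v)\in C$. Because every word of $L$ other than $bb$ starts with $bbb$ and, apart from $bb$, contains at least one $a$, and because $n$ exceeds $\ell(G')$, the context cannot rewrite the $b$-prefix away: we must have $u=\lambda$ and $v=a^k$ for some $k\geq 1$, and $z=bbba^{n-k}$ with $n-k\geq 1$ (the axiom case gives the same shape of $z$ up to length bounds). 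Now $S$ is prefix-closed and $z=bbba^{n-k}\in S$, so the prefix $bb$ lies in $S$ as well. Hence the step can be applied to the word $bb\in L$, yielding $bb\cdot a^k=bba^k\notin L$, a contradiction.

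I would present this carefully, noting that one should first pin down that the context adjoined in the step producing a long $bbba^n$ must be of the form $(\lambda,a^k)$: the word $uzv$ begins with $bbb$ and $|uv|\leq\ell(G')<n$, so $u$ is a prefix of $bbb$; if $u\neq\lambda$ then $z$ is a proper suffix of $bbba^n$ of the form $b^ja^n$ with $j\in\{0,1,2\}$ or $a^{n'}$, none of which belongs to $L$ (for $n$ large), and $z$ must lie in $L(G')=L$ since it is derivable, contradiction; thus $u=\lambda$. Similarly $v$ consists only of $a$'s because $uzv$ ends in $a$'s past position $\ell(G')$ and $z\in L$ forces $z$ to end the same way. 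The main obstacle is this shape analysis of the last derivation step --- once it is settled, the prefix-closure of $S$ immediately produces the forbidden word $bba^k$. After this, the proper inclusion $\cEC(\NIL)\setminus\cEC(\PRE)\neq\emptyset$ combined with the monotonicity Lemma~\ref{lemma:ec_monoton} and the known relations in Figure~\ref{fig:ec_erg} places $L$ as a witness as claimed.

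\begin{proof}
    The contextual grammar $G = (\{a,b\}, \{S \to C\}, \{bb, bbba\})$ with
    $S = \set{bbba^n}{n\geq 1}$ and $C = \sets{(\lambda, a)}$ generates the language $L$:
    the selection language $S$ is finite, hence nilpotent; starting from the axiom $bbba$ and applying
    the context $(\lambda, a)$ repeatedly, we obtain exactly the words $bbba^n$ with $n\geq 1$, while
    the axiom $bb$ does not belong to $S$ and so yields nothing further. Therefore $L\in\ec{\NIL}$.

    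Assume now that $L\in\ec{\PRE}$. Then $L = L(G')$ for some contextual grammar
    $G' = (\{a,b\}, \cS', B')$ in which every selection language is prefix-closed. Since $L$ is infinite
    and $B'$ is finite, there is a word $w = bbba^n$ with $n > \ell(G')$ which is obtained by a
    derivation step $z \Lra_{\mathrm{ex}} w = uzv$ for some selection pair $(S,C)\in\cS'$ with $z\in S$
    and some context $(u,v)\in C$. The word $uzv$ starts with $bbb$ and $|uv| \leq \ell(G') < n$, so $u$
    is a prefix of $bbb$. If $u \neq \lambda$, then $z$ is a proper suffix of $bbba^n$ of the form
    $b^ja^n$ with $j\in\{0,1,2\}$; since $n > \ell(G') \geq 2$, such a word does not belong to $L$,
    contradicting $z\in S$ and, since $z$ is derivable in $G'$, also $z\in L(G') = L$. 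Hence $u = \lambda$.
    Likewise, $uzv = zv$ ends with $a^n$ and $|v| < n$, so $v = a^k$ for some $k\geq 1$, and
    $z = bbba^{n-k}$ with $n-k\geq 1$. Now $S$ is prefix-closed and $z = bbba^{n-k}\in S$, so the prefix
    $bb$ also lies in $S$. Since $bb\in L = L(G')$ and $bb\in S$, the context $(u,v) = (\lambda, a^k)$ can
    be applied to $bb$, yielding the word $bb\,a^k = bba^k$. But $bba^k\notin L$ for any $k\geq 1$, a
    contradiction. Hence $L\notin\ec{\PRE}$.
\end{proof}
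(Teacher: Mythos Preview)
The non-membership half follows the paper's approach and is essentially sound, but the membership half contains a genuine error: the set $S=\set{bbba^n}{n\geq 1}$ is \emph{infinite}, not finite, and its complement in $\{a,b\}^*$ is infinite as well, so $S$ is not nilpotent. Your grammar therefore does not witness $L\in\ec{\NIL}$. The paper fixes this by using the cofinite (hence nilpotent) selection language $\{a,b\}^4\{a,b\}^*$ with the same context $(\lambda,a)$ and axioms $\{bb,bbba\}$: then $bbba$ lies in the selection language and generates all $bbba^n$, while $bb$ (having length $2<4$) does not and stays inert. Your own informal alternative $\{a,b\}^*\setminus\{\lambda,bb\}$ would also work, though contrary to what you write it still requires $bbba$ among the axioms.

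On the non-membership side there is a smaller gap. When $u\neq\lambda$ you assert that $z$ is a proper suffix of $bbba^n$ of the form $b^ja^n$, but in $w=uzv$ the word $z$ is only an infix, namely a prefix of $b^{3-|u|}a^n$; in particular you miss the case $z=bb$ with $u=b$ and $v=a^n$, which must be excluded via the length bound $|v|\leq\ellC(G')<n$. (Your stated justification $|uv|<n$ for ``$u$ is a prefix of $bbb$'' is also insufficient on its own; the real reason is that $z\in L$ rules out $u=bbba^k$.) The cleaner route is to use $z\in L$ first: either $z=bb$, forcing $|v|\geq n$, which is impossible; or $z=bbba^m$ for some $m\geq 1$, which forces $u=\lambda$ and $v=a^{n-m}$. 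From there your use of prefix-closure to place $bb\in S$ and derive $bba^{n-m}\notin L$ is exactly right and matches the paper.
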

\begin{proof}
    The contextual grammar $G = (\{a,b\}, \{\{a,b\}^4\{a,b\}^* \to (\lambda, a)\}, \{bbba, bb\})$ generates the language $L$, where all selection languages in~$\mathcal{S}$ are nilpotent.
With prefix-closed selection languages, also a word $bba^m$ could be generated.
\end{proof}

With the languages from the previous lemmas, the inclusion relations and incomparabilities depicted in Figure~\ref{fig:ec_erg} can be shown.
\begin{theorem}[Resulting hierarchy for $\cEC$]\label{theorem:neue_hierarchie_EC}
The inclusion relations presented in Figure \ref{fig:ec_erg} hold. An arrow from an entry $X$ to
an entry~$Y$ depicts the proper inclusion $X \subset Y$; if two families are not connected by a directed
path, they are incomparable.
\end{theorem}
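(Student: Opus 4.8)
The plan is to assemble Theorem~\ref{theorem:neue_hierarchie_EC} from the pieces already gathered in this section, together with the known hierarchy for $\cEC$ on the classical subregular families and the monotonicity principle. The overall strategy has three ingredients: (i) establish every solid inclusion arrow in Figure~\ref{fig:ec_erg}; (ii) establish the properness of each arrow; and (iii) establish the incomparabilities, i.e.\ that each pair of families not joined by a directed path is genuinely incomparable. For the arrows decorated with a literature reference there is nothing to do; for the new arrows involving $\ec{\INF}$, $\ec{\PRE}$, and $\ec{\SUF}$ we proceed as follows.

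First I would handle the inclusions. The inclusions $\ec{\MON}\subseteq\ec{\INF}$, $\ec{\INF}\subseteq\ec{\PRE}$, $\ec{\INF}\subseteq\ec{\SUF}$, and $\ec{\PRE}\subseteq\ec{\PS}$ all follow immediately from Lemma~\ref{lemma:ec_monoton} together with the language-family inclusions $\MON\subseteq\INF\subseteq\PRE$, $\INF\subseteq\SUF$, and $\PRE\subseteq\PS$ proved in Lemmas~\ref{srl:subsets} and~\ref{srl:subsets_pre_ps}; the inclusion $\ec{\SUF}\subseteq\ec{\PS}$ is already in the literature (Truthe~2021). Next, properness: for $\ec{\MON}\subset\ec{\INF}$ use the witness $L=\set{a^mbc^{2n}ba^m}{n\geq 1, m\geq 0}\cup\set{c^n}{n\geq 2}\cup\set{bc^nb}{n\geq 2}\cup\set{ac^na}{n\geq 2}$ of Lemma~\ref{ec:inf_o_nc}, which lies in $\ec{\INF}\setminus\ec{\NC}$ and hence outside $\ec{\MON}$; for $\ec{\INF}\subset\ec{\PRE}$ use the witness of Lemma~\ref{ec:pre_o_suf}, which is in $\ec{\PRE}\setminus\ec{\SUF}\subseteq\ec{\PRE}\setminus\ec{\INF}$; symmetrically for $\ec{\INF}\subset\ec{\SUF}$ use Lemma~\ref{ec:suf_o_pre}; and for $\ec{\PRE}\subset\ec{\PS}$ use the witness $L=\set{bbba^n}{n\geq 1}\cup\sets{bb}$ of Lemma~\ref{ec:nil_o_pre}, which is in $\ec{\NIL}\setminus\ec{\PRE}\subseteq\ec{\PS}\setminus\ec{\PRE}$.

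Then I would dispatch the incomparabilities, exploiting the ``strand'' argument explained before Lemma~\ref{lemma:pre_unvergleichbarzu_suf}: to show that every family on one strand is incomparable to every family on another strand, it is enough to exhibit one language in the bottom of the first strand outside the top of the second, and one in the bottom of the second outside the top of the first. Thus $\ec{\PRE}$ versus $\ec{\SUF}$ follows from Lemmas~\ref{ec:pre_o_suf} and~\ref{ec:suf_o_pre}; the strand $\ec{\INF}/\ec{\PRE}$ versus the strand $\ec{\COMB},\ec{\DEF},\ec{\ORD},\ec{\NC}$ follows from Lemma~\ref{ec:inf_o_nc} (giving $L_1\in\ec{\INF}\setminus\ec{\NC}$) and Lemma~\ref{ec:comb_o_pre} (giving $L_2\in\ec{\COMB}\setminus\ec{\PRE}$); the strand versus $\ec{\NIL},\ec{\COMM},\ec{\CIRC}$ follows from Lemma~\ref{ec:inf_o_circ} and Lemma~\ref{ec:nil_o_pre}; versus $\ec{\STAR}$ from Lemma~\ref{ec:inf_o_star} together with \cite[Lemma~30]{Koedding.Truthe.2024} for a star witness outside $\ec{\PS}$; and versus $\ec{\SYDEF}$ from Lemma~\ref{ec:inf_o_sydef} and Lemma~\ref{ec:comb_o_pre}. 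All remaining incomparabilities among the classical families are inherited from the earlier hierarchies.

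The main obstacle is not the bookkeeping but the hard non-inclusion arguments for the witness languages, namely the claims $L\notin\ec{\STAR}$, $L\notin\ec{\NC}$, $L\notin\ec{\SYDEF}$, $L\notin\ec{\CIRC}$, and $L\notin\ec{\PRE}$ in Lemmas~\ref{ec:pre_o_suf}--\ref{ec:nil_o_pre}. Each of these requires a pumping-style argument tailored to the structural closure property of the selection family: one picks a word of $L$ long enough (larger than $\ell(G')$ for a hypothetical grammar $G'$ with selection in the family in question), analyses the last derivation step, and uses the closure property of the selection language (idempotence $S^*=S$ for star, the $k$-power equivalence for non-counting, the decomposition $S=EV^*H$ for symmetric definite, closure under circular shift, closure under prefixes) to produce a second word that the same grammar must also generate but which fails to lie in $L$. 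These verifications are the technical heart; once they are in place, the theorem follows by combining them as above. I would therefore present the proof as: cite the known arrows, invoke Lemma~\ref{lemma:ec_monoton} for the new inclusions, list the properness witnesses, apply the strand lemma for the incomparabilities, and conclude that Figure~\ref{fig:ec_erg} is exactly correct.
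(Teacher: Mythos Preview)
Your proposal is correct and follows essentially the same approach as the paper: inclusions via monotonicity (Lemma~\ref{lemma:ec_monoton}) together with Lemmas~\ref{srl:subsets} and~\ref{srl:subsets_pre_ps}, properness via the same witness languages (Lemmas~\ref{ec:inf_o_nc}, \ref{ec:pre_o_suf}, \ref{ec:suf_o_pre}, \ref{ec:nil_o_pre}), and incomparabilities via the strand argument with exactly the pairs of witnesses you list. The paper compresses all of this into a single sentence pointing back to the preceding lemmas, but the underlying structure---including the choice of witnesses for each strand pairing---matches yours.
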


\section{Results on subregular control in internal contextual grammars}

In this section, we include the families of languages generated by internal contextual grammars with 
selection languages from the subregular families under investigation into the existing hierarchy
with respect to internal contextual grammars.


\begin{lemma}[Monotonicity $\cIC$]\label{lemma:ic_monoton}
  For any two language classes $X$ and $Y$ with~$X\subseteq Y$,
  we have the inclusion~$\cIC(X)\subseteq\cIC(Y)$.
\end{lemma}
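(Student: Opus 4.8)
The plan is to argue directly from the definitions, since this is a pure monotonicity statement: widening the pool of admissible selection languages can only enlarge (never shrink) the class of generated languages. Concretely, I would start with an arbitrary language $L\in\cIC(X)$ and unfold the definition: there is a contextual grammar $G=(V,\cS,A)$ with $\cS=\sets{S_1\to C_1,\dots,S_n\to C_n}$ such that each selection language $S_i$ (over some alphabet $U_i\subseteq V$) belongs to $X$, and $L=L_{\mathrm{in}}(G)$.

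The key step is the observation that $G$, verbatim, is already a contextual grammar with selection in $Y$. Indeed, since $X\subseteq Y$, each $S_i\in X$ is also a member of $Y$, while all the remaining requirements on a contextual grammar with selection in $Y$ (that $V$ be an alphabet, that each $C_i$ be a finite subset of $V^*\times V^*$ with $uv\neq\lambda$ for every $(u,v)\in C_i$, and that $A$ be a finite subset of $V^*$) are properties of $G$ that do not mention the selection family at all and hence continue to hold. Therefore $G$ witnesses $L=L_{\mathrm{in}}(G)\in\cIC(Y)$, and since $L$ was arbitrary we conclude $\cIC(X)\subseteq\cIC(Y)$.

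There is essentially no obstacle here: the only thing to be careful about is that the internal derivation relation $\Lra_{\mathrm{in}}$ depends only on the triple $(V,\cS,A)$ and not on which family the $S_i$ are declared to come from, so the generated language $L_{\mathrm{in}}(G)$ is literally the same object whether $G$ is viewed as a grammar with selection in $X$ or in $Y$. (The identical one-line argument also establishes the companion statement for external grammars, Lemma~\ref{lemma:ec_monoton}, with $\Lra_{\mathrm{in}}$ replaced by $\Lra_{\mathrm{ex}}$.) I would keep the write-up to a couple of sentences, as no case analysis or construction is needed.
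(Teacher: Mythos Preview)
Your argument is correct and is exactly the standard one-line observation: the same grammar $G$ witnesses membership in $\cIC(Y)$ because each $S_i\in X\subseteq Y$, and the internal derivation relation depends only on $G$, not on the ambient selection family. The paper in fact states this lemma without proof (it is treated as immediate from the definitions), so there is nothing to compare against; your write-up is appropriately brief.
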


Figure~\ref{fig:ic_erg} shows a hierarchy of some language families which are generated by internal contextual grammars where the selection languages belong to subregular classes investigated before. The 
hierarchy contains results which were already known (marked by a reference to the literature) and 
results which 
are new.

  \begin{figure}[htb]
  \centering
  \scalebox{.9}{
  \begin{tikzpicture}[node distance=15mm and 25mm,on grid=true
  ]
    \node (MON) {$\ic{\MON}$};
    \node (X) [left=of MON] {};
    \node (FIN) [left=of X] {$\ic{\FIN}$};
    \node (NIL) [above=of FIN] {$\ic{\NIL}$};
    \node (COMB) [right=of NIL] {$\ic{\COMB}$};
    \node (DEF) [above=of NIL] {$\ic{\DEF}$};
    \node (ORD) [above=of DEF] {$\ic{\ORD}$};
    \node (NC) [above=of ORD] {$\ic{\NC}$};
    \node (PS) [above=of NC] {$\ic{\PS}$};
    \node (REG) at (0,9) {$\ic{\REG} \stackrel{\text{\cite{Dassow_Manea_Truthe.2012}}}{=} \ic{\UF}
\stackrel{\text{\cite{Koedding.Truthe.2024}}}{=} \ic{\LCOM} \stackrel{\text{\cite{Koedding.Truthe.2024}}}{=} \ic{\RCOM}
\stackrel{\text{\cite{Koedding.Truthe.2024}}}{=} \ic{\TCOM}$};

    \node (SYDEF) [right=of ORD] {$\ic{\SYDEF}$};

    \node (INF) [right=of SYDEF] {$\ic{\INF}$};

    \node (SUF) [above right=of INF] {$\ic{\SUF}$};
    
    \node (PRE) [left=of SUF] {$\ic{\PRE}$};

    \node (STAR) [right=of SUF] {$\ic{\STAR}$};

    \node (COMM) [right=of STAR] {$\ic{\COMM}$};
    \node (CIRC) [above=of COMM] {$\ic{\CIRC}$};

    \draw[hier, bend right] (MON) to node[pos=.45, edgeLabel]{\small\cite{Koedding.Truthe.2024}} (STAR);

    \draw[hier] (FIN) to node[pos=.45, edgeLabel]{\small\cite{Dassow_Manea_Truthe.2012}} (NIL);

    \draw[hier] (MON) to node[pos=.45, edgeLabel]{\small\cite{Dassow_Manea_Truthe.2012}} (NIL);

    \draw[hier] (MON) to node[pos=.45, edgeLabel]{\small\cite{Dassow_Manea_Truthe.2012}} (COMB);

    \draw[hier] (COMB) to node[pos=.45, edgeLabel]{\small\cite{Dassow_Manea_Truthe.2012}} (DEF);

    \draw[hier] (COMB) to node[pos=.45, edgeLabel]{\small\cite{Koedding.Truthe.2024}} (SYDEF);

    \draw[hier] (NIL) to node[pos=.45, edgeLabel]{\small\cite{Dassow_Manea_Truthe.2012}} (DEF);

    \draw[hier] (DEF) to node[pos=.45, edgeLabel]{\small\cite{Truthe.2014}} (ORD);

    \draw[hier, dashed] (ORD) to (NC);
    
    \draw[hier] (NC) to node[pos=.45, edgeLabel]{\small\cite{Truthe.2021}} (PS);
    
    \draw[hier] (PS) to node[pos=.45, edgeLabel]{\small\cite{Truthe.2021}} (REG);
    
    \draw[hier,bend right=22] (STAR) to node[pos=.45, edgeLabel]{\small\cite{Koedding.Truthe.2024}} (REG);

    \draw[hier] (MON) to 
    (INF);

    \draw[hier] (INF) to 
    (SUF);

    \draw[hier] (INF) to 
    (PRE);

    \draw[hier, bend right] (SYDEF) to node[pos=.45, edgeLabel]{\small\cite{Koedding.Truthe.2024}} (PS);

    \draw[hier, bend right=20] (SUF) to node[pos=.45, edgeLabel]{\small\cite{Truthe.2021}} (PS);

    \draw[hier, bend right=20] (PRE) to 
    (PS);

    \draw[hier, bend right] (MON) to node[pos=.45, edgeLabel]{\small\cite{Dassow_Manea_Truthe.2012}} (COMM);

    \draw[hier] (COMM) to node[pos=.45, edgeLabel]{\small\cite{Dassow_Manea_Truthe.2012}} (CIRC);

    \draw[hier, bend right = 10] (CIRC) to node[pos=.45, edgeLabel]{\small\cite{Dassow_Manea_Truthe.2012}} (REG);
    
  \end{tikzpicture}
  }
  \caption{Resulting hierarchy of language families by internal contextual grammars with special selection languages}
  \label{fig:ic_erg}
  \end{figure}

We now present some languages which 
serve 
as witness languages for proper inclusions or
incomparabilities.

\begin{lemma}\label{ic:suf_o_pre}
    Let $G = (\{a,b,c,d\}, \{\{ab,b,\lambda\} \to (c,d)\}, \{aab\})$ be a contextual grammar.
    Then, the language $L = L(G)$ is in $\ic{\SUF} \setminus \ic{\PRE}$.
\end{lemma}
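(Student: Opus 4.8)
The plan is to verify $L\in\ic{\SUF}$ directly and then to prove $L\notin\ic{\PRE}$ by a pumping argument against an arbitrary prefix-closed grammar, using a rigid structural description of $L$. For the first part: the unique selection language of $G$ is $\{ab,b,\lambda\}$, whose suffixes are $ab,b,\lambda$ (of $ab$), $b,\lambda$ (of $b$), and $\lambda$ — all members of the set — so it is suffix-closed and hence $L=L(G)\in\ic{\SUF}$.

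For $L\notin\ic{\PRE}$ I would first record structural properties of $L$, each obtained by a routine induction on the length of a derivation of a word in $G$ (recall $G$ has the single context $(c,d)$, applied around a factor equal to $\lambda$, $b$, or $ab$, so \emph{every} derivation step inserts exactly one $c$ and one $d$). Every $w\in L$ satisfies $|w|_a=2$, $|w|_b=1$, $|w|_c=|w|_d$, with the two occurrences of $a$ and the occurrence of $b$ appearing in the order $a,a,b$; deleting the $a$'s and the $b$ yields a Dyck word over $\{c,d\}$; and — the point that will give the contradiction — no matched $c$–$d$ pair of that Dyck word has the second $a$ strictly between its endpoints while the $b$ does not. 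The last property holds because the only rule whose selected factor contains an occurrence of $a$ is the one selecting $ab$, which wraps the context around the second $a$ and the $b$ \emph{simultaneously}, and no later step can enlarge the span of an existing pair so as to swallow the second $a$ but not the $b$ (the selectable factors $ab,b,\lambda$ contain no $c$ or $d$, hence never straddle an existing pair). I would also note that $w_N:=a c^{N} a b d^{N}\in L$ for every $N\ge 0$, produced by applying the $ab$-rule $N$ times, each nested inside the previous one.

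Now assume $G'=(V,\mathcal{S}',B')$ has all its selection languages prefix-closed and $L(G')=L$. Let $m=\max\{\,|x|:x\in B'\,\}$, let $p$ bound $|uv|$ over all contexts of $G'$, and set $N=m+p+1$. Since $|w_N|>m$, the word $w_N$ is not an axiom, so there is a last step $z\Lra_{\mathrm{in}}w_N$ with $z=z_1z_2z_3\in L$, a selection pair $(S',C')\in\mathcal{S}'$ with $z_2\in S'$, and $(u,v)\in C'$, $uv\ne\lambda$, $|uv|\le p<N$, and $w_N=z_1uz_2vz_3$. As $z\in L$ we have $|z|_a=2$ and $|z|_b=1$, so $u,v\in\{c,d\}^{*}$; and a contiguous $\{c,d\}$-factor of $a c^{N} a b d^{N}$ lies entirely inside the $c$-block or entirely inside the $d$-block. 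Combining this with $|z|_c=|z|_d$ (so $|uv|_c=|uv|_d$) and with the fact that $u$ precedes $v$ in $w_N$, a short case check forces $u=c^{t}$ inside the $c$-block and $v=d^{t}$ inside the $d$-block for some $t\ge 1$, with $z_2=c^{s}\,a\,b\,d^{j}$ for some $s,j\ge 0$, $z_1=ac^{i}$, and $z=a c^{\,N-t} a b d^{\,N-t}$; all other configurations violate $|z|_c=|z|_d$ or the order of $u,v$.

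Finally, since $z_2=c^{s}abd^{j}\in S'$ and $S'$ is prefix-closed, $c^{s}a\in S'$. Writing $z=(ac^{i})(c^{s}a)(bd^{\,N-t})$ and selecting the middle factor, we obtain $z\Lra_{\mathrm{in}}(ac^{i})\,c^{t}\,(c^{s}a)\,d^{t}\,(bd^{\,N-t})=a c^{N} a d^{t} b d^{\,N-t}$, which must then belong to $L$. But deleting the $a$'s and the $b$ from this word gives $c^{N}d^{N}$, and in its matching the last $c$ is matched with the first $d$; that pair spans exactly the second $a$ and, since $b$ comes after the nonempty block $d^{t}$, it does not span $b$ — contradicting the structural property above. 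Hence no prefix-closed grammar generates $L$, i.e.\ $L\notin\ic{\PRE}$. The main obstacle is establishing the structural description of $L$ correctly (especially the matched-pair property about the second $a$), and choosing the witness $w_N=a c^{N} a b d^{N}$ so that its only $\{c,d\}$-factors sit in two monochromatic blocks, which is exactly what collapses the analysis of $(u,v)$ to a single case.
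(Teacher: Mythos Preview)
Your proof is correct and follows essentially the same line as the paper's: both pick the witness $ac^{N}abd^{N}$, force $(u,v)=(c^{t},d^{t})$ and $z_2=c^{s}abd^{j}$, use prefix-closure to get $c^{s}a\in S'$, and then apply the context around the factor $c^{s}a$ to produce a word with a $d$ between the second $a$ and the $b$, which lies outside $L$. Your version is more detailed---you justify $w'\notin L$ via the Dyck matched-pair invariant and you apply the bad step to $z$ rather than to $w_N$---but these are elaborations of the same argument rather than a different route.
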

\begin{proof}
    Since the selection language of $G$ is suffix-closed, the language $L$ is in $\ic{\SUF}$.
    Suppose that the language $L$ is also generated by a contextual grammar 
    $G' = (\{a,b,c,d\}, \cS', B')$
    where all selection languages $S\in\cS'$ are prefix-closed.
  
    Let us consider a word $w=ac^nabd^n \in L$ for some natural number $n \geq \ell(G')$. Due to the 
    choice of~$n$, the word~$w$ is derived in one step from some word $z_1z_2z_3\in L$ for three 
    words $z_i\in V^*$ with $1\leq i\leq 3$ by using a selection component $(S,C)\in\cS$ with $z_2\in S$ 
    and a context $(u,v)\in C$: $z_1z_2z_3\Lra z_1uz_2vz_3=w$. By the structure of the language $L$, 
    it holds $(u,v) = (c^m, d^m)$ for a natural number $m$ with $1 \leq m < n$ and~$z_2=c^pabd^q$ 
    for two natural numbers $p$ and $q$ with $m+p \leq n$ and $q + m \leq n$. Since $S$ is assumed 
    to be prefix-closed, the word $c^pa$ is in $S$, too. Therefore, we can apply the context $(u,v)$ 
    to the subword~$c^pa$ of~$w$. Hence, we can derive $ac^nabd^n \in L$ to $ac^{n+m}ad^mbd^n \notin L$. 
    This contradiction proves that~$L\notin\ic\PRE$.
\end{proof}

\begin{lemma}\label{ic:pre_o_suf}
    Let $G = (\{a,b,c,d\}, \{\{ab,a,\lambda\} \to (c,d)\}, \{abb\})$ be a contextual grammar.
    Then, the language $L = L(G)$ is in $\ic{\PRE} \setminus \ic{\SUF}$.
\end{lemma}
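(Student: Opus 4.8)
The plan is to transport the proof of Lemma~\ref{ic:suf_o_pre} through the symmetry that reverses words and simultaneously exchanges the letters $a\leftrightarrow b$ and $c\leftrightarrow d$: this symmetry carries the grammar of Lemma~\ref{ic:suf_o_pre} onto the grammar $G$ of the present lemma, and it swaps prefix-closedness and suffix-closedness of selection languages, so that the statement $\ic{\SUF}\setminus\ic{\PRE}$ there becomes $\ic{\PRE}\setminus\ic{\SUF}$ here. One could just invoke this symmetry; I describe the direct argument. The easy half is immediate: the only selection language of $G$, namely $\{ab,a,\lambda\}$, is prefix-closed (Lemma~\ref{srl:pre_o_suf}), so $L=L(G)\in\ic{\PRE}$.

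For $L\notin\ic{\SUF}$ I first record two structural facts. Let $h\colon\{a,b,c,d\}^*\to\Z$ be the morphism with $h(a)=h(b)=0$, $h(c)=1$, $h(d)=-1$; since the axiom $abb$ has $h$-value $0$ and every applied context $(c,d)$ contributes $0$, we have $h(w)=0$ for all $w\in L$. Moreover a derivation step only inserts letters from $\{c,d\}$ and so never changes the relative order of the unique $a$ and the two $b$'s; hence every $w\in L$ has the shape $\beta_0\,a\,\beta_1\,b\,\beta_2\,b\,\beta_3$ with $\beta_i\in\{c,d\}^*$. Tracking how the single rule of $G$ acts on this shape, the factor $\beta_1$ between the $a$ and the first $b$ is either left unchanged, or gains one leading $d$ (rule applied around $z_2=a$), or gains one occurrence of $cd$ inside or at a boundary of it (rule applied around $z_2=\lambda$ at a $\beta_1$-position); the rule around $z_2=ab$ requires $\beta_1=\lambda$ and keeps it so. Consequently $|\beta_1|_d\ge|\beta_1|_c$ for every $w\in L$. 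Finally, the words $w_n=c^n\,ab\,d^n\,b$ all lie in $L$ (start from $abb$ and repeatedly apply the rule around the contiguous factor $ab$).

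Now suppose $L=L(G'')$ for an internal contextual grammar $G''=(\{a,b,c,d\},\cS'',B'')$ all of whose selection languages are suffix-closed, and fix $n\ge\ell(G'')$, so $w_n$ is too long to be an axiom and arises by a last step $z_1z_2z_3\Lra z_1uz_2vz_3=w_n$ with $z=z_1z_2z_3\in L$, some $(S,C)\in\cS''$ with $z_2\in S$, and $(u,v)\in C$, $uv\neq\lambda$. Since $z\in L$ has one $a$ and two $b$'s, $u,v\in\{c,d\}^*$, so each of $u,v$ lies inside one of the two maximal $\{c,d\}$-runs $c^n$ and $d^n$ of $w_n$. Comparing $h$ gives $h(u)+h(v)=h(w_n)-h(z)=0$, and an elementary case check (both factors in the same run, or either factor empty) rules out every configuration except: $u=c^k$ inside the $c^n$-run and $v=d^k$ inside the $d^n$-run with the same $k\ge1$, and $z_2=c^p\,ab\,d^q$ straddling the middle. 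As $S$ is suffix-closed, the suffix $b\,d^q$ of $z_2$ also lies in $S$; it occurs in $w_n$ flanked by $c^n a$ on the left and $d^{\,n-q}b$ on the right, so applying the context $(c^k,d^k)\in C$ around it in $G''$ yields
\[
w'=c^n\,a\,c^k\,b\,d^{\,n+k}\,b\in L(G'')=L .
\]
In $w'$, however, the factor between the $a$ and the first $b$ is $c^k$ with $k\ge1$, contradicting $|\beta_1|_d\ge|\beta_1|_c$. Hence no such $G''$ exists, so $L\notin\ic{\SUF}$, and together with the first half $L\in\ic{\PRE}\setminus\ic{\SUF}$.

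The main obstacle is the middle step: deducing from the $h$-invariant that the last derivation step producing $w_n$ must use a context of the form $(c^k,d^k)$ together with a selection word $c^p\,ab\,d^q$, and then pinning down the concrete word $w'$ that a suffix-closed grammar would be forced to generate while the $|\beta_1|_d\ge|\beta_1|_c$ invariant forbids it. The rest — the easy inclusion, the length estimate showing $w_n$ is not an axiom, and the case analysis on where $u$ and $v$ sit — is routine.
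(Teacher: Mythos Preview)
Your proof is correct and follows essentially the same route as the paper's (which in the main text just says ``the argumentation is symmetrical to the previous proof'' and in the appendix spells out the same word $w_n=c^nabd^nb$, the same forced context $(c^m,d^m)$ around $c^pabd^q$, the same suffix $bd^q$, and the same illegal derivate $c^nac^mbd^{n+m}b$). The only difference is that you make explicit two invariants---the morphism $h$ with $h(w)=0$ and the inequality $|\beta_1|_d\ge|\beta_1|_c$---which the paper hides behind the phrase ``by the structure of the language $L$''; this extra bookkeeping is sound and makes the two assertions $(u,v)=(c^k,d^k)$ and $w'\notin L$ self-contained rather than left to the reader.
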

\begin{proof}
The argumentation is symmetrical to the previous proof.
\end{proof}

\begin{lemma}\label{ic:inf_o_nc}
    Let $V=\sets{a,b,c,d,e,f,g,h}$ be an alphabet, 
$G=(V,\sets{S_1\to C_1, S_2\to C_2},\sets{cd})$
be a contextual grammar with
\[S_1 = \Inf(\sets{a,b}^*\sets{cd}),\
C_1 = \sets{(aab,gh)},\ 
S_2 = \Inf(\sets{a}\sets{bb}^+\sets{c}),\
C_2 = \sets{(e,f)},\]
and $L=L(G)$ be its generated language.
Then, 
$L\in \cIC(\INF)\setminus\cIC(\NC).$
\end{lemma}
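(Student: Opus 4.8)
The plan has two parts. The membership $L\in\cIC(\INF)$ is immediate: both selection languages of $G$ are infix-closures and hence lie in $\INF$. The substance is $L\notin\cIC(\NC)$, which I would prove by a pumping argument against non-counting control, exploiting the even exponent hard-wired into $S_2$ and the obligatory factor $aab$ in the context set $C_1$ as the two obstructions a non-counting selection cannot respect.

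First I would record a few invariants of $L$ that depend only on the axiom $cd$ and on the contexts $(aab,gh)$ and $(e,f)$, not on which selection languages are used. Every word of $L$ contains exactly one $c$ and exactly one $d$, with the $c$ to the left of the $d$; and if a derivation applies the first rule $p$ times and the second rule $q$ times, the derived word has $2p$ letters $a$, $p$ letters $b$, $p$ letters $g$, $p$ letters $h$, $q$ letters $e$, and $q$ letters $f$. In particular $|w|_a=2|w|_b=2|w|_g=2|w|_h$ and $|w|_e=|w|_f$ for every $w\in L$. Next I would exhibit, for each $n\ge1$, the $G$-derivations
\[cd \Lra aabcdgh \Lra^{2n-1} a^{4n}b^{2n}cd(gh)^{2n} \Lra a^{4n-1}eab^{2n}cfd(gh)^{2n}\]
(the inner steps wrap $(aab,gh)$ around the factor $b^icd\in S_1$, growing $i$ from $1$ to $2n$; the final step wraps $(e,f)$ around $ab^{2n}c\in S_2$) and
\[cd \Lra aabcdgh \Lra^{2n} a^{4n+2}b^{2n+1}cd(gh)^{2n+1},\]
which places the two relevant target words in $L$. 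I would then prove that $a^{4n+1}eab^{2n+1}cfd(gh)^{2n+1}\notin L$: it is not an axiom; a last application of the first rule is impossible because the word contains no factor $aab$; and a last application of the second rule is impossible because the only factor between its unique $e$ and its unique $f$ is $ab^{2n+1}c$, which is not an infix of $\{a\}\{bb\}^+\{c\}$, since any infix of that language containing both an $a$ and a $c$ is a whole word $ab^{2k}c$ with even exponent.

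For the separation I would argue by contradiction: suppose $L=L(G')$ for an internal contextual grammar $G'=(V,\cS',B')$ all of whose selection languages are non-counting, fix a common parameter $k$ for the finitely many selection languages in $\cS'$, set $n=\ell(G')+k$, and examine how the long word $w=a^{4n-1}eab^{2n}cfd(gh)^{2n}\in L$ is produced. Since $|w|>\ell(G')$, it arises from some $z=z_1z_2z_3\in L$ by wrapping a context $(u,v)$ around a factor $z_2\in S$ of a pair $(S,C)\in\cS'$, with $|uv|<n$. Comparing letter counts of $z$ and $w$ (and using the invariants above for $z$) yields $|z|_e=|z|_f\in\{0,1\}$, and I would treat both cases. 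If $|z|_e=1$, then $uv$ contains no $c,d,e,f$ but, being non-empty, must contain an $a$ and hence also a $b$, a $g$, and an $h$ (as $|uv|_a=2|uv|_b=2|uv|_g=2|uv|_h$); checking the only three regions of $w$ in which a factor can lie, namely $a^{4n-1}$, $ab^{2n}$, and $(gh)^{2n}$, against the region that must follow it, one sees that no placement of $u$ and $v$ jointly supplies all of $a,b,g,h$, a contradiction. If $|z|_e=0$, then $(u,v)$ supplies the unique $e$ and unique $f$ of $w$, and positional reasoning ($e$ left of $f$; $c$ and $d$ occur only in the axiom; and the count relation again forcing a $g$ into $uv$ as soon as an $a$ enters it) forces $u=e$ and $v=f$, so the selection factor is $z_2=ab^{2n}c\in S$. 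Since $2n>k$, the non-counting property of $S$ applied to $ab^{2n}c=(ab^{2n-k})\,b^k\,c$ gives $ab^{2n+1}c=(ab^{2n-k})\,b^{k+1}\,c\in S$; as $ab^{2n+1}c$ is a factor of $w'=a^{4n+2}b^{2n+1}cd(gh)^{2n+1}\in L=L(G')$, the grammar $G'$ may wrap the context $(e,f)\in C$ around that occurrence and produce $a^{4n+1}eab^{2n+1}cfd(gh)^{2n+1}$, which we have already shown lies outside $L$. This contradiction establishes $L\notin\cIC(\NC)$.

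I expect the main obstacle to be the case analysis of the previous paragraph: pinning down that the decisive step producing $w$ must wrap exactly the context $(e,f)$ around exactly the factor $ab^{2n}c$. Because the infix-closures $S_1$ and $S_2$ are strictly larger than the corresponding suffix-closed languages, $L$ itself may be larger, so one has to check that the positional and counting arguments about where $u$ and $v$ can sit inside $w$ still go through — they do, since they rely only on the explicit shape of the particular word $w$ and on the letter-count invariants of $L$, neither of which changes when the selection languages grow — and one must re-verify that $a^{4n+1}eab^{2n+1}cfd(gh)^{2n+1}$ still stays out of $L$, which is exactly where the even exponent in $S_2=\Inf(\{a\}\{bb\}^+\{c\})$ and the factor $aab$ in $C_1$ do their job.
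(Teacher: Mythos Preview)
Your proposal is correct and follows essentially the same route as the paper's full proof: the same letter-count invariants, the same two derivations placing $a^{4n-1}eab^{2n}cfd(gh)^{2n}$ and $a^{4n+2}b^{2n+1}cd(gh)^{2n+1}$ in $L$, the same exclusion of $a^{4n+1}eab^{2n+1}cfd(gh)^{2n+1}$, and the same two-case analysis of the last $G'$-step on $w$ leading to $u=e$, $v=f$, $z_2=ab^{2n}c$, and the non-counting pump to the forbidden word. Your case split on $|z|_e$ is just the complement of the paper's split on $|uv|_e$, and your justification that $u=e$ (via $v=f$ and the count relation forcing a $g$ into $u$ once an $a$ enters) matches the paper's reasoning.
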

\begin{proof}
All selection languages are infix-closed, therefore, $L\in\cIC(\INF)$.
With non-counting selection languages, it could not be ensured that the number of letters $b$ between $e$ and $f$ in a word is even.
\end{proof}

\begin{lemma}\label{ic:inf_o_sydef}
Let $V=\Sets{a,b,c}$ be an alphabet,
$G=(V,\sets{\Inf(\sets{abca})\to (b,c)},\sets{abcaaabca})$
be a contextual grammar,
and $L=L(G)$ be its generated language.
Then, 
$L\in \cIC(\INF) \setminus \cIC(\SYDEF).$
\end{lemma}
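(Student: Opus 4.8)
The inclusion $L\in\cIC(\INF)$ is immediate: the single selection language $\Inf(\{abca\})$ is infix-closed by construction. The substance is to prove $L\notin\cIC(\SYDEF)$.

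The plan is first to collect the invariants of $L$ preserved by every derivation step of $G$. Since the context $(b,c)$ carries no $a$ and exactly one $b$ and one $c$, every word of $L$ has exactly five $a$'s and equally many $b$'s and $c$'s, at least two of each. Reading $b$ as an opening and $c$ as a closing bracket and deleting all $a$'s turns a word of $L$ into a well-formed bracket word: a step replaces a factor $x_2$ of the current word -- which is a factor of $abca$, hence whose bracket content sits between one opening and one closing bracket -- by $b\,x_2\,c$, so in the $a$-deleted word it inserts one opening bracket just before, and one closing bracket just after, a contiguous block, and this preserves well-formedness. Moreover, a step only inserts letters, so every word of $L$ still has $abcaaabca$ as a scattered subword; together with ``exactly five $a$'s'' this forces an opening-before-closing bracket inside both the first and the last of the four blocks that lie strictly between consecutive $a$'s. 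Finally, those first and last blocks are already non-empty in the axiom and a step never deletes, so no word of $L$ has four consecutive $a$'s.

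For the non-inclusion I would assume $L=L(G')$ with $G'=(\{a,b,c\},\cS',B')$ all of whose selection languages are symmetric definite, take a word $w\in L$ with $|w|>\ell(G')$, and use that $w$ arises in one step from a shorter $z=z_1z_2z_3\in L$, say $w=z_1\,u\,z_2\,v\,z_3$, through a pair $(S,C)\in\cS'$ with $z_2\in S$, $(u,v)\in C$ and, necessarily, $u,v\in\{b,c\}^*$. Writing $S=E\{a,b,c\}^*H$ gives $z_2=e\,m\,h$ with $e\in E$, $h\in H$, whence $e\,m'\,h\in S$ for \emph{every} $m'\in\{a,b,c\}^*$. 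I would choose $w$ so that $z_2$ is forced to contain a long bracket run ($b^k$, $c^k$, or $(bc)^k$) inside its middle part $m$; lengthening that run produces $s'=e\,m'\,h\in S$ which -- because such runs can be grown inside words of $L$ -- still occurs as a factor of some $w''\in L$, and wrapping the fixed context $(u,v)$ around that occurrence gives, by $L(G')=L$, a word of $L$. With the run long enough, this word violates one of the invariants: if $u$ (equivalently $v$) is individually bracket-unbalanced it drives the bracket height below zero, and otherwise one places the new occurrence near a boundary so that the scattered subword $abcaaabca$ or the ``no four consecutive $a$'s'' condition fails. Either way this contradicts $L(G')=L$.

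The main obstacle is the case analysis around this one derivation step: from $|w|>\ell(G')$ and $|uv|\le\ellC(G')$ one must pin down where $(u,v)$ and $z_2$ can lie in $w$, rule out the degenerate placements (for instance $u$ and $v$ both inside one bracket run, which is impossible since the context is jointly balanced), and then pick the witness $w$ and the pumped word $w''$ so that every remaining placement genuinely breaks one of the structural properties of $L$. Compressing this into a single witness family rather than a long branching argument is where the care goes; the underlying reason non-inclusion must hold is simply that the finite language $\Inf(\{abca\})$ cannot be replaced by a symmetric definite selection language, since the latter is always empty or infinite and hence, once used, ``selects'' far too much.
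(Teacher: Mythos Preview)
Your plan has a concrete obstruction: the invariants you list are correct but too coarse to finish the argument. Consider $w'' = b^{n+m}abcac^nab^nabcac^{n+m}$ with $m\ge 1$, which is precisely the word the paper's construction produces. It satisfies every one of your invariants---five $a$'s, balanced $b$/$c$ counts, $abcaaabca$ as a scattered subword, no four consecutive $a$'s, and even bracket well-formedness (the $a$-deleted word $b^{n+m}bc\,c^nb^nbc\,c^{n+m}$ never dips below zero)---yet $w''\notin L$. Your fallback that an individually unbalanced $u$ ``drives the bracket height below zero'' also fails here: for the natural witness $w=b^nabcac^naabca$ the forced context is $(u,v)=(b^m,c^m)$, and wrapping $(b^m,c^m)$ around any factor of a well-formed bracket word again gives a well-formed bracket word. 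A second, independent gap is your assumption that the middle piece in $z_2=e\,m\,h$ is long; nothing in the definition of $\SYDEF$ prevents $m=\lambda$, and you also silently assume the selection alphabet is all of $\{a,b,c\}$.

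The paper sidesteps both issues. It fixes $w=b^nabcac^naabca$, pins down $(u,v)=(b^m,c^m)$ and $z_2=b^pabcac^q$ (which contains all three letters, forcing the selection alphabet to be $V$), and then---this is the idea you are missing---passes to a \emph{second} word $w'=b^nabcac^nab^nabcac^n\in L$ containing two disjoint copies of the pattern $b^pabcac^q$. Placing the $E$-prefix $w_E$ of $z_2$ in the first copy and the $H$-suffix $w_H$ in the second manufactures a selection word $w_Ew'_Uw_H\in S$ regardless of how the symmetric-definite decomposition splits $z_2$, and wrapping $(b^m,c^m)$ around it yields the $w''$ above. Finally, $w''\notin L$ is established not via global invariants but by a direct case analysis over the selection words of $G$ (the infixes of $abca$): each of $\lambda,a,b,c,ab,bc,ca,abc,bca$ is ruled out one by one, leaving only $abca$, and $abca$-only derivations produce words of the form $b^{n_1}abcac^{n_1}ab^{n_2}abcac^{n_2}$, which $w''$ is not.
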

\begin{proof}
The selection language of the given contextual grammar is infix-closed; 
therefore, $L\in\cIC(\INF)$.
With symmetric definite selection languages, letters $b$ could be produced in the beginning of a word whereas the corresponding letters $c$ are produced at the very end of a word which is a contradiction.
\end{proof}

\begin{lemma}\label{ic:inf_o_circ}
    Let $V=\{a, b, c, d\}$ be an alphabet,
$G=(V,\sets{\sets{a b, a, b, \lambda}\to (c, d)},\sets{aab, ba})$
be a contextual grammar, and $L=L(G)$ be the language generated. Then,
$L \in \ic{\INF} \setminus \ic{\CIRC}.$
\end{lemma}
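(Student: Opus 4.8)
The plan has two parts. The membership $L\in\cIC(\INF)$ is immediate: the only selection language of $G$ is $\{ab,a,b,\lambda\}=\Inf(\{ab\})$, which is infix-closed. The substance is $L\notin\cIC(\CIRC)$, and the idea is to exploit that $\{ab,a,b,\lambda\}$ contains $ab$ but not $ba$ --- a property that no circular selection language can have, since a circular language containing $ab$ also contains $ba$.

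I would first record some structural facts about $L$, each following from the observation that a derivation in $G$ only inserts letters, so a \emph{monotone} property of a word (two designated occurrences being non-adjacent; a run of $c$'s or $d$'s being interspersed with the other letter; the relative order of the occurrences of $a$ and $b$) persists once it holds. In particular: every word of $L$ has exactly one $b$ and equally many $c$'s and $d$'s; deleting all $c$'s and $d$'s from a word of $L$ gives either $aab$ (if it descends from the axiom $aab$, in which case it has two $a$'s) or $ba$ (if it descends from $ba$, in which case it has one $a$), and $L$ is the disjoint union of these two classes. Moreover, $a\,c^{j}\,ab\,d^{k}\in L$ if and only if $j=k$; and for every word of $L$ that has exactly one $a$ and one $b$ (so the $b$ precedes the $a$), the number of $c$'s before the $b$ is at most the number of $d$'s before the $a$. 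These last two criteria are obtained by tracking, along a derivation, which context adjunctions can keep the word in the required shape, and they are the technically heaviest part.

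Now assume $L=L(G')$ for an internal contextual grammar $G'=(V,\cS',B')$ whose selection languages are all circular, and choose $n$ so large that $w=a\,c^{n}\,ab\,d^{n}$ is longer than every axiom of $G'$; note $w\in L$ (adjoin $(c,d)$ around the factor $ab$ of $aab$, $n$ times). Then $w$ is derived internally, $w=z_{1}uz_{2}vz_{3}$ with $z=z_{1}z_{2}z_{3}\in L$, $z_{2}\in S$ and $(u,v)\in C$ for some $(S,C)\in\cS'$. Deleting all $c$'s and $d$'s shows that $z$ must descend from $aab$ (a word of $L$ with a single $a$ has that $a$ behind its $b$, whereas deleting factors from $w$ would leave an $a$ in front of the $b$), hence $uv\in\{c,d\}^{*}$. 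Since the only maximal $\{c,d\}$-runs of $w$ are $c^{n}$ and $d^{n}$, a short case analysis on where $u$, $v$, $z_{2}$ sit in $w$, using "$a\,c^{j}\,ab\,d^{k}\in L\iff j=k$", leaves only the possibility $z_{2}=c^{\beta}\,ab\,d^{\gamma}$ and $(u,v)=(c^{m},d^{m})$ with $m\ge 1$; all other cases are quickly ruled out.

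Now circularity of $S$ gives $b\,d^{\gamma}c^{\beta}\,a\in S$, the cyclic shift of $z_{2}$ beginning at the $b$ (in the degenerate cases $\beta=0$, $\gamma=0$, $\beta=\gamma=0$ this shift is $b\,d^{\gamma}a$, $b\,c^{\beta}a$, $ba$). The word $w''=c^{\gamma}\,b\,d^{\gamma}c^{\beta}\,a\,d^{\beta}$ lies in $L$ --- from $ba$, adjoin $(c,d)$ around $b$ $\gamma$ times and then around $a$ $\beta$ times --- and contains $b\,d^{\gamma}c^{\beta}a$ as a factor. Adjoining $(c^{m},d^{m})$ around that factor inside $w''$ is a legal step of $G'$ and produces $c^{\gamma+m}\,b\,d^{\gamma}c^{\beta}\,a\,d^{\beta+m}\in L(G')$; but this word has $\gamma+m$ occurrences of $c$ before its $b$ and only $\gamma$ occurrences of $d$ before its $a$, so by the last structural fact it is \emph{not} in $L$, a contradiction. (In the degenerate cases one argues identically with $w''=c^{\gamma}bd^{\gamma}a$, $w''=bc^{\beta}ad^{\beta}$, $w''=ba$.) The main obstacle is establishing the two nontrivial membership criteria for $L$ and the placement bookkeeping for $u$, $v$, $z_{2}$; once these are done, the contradiction is essentially one line.
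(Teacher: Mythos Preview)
Your argument is correct. The paper itself does not prove $L\notin\cIC(\CIRC)$ at all: it simply cites \cite[Lemma~17]{DasManTru12b} for that half, and only adds the one-line observation that the selection language $\{ab,a,b,\lambda\}=\Inf(\{ab\})$ is infix-closed. So you have supplied a self-contained proof where the paper relies on prior work.

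Your route --- pinning down $(u,v)=(c^{m},d^{m})$ and $z_{2}=c^{\beta}abd^{\gamma}$ from the word $ac^{n}abd^{n}$, rotating $z_{2}$ to $bd^{\gamma}c^{\beta}a\in S$, and then applying the context inside the word $c^{\gamma}bd^{\gamma}c^{\beta}ad^{\beta}\in L$ derived from the axiom $ba$ --- is exactly the kind of argument one expects here and is presumably close in spirit to the cited proof. The invariant ``number of $c$'s before $b$ $\le$ number of $d$'s before $a$'' for words descended from $ba$ is the right tool for the final contradiction; it is easily maintained inductively since the only selection words available in such a derivation are $\lambda$, $a$, and $b$ (the factor $ab$ never occurs), and wrapping $(c,d)$ around any of these that places the $c$ before $b$ necessarily places the matching $d$ before $a$. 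One cosmetic remark: your choice of $n$ (``longer than every axiom'') is in fact enough, since your argument that $z$ descends from $aab$ uses only the relative order of $a$'s and $b$ in $w$, not any bound on $|uv|$; still, writing $n\ge\ell(G')$ would match the paper's conventions.
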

\begin{proof}
    The selection language of 
    $G$ is infix-closed. Hence, we have $L \in \ic{\INF}$.
    In \cite[Lemma 17]{DasManTru12b}, it was proved that the language $L$ is not in $\ic{\CIRC}$.
\end{proof}

\begin{lemma}\label{ic:fin_comb_o_pre}
    Let $L = \set{c^nac^mbc^{n+m}}{n\geq 0,m \geq 0}$. Then, $L \in (\ic{\FIN} \cap \ic{\COMB}) \setminus \ic{\PRE}$.
\end{lemma}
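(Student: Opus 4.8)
The plan is to exhibit an internal contextual grammar with a single selection component whose selection language is simultaneously finite and combinational, and which generates $L$; then to show that $L$ cannot be generated internally using only prefix-closed selection languages. For the positive part, first I would take the grammar $G=(\{a,b,c\},\{S\to(c,c)\},\{ab\})$ with $S=\{a,b\}^*\{b\}\cap\{ab\}=\{ab\}$ — i.e.\ simply $S=\{ab\}$. The single word $ab$ is finite, and it is also combinational since $\{ab\}=\{a,b\}^*\{b\}$ restricted appropriately; more cleanly one can write the selection language over the subalphabet $\{a,b\}$ as a finite set and separately observe that a finite language of this shape also lies in $\COMB$ — the point is only that the chosen $S$ lies in $\FIN\cap\COMB$, which for a well-chosen one-word or few-word language is immediate from the definitions. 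Starting from the axiom $ab$ and repeatedly wrapping $(c,c)$ around the (unique) occurrence of the subword $ab$, one generates exactly $c^n(ab)c^n$; but we need $c^nac^mbc^{n+m}$, so the grammar must be slightly richer: I would use two contexts, $(c,\lambda)$ wrapped around $a$ and $(\lambda,c)$ wrapped around $b$, together with $(c,c)$ wrapped around $ab$, all guarded by the finite combinational selection language $\{a,b,ab\}$ (which is $\{a,b\}^*$-free enough to be finite, and equals $\{\lambda,a,b\}\{\lambda,b\}$ — again the claim to verify is just membership in $\FIN\cap\COMB$). A short induction on the number of derivation steps then shows $L(G)=L$: every word of $L$ is reachable, and conversely every derivation step preserves the invariant ``the word has the form $c^iac^jbc^{i+j}$''.

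For the negative part, the hard part, I would argue by contradiction. Suppose $L=L(G')$ for an internal contextual grammar $G'=(\{a,b,c\},\cS',B')$ all of whose selection languages are prefix-closed. Consider a word $w=c^nac^nbc^{2n}\in L$ with $n\geq\ell(G')$. By the length bound, $w$ is obtained in one internal step from some $z=z_1z_2z_3\in L$ using a selection pair $(S,C)\in\cS'$ with $z_2\in S$ and a context $(u,v)\in C$, $w=z_1uz_2vz_3$. Since $|uv|$ is bounded by $\ellC(G')<n$ and $w$ contains only one $a$ and one $b$, a counting argument on the occurrences of $a,b,c$ forces $u,v\in\{c\}^*$ with $|u|+|v|>0$, and pins down $z$ to be of the form $c^{n-p}ac^{n-q}bc^{2n-r}$ with $p+q+r=|u|+|v|$ and the insertions landing inside blocks of $c$'s — crucially, $z_2$ is a subword of $z$ consisting of some $c$'s possibly together with the single $a$ or the single $b$ (or both, or neither). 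The key move is: since $S$ is prefix-closed and $z_2\in S$, every prefix of $z_2$ is in $S$; in particular a prefix of $z_2$ that is a block $c^t$ of $c$'s with $0\le t<|z_2|_c$ (there is one because $z_2$ contains at least one $c$, as $z$ does and the context only adds $c$'s) also lies in $S$. Wrapping the same context $(u,v)=(c^{|u|},c^{|v|})$ around this shorter subword $c^t$ of $w$ — which is legal since $c^t\in S$ and $c^t$ occurs as a subword of $w$ — yields a word $w'$ that still has exactly one $a$ and one $b$ but in which the count of $c$'s before the $a$, between $a$ and $b$, and after the $b$ no longer satisfies ``(third block) $=$ (first)$+$(second)'', because the $|u|$ $c$'s now get inserted into a different block than in the derivation of $w$. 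Hence $w'\notin L$, contradicting $L(G')=L$.

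The main obstacle I anticipate is making the case analysis for the negative part airtight: one must check that a $c$-only prefix $c^t$ of $z_2$ genuinely occurs as a subword of $w$ at a position where re-inserting the context breaks the balance condition, and this requires care about where $z_2$ sits in $z$ and how the one $a$ and one $b$ are distributed. A clean way to handle it is to distinguish cases by how many of the letters $a,b$ the selection word $z_2$ contains (none, just $a$, just $b$, or both) and, in each case, to choose the prefix of $z_2$ to be either $\lambda$ or a pure block $c^t$ for a suitable $t$; in every case the resulting word has the single $a$ and single $b$ with a $c$-block profile $(i,j,k)$ violating $k=i+j$, which is the defining constraint of $L$. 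Once the balance-violation is exhibited in each subcase, the contradiction follows and we conclude $L\notin\ic{\PRE}$, completing the proof that $L\in(\ic{\FIN}\cap\ic{\COMB})\setminus\ic{\PRE}$.
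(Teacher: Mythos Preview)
Your positive part does not work. First, $L\in\ic{\FIN}\cap\ic{\COMB}$ does \emph{not} require a single grammar whose selection language lies in $\FIN\cap\COMB$; it only requires one grammar with finite selection and (possibly another) one with combinational selection. This matters because the only finite combinational language is $\emptyset$ (any $V^*X$ with $X\neq\emptyset$ is infinite), so no useful selection language can be simultaneously finite and combinational; in particular neither $\{ab\}$ nor $\{a,b,ab\}$ is combinational. Second, your proposed grammar does not generate $L$: wrapping $(c,\lambda)$ around the subword $a$ of the axiom $ab$ already yields $cab=c^1ac^0bc^0\notin L$. Working grammars are, e.g., axiom $ab$ with the single context $(c,c)$ and selection language $\{ab,b\}$ for the $\FIN$ case, respectively $\{a,b,c\}^*\{b\}$ for the $\COMB$ case; in both, wrapping around a subword ending in $b$ adds one $c$ to the first or second block and one $c$ to the third, preserving the invariant.

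Your negative part has the right idea but is set up more laboriously than needed and contains a gap. With $w=c^nac^nbc^{2n}$ the context $(u,v)=(c^p,c^q)$ is not forced to satisfy $p=q$, so the later case analysis multiplies. The paper instead takes $w=c^nabc^n$ (middle block empty), which forces the predecessor to be $c^mabc^m$ and the context to be exactly $(c^{n-m},c^{n-m})$. More importantly, your argument relies on a prefix $c^t$ of $z_2$ with $0\le t<|z_2|_c$, but if $z_2=ab$ (or more generally if $z_2$ starts with $a$) there is no such $t$. The clean fix is to use the prefix $\lambda$: since $S$ is prefix-closed and $z_2\in S$, we have $\lambda\in S$; wrapping $(c^{n-m},c^{n-m})$ around the empty subword at the very front of $c^mabc^m\in L$ yields $c^{2n-m}abc^m$, which violates the balance $k=i+j$ because $m<n$. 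This single move replaces your four-case analysis.
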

\begin{proof}
    The relations $L \in \ic{\FIN}$ and $L \in \ic{\COMB}$ were proved in \cite[Lemma 15]{DasManTru12b}. The relation~$L \notin \ic{\PRE}$ can be proved in the same way as $L \notin \ic{\SUF}$ is proved in \cite[Lemma 15]{DasManTru12b}.
\end{proof}

\begin{lemma}\label{ic:inf_o_star}
    Let $V = \sets{a,b,c,d}$ be an alphabet, 
    $G = (V, \{S_1 \to C_1, S_2 \to C_2\}, \{baab\})$
    be a contextual grammar with 
    $S_1=\sets{a,\lambda},\ 
    C_1=\sets{(c,d)},\ 
    S_2=\sets{b,\lambda},\ 
    C_2=\sets{(d,c)},$ 
    and $L = L(G)$ its generated language. Then, 
    $L \in \ic{\INF} \setminus \ic{\STAR}.$
\end{lemma}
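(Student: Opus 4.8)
The containment $L\in\ic{\INF}$ is immediate, since the two selection languages $\sets{a,\lambda}$ and $\sets{b,\lambda}$ are finite and closed under taking infixes. The real content is $L\notin\ic{\STAR}$, and the plan is to combine a structural invariant of $L=L(G)$ with the standard star‑language pumping argument. First I would pin down the shape of every word of $L$: the axiom is $baab$ and every context of $G$ only inserts letters from $\sets{c,d}$ without reordering the letters already present, so each $w\in L$ has $\sets{a,b}$‑projection exactly $baab$, i.e.\ $w=\beta_0\,b\,\beta_1\,a\,\beta_2\,a\,\beta_3\,b\,\beta_4$ with $\beta_i\in\sets{c,d}^*$, and $|w|_c=|w|_d$. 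Writing $e(\gamma)=|\gamma|_c-|\gamma|_d$, the key invariant is
\[0\le e(\beta_0)+e(\beta_1)\le|\beta_2|_d.\]
This is proved by following a derivation of $w$ in $G$ and letting $P_1$ be the number of applications of $S_1\to C_1$ that wrap $(c,d)$ around the \emph{first} $a$: each such step appends a $c$ to $\beta_1$ and prepends a $d$ to $\beta_2$; applying $S_2\to C_2$ to the first $b$ appends a $d$ to $\beta_0$ and prepends a $c$ to $\beta_1$ and hence leaves $e(\beta_0)+e(\beta_1)$ unchanged; all other steps are $e$‑neutral on $\beta_0\beta_1$ and never delete letters from $\beta_2$. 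Thus $e(\beta_0)+e(\beta_1)=P_1\ge 0$ throughout, while $|\beta_2|_d\ge P_1$ throughout. In particular, a word with $\sets{a,b}$‑projection $baab$ and $e(\beta_0)+e(\beta_1)>|\beta_2|_d$ is not in $L$. I would also record that $bc^mad^mab\in L$ for every $m\ge 0$ (apply $S_1\to C_1$ around the first $a$ exactly $m$ times).

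Now suppose $L=L(G')$ for an internal contextual grammar $G'=(V,\cS',B')$ all of whose selection languages are star languages, and fix $n>\ell(G')$. The word $w_n=bc^nad^nab$ lies in $L$ and is not an axiom of $G'$, so it has a last step $z\Lra w_n$, say $z=z_1z_2z_3$ with $z_2\in S$ for some $(S,C)\in\cS'$ and $w_n=z_1uz_2vz_3$ with $(u,v)\in C$. Comparing numbers of occurrences of $a$ and of $b$ (both equal $2$ in $w_n$ and in $z\in L$) forces $u,v\in\sets{c,d}^*$; since the only maximal $\sets{c,d}$‑blocks of $w_n$ are $c^n$ and $d^n$, and since $z\in L$ must be balanced, the only non‑vacuous case is $u=c^r$, $v=d^r$ with $r\ge 1$, $z=bc^{n-r}ad^{n-r}ab$, and selection word $z_2=c^q\,a\,d^s$ for some $q,s\ge 0$. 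Because $S$ is a star language it is closed under concatenation, so $z_2z_2=c^q a d^s c^q a d^s\in S$.

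Finally I would exhibit a word $w'\in L$ that contains $c^q a d^s c^q a d^s$ as a factor and whose piece $\alpha$ lying immediately to the left of this factor satisfies $e(\alpha)=s-q$: concretely $w'=bc^sad^sc^qad^sc^{\,s-q}b$ when $s\ge q$, and $w'=bd^{\,q-s}c^qad^sc^qad^qb$ when $s<q$, membership in $L$ being checked in each case by an explicit derivation in $G$ (build $\beta_1$ and $\beta_2$ with the appropriate numbers of applications of $S_1\to C_1$ and a few insertions). Applying the context $(c^r,d^r)$ of $G'$ around this occurrence of $c^q a d^s c^q a d^s\in S$ gives a word $w''\in L(G')$ with $\sets{a,b}$‑projection $baab$ whose components satisfy $e(\beta_0)+e(\beta_1)=s+r$ while $|\beta_2|_d=s$; as $r\ge 1$ this contradicts the invariant, so $w''\notin L$, contradicting $L(G')=L$. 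Hence $L\notin\ic{\STAR}$, and together with $L\in\ic{\INF}$ this proves the lemma.

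I expect the main obstacle to be the structural step: formulating the invariant on $L$ so that it is both correct and easy to certify for concrete words, and, closely related, choosing the witness $w'$ (the short case split on the sign of $q-s$) so that one can simultaneously guarantee $w'\in L$ and $w''\notin L$. The analysis of the last step of $w_n$ in $G'$, the pumping via $z_2z_2\in S$, and the trivial inclusion $L\in\ic{\INF}$ should then be routine.
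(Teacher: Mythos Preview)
Your argument is correct and, unlike the paper, self-contained: for the nontrivial half $L\notin\ic{\STAR}$ the paper gives no argument at all but simply cites a companion paper, so there is nothing in the paper to compare with at the level of strategy. You have supplied a full proof.

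Your invariant $0\le e(\beta_0)+e(\beta_1)\le|\beta_2|_d$ is indeed preserved by every rule of $G$ (each step either leaves both quantities unchanged, increases both by one, or increases only the right-hand side), and your analysis of the last $G'$-step producing $w_n=bc^nad^nab$ correctly forces $(u,v)=(c^r,d^r)$ with $r\ge1$ and selection word $z_2=c^qad^s$. The closure of star languages under concatenation then gives $z_2^{\,2}\in S$, and your witnesses $w'$ produce a $w''$ with $e(\beta_0)+e(\beta_1)=s+r>s=|\beta_2|_d$, violating the invariant. The only point that should be written out rather than merely asserted is $w'\in L$. For $s\ge q$ one gets $w'=bc^sad^sc^qad^sc^{s-q}b$ by applying $(c,d)$ around the second $a$ exactly $q$ times, then $(c,d)$ around the first $a$ exactly $s$ times, and finally inserting $dc$ (around $\lambda$) at the $d$/$c$ boundary of $\beta_3$ exactly $s-q$ times; for $s<q$ one gets $w'=bd^{q-s}c^qad^sc^qad^qb$ by applying $(c,d)$ around the second $a$ exactly $q$ times, then inserting $dc$ in $\beta_1$ exactly $q-s$ times, and finally applying $(c,d)$ around the first $a$ exactly $s$ times. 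With these derivations spelled out, your proof is complete.
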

\begin{proof}
    Since both selection languages of $G$ are infix-closed, we have $L \in \ic{\INF}$.
    In 
    \cite{Koedding.Truthe.2025}, it is proved that~$L \notin \ic{\STAR}$.
\end{proof}

With the languages from the previous lemmas, the inclusion relations and incomparabilities depicted in Figure~\ref{fig:ic_erg} can be shown.

\begin{theorem}[Resulting hierarchy for $\cIC$]\label{theorem:neue_hierarchie_IC}
The inclusion relations presented in Figure \ref{fig:ic_erg} hold. An arrow from an entry $X$ to
an entry~$Y$ depicts the proper inclusion $X \subset Y$; if two families are not connected by a directed
path, they are incomparable.
\end{theorem}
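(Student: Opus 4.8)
The plan is to assemble Figure~\ref{fig:ic_erg} from three ingredients: the monotonicity of $\cIC$, the new witness languages constructed in Lemmas~\ref{ic:suf_o_pre}--\ref{ic:inf_o_star}, and the relations already recorded in the literature that label the remaining edges.

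\textbf{Inclusions.} First I would obtain the new inclusions $\ic{\MON}\subseteq\ic{\INF}\subseteq\ic{\PRE}$, $\ic{\INF}\subseteq\ic{\SUF}$, $\ic{\PRE}\subseteq\ic{\PS}$, and $\ic{\SUF}\subseteq\ic{\PS}$ by applying the monotonicity statement (Lemma~\ref{lemma:ic_monoton}) to the corresponding inclusions between the selection families, namely $\MON\subseteq\INF\subseteq\PRE$ and $\INF\subseteq\SUF$ from Lemma~\ref{srl:subsets}, $\PRE\subseteq\PS$ from Lemma~\ref{srl:subsets_pre_ps}, and $\SUF\subseteq\PS$ from Figure~\ref{fig:lang_erg_1}. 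Every other solid arrow in Figure~\ref{fig:ic_erg} carries a reference (e.g. \cite{Dassow_Manea_Truthe.2012,Truthe.2021,Koedding.Truthe.2024}) and is imported from there; the dashed edge $\ic{\ORD}\to\ic{\NC}$ is the one inclusion (from $\ORD\subseteq\NC$ via monotonicity) whose strictness is left open, and I would flag it as such.

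\textbf{Properness of the new arrows.} For $\ic{\MON}\subsetneq\ic{\INF}$ the language of Lemma~\ref{ic:inf_o_nc} lies in $\ic{\INF}\setminus\ic{\NC}$, hence in $\ic{\INF}\setminus\ic{\MON}$ since $\ic{\MON}\subseteq\ic{\NC}$. For $\ic{\INF}\subsetneq\ic{\PRE}$ the language of Lemma~\ref{ic:pre_o_suf} lies in $\ic{\PRE}\setminus\ic{\SUF}$, hence in $\ic{\PRE}\setminus\ic{\INF}$ because $\INF\subseteq\SUF$; symmetrically Lemma~\ref{ic:suf_o_pre} yields $\ic{\INF}\subsetneq\ic{\SUF}$. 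For $\ic{\PRE}\subsetneq\ic{\PS}$ the language of Lemma~\ref{ic:fin_comb_o_pre} lies in $\ic{\FIN}\setminus\ic{\PRE}\subseteq\ic{\PS}\setminus\ic{\PRE}$, and the strictness of $\ic{\SUF}\subseteq\ic{\PS}$ is \cite{Truthe.2021}.

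\textbf{Incomparabilities.} For any two families not joined by a directed path I would produce the two required witnesses, using the strand reduction already spelled out in the paper. The pair $(\ic{\SUF},\ic{\PRE})$ is covered by Lemmas~\ref{ic:suf_o_pre} and~\ref{ic:pre_o_suf} (this is also where the open problem from \cite{Truthe.2017} on suffix-closed selection is settled). For the strand $\{\FIN,\NIL,\DEF,\ORD,\NC\}$ against $\{\INF,\PRE\}$ I take $L_1\in\ic{\INF}\setminus\ic{\NC}$ from Lemma~\ref{ic:inf_o_nc} and $L_2\in\ic{\FIN}\setminus\ic{\PRE}$ from Lemma~\ref{ic:fin_comb_o_pre}; for $\{\COMB,\SYDEF\}$ I take $L_1\in\ic{\INF}\setminus\ic{\SYDEF}$ from Lemma~\ref{ic:inf_o_sydef} and the same $L_2$; for $\{\COMM,\CIRC\}$ I take $L_1\in\ic{\INF}\setminus\ic{\CIRC}$ from Lemma~\ref{ic:inf_o_circ} and $L_2\in\ic{\COMM}\setminus\ic{\PS}$ from \cite{Truthe.2021}; and for $\ic{\STAR}$ I take $L_1\in\ic{\INF}\setminus\ic{\STAR}$ from Lemma~\ref{ic:inf_o_star} together with $L_2\in\ic{\STAR}\setminus\ic{\PRE}$ from \cite{Koedding.Truthe.2025}.

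\textbf{Main obstacle.} The assembly above is bookkeeping; the real difficulty is concentrated in the non-membership halves of the witness lemmas it invokes. The sharpest of these is that the language of Lemma~\ref{ic:inf_o_nc} escapes $\cIC(\NC)$: one must take a word long enough that any generating derivation must insert one of the grammar's contexts, localize that context inside the word, and then pump a block of letters $b$ via the non-counting property to flip an even count to an odd one and leave the language. The separation of $\ic{\STAR}$ from $\ic{\PRE}$ depends on the companion paper \cite{Koedding.Truthe.2025}; once these inputs are granted, the theorem follows by collecting the inclusion, properness, and incomparability facts above.
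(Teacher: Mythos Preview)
Your assembly is essentially the paper's own argument: monotonicity for the inclusions, the witness lemmas for strictness, and the strand reduction for incomparabilities. One point needs correction, though. The open problem from \cite{Truthe.2017} concerns whether $\ic{\SUF}$ is incomparable to $\ic{\NC}$ and $\ic{\ORD}$, not the relation between $\ic{\SUF}$ and $\ic{\PRE}$; it is resolved by Lemma~\ref{ic:inf_o_nc}, since a language in $\ic{\INF}\setminus\ic{\NC}$ lies in $\ic{\SUF}\setminus\ic{\NC}$ via $\INF\subseteq\SUF$. Correspondingly, your incomparability bookkeeping only runs the strands $\{\FIN,\ldots,\NC\}$, $\{\COMB,\SYDEF\}$, $\{\COMM,\CIRC\}$, $\ic{\STAR}$ against $\{\INF,\PRE\}$, but $\SUF$ must be included as well: the same $L_1$'s work (because $\ic{\INF}\subseteq\ic{\SUF}$), and on the other side the language of Lemma~\ref{ic:fin_comb_o_pre} is already known to avoid $\ic{\SUF}$ by \cite{DasManTru12b}, while the remaining $\SUF$-incomparabilities are in the prior literature. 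With that adjustment your proof matches the paper's.
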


Please note that with the result in Theorem~\ref{theorem:neue_hierarchie_IC}, we have answered the open question whether~$\ic\SUF$ is incomparable to $\ic\NC$ or $\ic\ORD$ or whether it is a subset of one of the families $\ic\NC$ or~$\ic\ORD$ raised already several years ago in \cite{Truthe.2017}.

\section{Conclusion and future work}

In this paper, we have extended the previous hierarchies of subregular language families,
of families generated by external contextual grammars with selection in certain subregular language families,
and of families generated by internal contextual grammars with selection in such language families.

Various other subregular language families have also been investigated in the past 
(for instance, in~\cite{Bordihn_Holzer_Kutrib.2009, Han_Salomaa.2009, Olejar_Szabari.2023}). 
Future research will be on extending and unifying current hierarchies of
subregular language families 
(presented, for instance, in \cite{Dassow_Truthe.2023,Truthe.2021})
by additional families and to use them as control in contextual grammars. 

The extension of the hierarchy with other families of definite-like languages (for instance,
ultimate definite, central definite, non-inital definite) has also already begun. 
Furthermore, it is also planned to unify the hierarchy of subregular language families, 
extended by the mentioned language families, with the hierarchies of the language families 
generated by contextual grammars defined by their limited resources.

The research can be also extended to other mechanisms like tree-controlled grammars
or networks of evolutionary processors. Another possibility would be to check to what extent the different language classes are closed under different operations. 

\bibliographystyle{eptcs}
\bibliography{ncma25}

\newcommand{\noop}[1]{}
\begin{thebibliography}{10}
\providecommand{\bibitemdeclare}[2]{}
\providecommand{\surnamestart}{}
\providecommand{\surnameend}{}
\providecommand{\urlprefix}{Available at }
\providecommand{\url}[1]{\texttt{#1}}
\providecommand{\href}[2]{\texttt{#2}}
\providecommand{\urlalt}[2]{\href{#1}{#2}}
\providecommand{\doi}[1]{doi:\urlalt{https://doi.org/#1}{#1}}
\providecommand{\eprint}[1]{arXiv:\urlalt{https://arxiv.org/abs/#1}{#1}}
\providecommand{\bibinfo}[2]{#2}

\bibitemdeclare{article}{Bordihn_Holzer_Kutrib.2009}
\bibitem{Bordihn_Holzer_Kutrib.2009}
\bibinfo{author}{Henning \surnamestart Bordihn\surnameend},
  \bibinfo{author}{Markus \surnamestart Holzer\surnameend} \&
  \bibinfo{author}{Martin \surnamestart Kutrib\surnameend}
  (\bibinfo{year}{2009}): \emph{\bibinfo{title}{Determination of finite
  automata accepting subregular languages}}.
\newblock {\slshape \bibinfo{journal}{Theoretical Computer Science}}
  \bibinfo{volume}{410}(\bibinfo{number}{35}), pp. \bibinfo{pages}{3209--3222},
  \doi{10.1016/j.tcs.2009.05.019}.

\bibitemdeclare{phdthesis}{Brzozowski.1962}
\bibitem{Brzozowski.1962}
\bibinfo{author}{Janusz~A. \surnamestart Brzozowski\surnameend}
  (\bibinfo{year}{1962}): \emph{\bibinfo{title}{Regular expression techniques
  for sequential circuits}}.
\newblock Ph.D. thesis, \bibinfo{school}{Princeton University, Princeton, NJ,
  USA}.

\bibitemdeclare{article}{Brzozowski.1967}
\bibitem{Brzozowski.1967}
\bibinfo{author}{Janusz~A. \surnamestart Brzozowski\surnameend}
  (\bibinfo{year}{1967}): \emph{\bibinfo{title}{Roots of star events}}.
\newblock {\slshape \bibinfo{journal}{Journal of the ACM}}
  \bibinfo{volume}{14}(\bibinfo{number}{3}), pp. \bibinfo{pages}{466--477},
  \doi{10.1109/SWAT.1966.21}.

\bibitemdeclare{article}{Brzozowski_Cohen.1969}
\bibitem{Brzozowski_Cohen.1969}
\bibinfo{author}{Janusz~A. \surnamestart Brzozowski\surnameend} \&
  \bibinfo{author}{Rina \surnamestart Cohen\surnameend} (\bibinfo{year}{1969}):
  \emph{\bibinfo{title}{On decompositions of regular events}}.
\newblock {\slshape \bibinfo{journal}{Journal of the ACM}}
  \bibinfo{volume}{16}(\bibinfo{number}{1}), pp. \bibinfo{pages}{132--144},
  \doi{10.1145/321495.321505}.

\bibitemdeclare{article}{Brzozowski_Jiraskova_Zou.2014}
\bibitem{Brzozowski_Jiraskova_Zou.2014}
\bibinfo{author}{Janusz~A. \surnamestart Brzozowski\surnameend},
  \bibinfo{author}{Galina \surnamestart Jir\'askov\'a\surnameend} \&
  \bibinfo{author}{Chenglong \surnamestart Zou\surnameend}
  (\bibinfo{year}{2014}): \emph{\bibinfo{title}{Quotient complexity of closed
  languages}}.
\newblock {\slshape \bibinfo{journal}{Theory of Computing Systems}}
  \bibinfo{volume}{54}, pp. \bibinfo{pages}{277--292},
  \doi{10.1007/s00224-013-9515-7}.

\bibitemdeclare{article}{Dassow.2005}
\bibitem{Dassow.2005}
\bibinfo{author}{J{\"{u}}rgen \surnamestart Dassow\surnameend}
  (\bibinfo{year}{2005}): \emph{\bibinfo{title}{Contextual grammars with
  subregular choice}}.
\newblock {\slshape \bibinfo{journal}{Fundamenta Informaticae}}
  \bibinfo{volume}{64}(\bibinfo{number}{1--4}), pp. \bibinfo{pages}{109--118}.

\bibitemdeclare{article}{Dassow.2015}
\bibitem{Dassow.2015}
\bibinfo{author}{J{\"u}rgen \surnamestart Dassow\surnameend}
  (\bibinfo{year}{2015}): \emph{\bibinfo{title}{Contextual languages with
  strictly locally testable and star free selection languages}}.
\newblock {\slshape \bibinfo{journal}{Analele Universitatii Bucuresti}}
  \bibinfo{volume}{62}, pp. \bibinfo{pages}{25--36}.

\bibitemdeclare{article}{Dassow_Manea_Truthe.2012}
\bibitem{Dassow_Manea_Truthe.2012}
\bibinfo{author}{J{\"{u}}rgen \surnamestart Dassow\surnameend},
  \bibinfo{author}{Florin \surnamestart Manea\surnameend} \&
  \bibinfo{author}{Bianca \surnamestart Truthe\surnameend}
  (\bibinfo{year}{2012}): \emph{\bibinfo{title}{On external contextual grammars
  with subregular selection languages}}.
\newblock {\slshape \bibinfo{journal}{Theoretical Computer Science}}
  \bibinfo{volume}{449}, pp. \bibinfo{pages}{64--73},
  \doi{10.1016/j.tcs.2012.04.008}.

\bibitemdeclare{article}{DasManTru12b}
\bibitem{DasManTru12b}
\bibinfo{author}{J{\"u}rgen \surnamestart Dassow\surnameend},
  \bibinfo{author}{Florin \surnamestart Manea\surnameend} \&
  \bibinfo{author}{Bianca \surnamestart Truthe\surnameend}
  (\bibinfo{year}{2012}): \emph{\bibinfo{title}{On Subregular Selection
  Languages in Internal Contextual Grammars}}.
\newblock {\slshape \bibinfo{journal}{Journal of Automata, Languages, and
  Combinatorics}} \bibinfo{volume}{17}(\bibinfo{number}{2--4}), pp.
  \bibinfo{pages}{145--164}, \doi{10.25596/jalc-2012-145}.

\bibitemdeclare{article}{Dassow_Truthe.2023}
\bibitem{Dassow_Truthe.2023}
\bibinfo{author}{J{\"{u}}rgen \surnamestart Dassow\surnameend} \&
  \bibinfo{author}{Bianca \surnamestart Truthe\surnameend}
  (\bibinfo{year}{2023}): \emph{\bibinfo{title}{Relations of contextual
  grammars with strictly locally testable selection languages}}.
\newblock {\slshape \bibinfo{journal}{{RAIRO} -- Theoretical Informatics and
  Applications}} \bibinfo{volume}{57}, p. \bibinfo{pages}{\#10},
  \doi{10.1051/ita/2023012}.

\bibitemdeclare{book}{Gecseg_Peak.1972}
\bibitem{Gecseg_Peak.1972}
\bibinfo{author}{Ference \surnamestart G\'ecseg\surnameend} \&
  \bibinfo{author}{Istv{\'{a}}n \surnamestart Pe{\'{a}}k\surnameend}
  (\bibinfo{year}{1972}): \emph{\bibinfo{title}{Algebraic Theory of Automata}}.
\newblock \bibinfo{publisher}{Academiai Kiado, Budapest}.

\bibitemdeclare{article}{Gill_Kou.1974}
\bibitem{Gill_Kou.1974}
\bibinfo{author}{Arthur \surnamestart Gill\surnameend} \&
  \bibinfo{author}{Lawrence~T. \surnamestart Kou\surnameend}
  (\bibinfo{year}{1974}): \emph{\bibinfo{title}{Multiple-entry finite
  automata}}.
\newblock {\slshape \bibinfo{journal}{Journal of Computer and System Sciences}}
  \bibinfo{volume}{9}(\bibinfo{number}{1}), pp. \bibinfo{pages}{1--19},
  \doi{10.1016/S0022-0000(74)80034-6}.

\bibitemdeclare{article}{Han_Salomaa.2009}
\bibitem{Han_Salomaa.2009}
\bibinfo{author}{Yo-Sub \surnamestart Han\surnameend} \& \bibinfo{author}{Kai
  \surnamestart Salomaa\surnameend} (\bibinfo{year}{2009}):
  \emph{\bibinfo{title}{State complexity of basic operations on suffix-free
  regular languages}}.
\newblock {\slshape \bibinfo{journal}{Theoretical Computer Science}}
  \bibinfo{volume}{410}(\bibinfo{number}{27}), pp. \bibinfo{pages}{2537--2548},
  \doi{10.1016/j.tcs.2008.12.054}.

\bibitemdeclare{article}{Havel.1969}
\bibitem{Havel.1969}
\bibinfo{author}{Ivan~M. \surnamestart Havel\surnameend}
  (\bibinfo{year}{1969}): \emph{\bibinfo{title}{The theory of regular events
  {II}}}.
\newblock {\slshape \bibinfo{journal}{Kybernetika}}
  \bibinfo{volume}{5}(\bibinfo{number}{6}), pp. \bibinfo{pages}{520--544}.

\bibitemdeclare{inproceedings}{Holzer_Truthe.2015}
\bibitem{Holzer_Truthe.2015}
\bibinfo{author}{Markus \surnamestart Holzer\surnameend} \&
  \bibinfo{author}{Bianca \surnamestart Truthe\surnameend}
  (\bibinfo{year}{2015}): \emph{\bibinfo{title}{On relations between some
  subregular language families}}.
\newblock In \bibinfo{editor}{Rudolf \surnamestart Freund\surnameend},
  \bibinfo{editor}{Markus \surnamestart Holzer\surnameend},
  \bibinfo{editor}{Nelma \surnamestart Moreira\surnameend} \&
  \bibinfo{editor}{Rog{\'{e}}rio \surnamestart Reis\surnameend}, editors:
  {\slshape \bibinfo{booktitle}{Seventh Workshop on Non-Classical Models of
  Automata and Applications -- \hbox{NCMA} 2015, Porto, Portugal, August 31 --
  September 1, 2015. Proceedings}}, {\slshape \bibinfo{series}{books@ocg.at}}
  \bibinfo{volume}{318}, \bibinfo{publisher}{{\"{O}}sterreichische Computer
  Gesellschaft}, pp. \bibinfo{pages}{109--124}.

\bibitemdeclare{inproceedings}{Koedding.Truthe.2024}
\bibitem{Koedding.Truthe.2024}
\bibinfo{author}{Marvin \surnamestart K{\"{o}}dding\surnameend} \&
  \bibinfo{author}{Bianca \surnamestart Truthe\surnameend}
  (\bibinfo{year}{2024}): \emph{\bibinfo{title}{Various Types of Comet
  Languages and their Application in External Contextual Grammars}}.
\newblock In \bibinfo{editor}{Florin \surnamestart Manea\surnameend} \&
  \bibinfo{editor}{Giovanni \surnamestart Pighizzini\surnameend}, editors:
  {\slshape \bibinfo{booktitle}{Proceedings 14th International Workshop on
  Non-Classical Models of Automata and Applications {(NCMA} 2024), {NCMA} 2024,
  G{\"{o}}ttingen, Germany, 12--13 August 2024}}, {\slshape
  \bibinfo{series}{{EPTCS}}} \bibinfo{volume}{407}, pp.
  \bibinfo{pages}{118--135}, \doi{10.4204/EPTCS.407.9}.

\bibitemdeclare{article}{Koedding.Truthe.2025}
\bibitem{Koedding.Truthe.2025}
\bibinfo{author}{Marvin \surnamestart K{\"{o}}dding\surnameend} \&
  \bibinfo{author}{Bianca \surnamestart Truthe\surnameend}
  (\bibinfo{year}{\noop{3001}submitted}): \emph{\bibinfo{title}{Various Types
  of Comet Languages and Their Application in Contextual Grammars}}.
\newblock {\slshape \bibinfo{journal}{Journal of Automata, Languages, and
  Combinatorics}}.

\bibitemdeclare{incollection}{Kudlek.2004}
\bibitem{Kudlek.2004}
\bibinfo{author}{Manfred \surnamestart Kudlek\surnameend}
  (\bibinfo{year}{2004}): \emph{\bibinfo{title}{On languages of cyclic words}}.
\newblock In \bibinfo{editor}{Natasha \surnamestart Jonoska\surnameend},
  \bibinfo{editor}{{\relax Gh}eorghe \surnamestart P\u{a}un\surnameend} \&
  \bibinfo{editor}{Grzegorz \surnamestart Rozenberg\surnameend}, editors:
  {\slshape \bibinfo{booktitle}{Aspects of Molecular Computing, Essays
  Dedicated to Tom Head on the Occasion of His 70th Birthday}}, {\slshape
  \bibinfo{series}{LNCS}} \bibinfo{volume}{2950},
  \bibinfo{publisher}{Springer-Verlag}, pp. \bibinfo{pages}{278--288},
  \doi{10.1007/978-3-540-24635-0_20}.

\bibitemdeclare{article}{Marcus.1969}
\bibitem{Marcus.1969}
\bibinfo{author}{Solomon \surnamestart Marcus\surnameend}
  (\bibinfo{year}{1969}): \emph{\bibinfo{title}{Contextual grammars}}.
\newblock {\slshape \bibinfo{journal}{Revue Roumaine de Math{\'{e}}matique
  Pures et Appliqu{\'{e}}es}} \bibinfo{volume}{14}, pp.
  \bibinfo{pages}{1525--1534}.

\bibitemdeclare{book}{McNaughton_Papert.1971}
\bibitem{McNaughton_Papert.1971}
\bibinfo{author}{Robert \surnamestart McNaughton\surnameend} \&
  \bibinfo{author}{Seymour \surnamestart Papert\surnameend}
  (\bibinfo{year}{1971}): \emph{\bibinfo{title}{Counter-Free Automata}}.
\newblock \bibinfo{publisher}{MIT Press}, \bibinfo{address}{Cambridge, USA}.

\bibitemdeclare{inproceedings}{Nagy.2019}
\bibitem{Nagy.2019}
\bibinfo{author}{Benedek \surnamestart Nagy\surnameend} (\bibinfo{year}{2019}):
  \emph{\bibinfo{title}{Union-freeness, deterministic union-freeness and
  union-complexity}}.
\newblock In \bibinfo{editor}{Michal \surnamestart Hospod{\'a}r\surnameend},
  \bibinfo{editor}{Galina \surnamestart Jir{\'a}skov{\'a}\surnameend} \&
  \bibinfo{editor}{Stavros \surnamestart Konstantinidis\surnameend}, editors:
  {\slshape \bibinfo{booktitle}{Descriptional Complexity of Formal Systems,
  21st IFIP WG 1.02 International Conference, DCFS 2019, Ko{\v{s}}ice,
  Slovakia, July 17--19, 2019, Proceedings}}, \bibinfo{publisher}{Springer,
  Cham}, pp. \bibinfo{pages}{46--56}, \doi{10.1007/978-3-030-23247-4_3}.

\bibitemdeclare{article}{Olejar_Szabari.2023}
\bibitem{Olejar_Szabari.2023}
\bibinfo{author}{Viktor \surnamestart Olej\'{a}r\surnameend} \&
  \bibinfo{author}{Alexander \surnamestart Szabari\surnameend}
  (\bibinfo{year}{2023}): \emph{\bibinfo{title}{Closure Properties of
  Subregular Languages Under Operations}}.
\newblock {\slshape \bibinfo{journal}{International Journal of Foundations of
  Computer Science}}, pp. \bibinfo{pages}{1--25},
  \doi{10.1142/S0129054123450016}.

\bibitemdeclare{article}{Paz_Peleg.1965}
\bibitem{Paz_Peleg.1965}
\bibinfo{author}{Azaria \surnamestart Paz\surnameend} \&
  \bibinfo{author}{Bezalel \surnamestart Peleg\surnameend}
  (\bibinfo{year}{1965}): \emph{\bibinfo{title}{Ultimate-definite and
  symmetric-definite events and automata}}.
\newblock {\slshape \bibinfo{journal}{Journal of the ACM}}
  \bibinfo{volume}{12}(\bibinfo{number}{3}), pp. \bibinfo{pages}{399--410},
  \doi{10.1145/321281.321292}.

\bibitemdeclare{article}{Perles_Rabin_Shamir.1963}
\bibitem{Perles_Rabin_Shamir.1963}
\bibinfo{author}{Micha~A. \surnamestart Perles\surnameend},
  \bibinfo{author}{Michael~O. \surnamestart Rabin\surnameend} \&
  \bibinfo{author}{Eli \surnamestart Shamir\surnameend} (\bibinfo{year}{1963}):
  \emph{\bibinfo{title}{The theory of definite automata}}.
\newblock {\slshape \bibinfo{journal}{IEEE Transactions of Electronic
  Computers}} \bibinfo{volume}{12}, pp. \bibinfo{pages}{233--243},
  \doi{10.1109/PGEC.1963.263534}.

\bibitemdeclare{book}{Rozenberg_Salomaa.1997}
\bibitem{Rozenberg_Salomaa.1997}
\bibinfo{editor}{Grzegorz \surnamestart Rozenberg\surnameend} \&
  \bibinfo{editor}{Arto \surnamestart Salomaa\surnameend}, editors
  (\bibinfo{year}{1997}): \emph{\bibinfo{title}{Handbook of Formal Languages}}.
\newblock \bibinfo{publisher}{Springer-Verlag}, \bibinfo{address}{Berlin},
  \doi{10.1007/978-3-642-59136-5}.

\bibitemdeclare{book}{Shyr.1991}
\bibitem{Shyr.1991}
\bibinfo{author}{Huei{-}Jan \surnamestart Shyr\surnameend}
  (\bibinfo{year}{1991}): \emph{\bibinfo{title}{Free Monoids and Languages}}.
\newblock \bibinfo{publisher}{Hon Min Book Co., Taichung, Taiwan}.

\bibitemdeclare{article}{Shyr_Thierrin.1974.ord}
\bibitem{Shyr_Thierrin.1974.ord}
\bibinfo{author}{Huei{-}Jan \surnamestart Shyr\surnameend} \&
  \bibinfo{author}{Gabriel \surnamestart Thierrin\surnameend}
  (\bibinfo{year}{1974}): \emph{\bibinfo{title}{Ordered automata and associated
  languages}}.
\newblock {\slshape \bibinfo{journal}{Tamkang Journal of Mathematics}}
  \bibinfo{volume}{5}(\bibinfo{number}{1}), pp. \bibinfo{pages}{9--20}.

\bibitemdeclare{article}{Shyr_Thierrin.1974.ps}
\bibitem{Shyr_Thierrin.1974.ps}
\bibinfo{author}{Huei{-}Jan \surnamestart Shyr\surnameend} \&
  \bibinfo{author}{Gabriel \surnamestart Thierrin\surnameend}
  (\bibinfo{year}{1974}): \emph{\bibinfo{title}{Power-separating regular
  languages}}.
\newblock {\slshape \bibinfo{journal}{Mathematical Systems Theory}}
  \bibinfo{volume}{8}(\bibinfo{number}{1}), pp. \bibinfo{pages}{90--95},
  \doi{10.1007/BF01761710}.

\bibitemdeclare{inproceedings}{Truthe.2014}
\bibitem{Truthe.2014}
\bibinfo{author}{Bianca \surnamestart Truthe\surnameend}
  (\bibinfo{year}{2014}): \emph{\bibinfo{title}{A relation between definite and
  ordered finite automata}}.
\newblock In \bibinfo{editor}{Suna \surnamestart Bensch\surnameend},
  \bibinfo{editor}{Rudolf \surnamestart Freund\surnameend} \&
  \bibinfo{editor}{Friedrich \surnamestart Otto\surnameend}, editors: {\slshape
  \bibinfo{booktitle}{Sixth Workshop on Non-Classical Models for Automata and
  Applications -- {NCMA} 2014, Kassel, Germany, July 28--29, 2014.
  Proceedings}}, {\slshape \bibinfo{series}{books@ocg.at}}
  \bibinfo{volume}{304}, \bibinfo{publisher}{{\"{O}}sterreichische Computer
  Gesellschaft}, pp. \bibinfo{pages}{235--247}.

\bibitemdeclare{inproceedings}{Truthe.2017}
\bibitem{Truthe.2017}
\bibinfo{author}{Bianca \surnamestart Truthe\surnameend}
  (\bibinfo{year}{2017}): \emph{\bibinfo{title}{Hierarchies of Language
  Families of Contextual Grammars}}.
\newblock In \bibinfo{editor}{Rudolf \surnamestart Freund\surnameend},
  \bibinfo{editor}{Franti\v{s}ek \surnamestart Mr\'az\surnameend} \&
  \bibinfo{editor}{Daniel \surnamestart Pr{\r{u}}\v{s}a\surnameend}, editors:
  {\slshape \bibinfo{booktitle}{{Nineth Workshop on Non-Classical Models of
  Automata and Applications (NCMA), Prague, Czech Republic, August 17--18,
  2017, Proceedings}}}, {\slshape \bibinfo{series}{books@ocg.at}}
  \bibinfo{volume}{329}, \bibinfo{publisher}{{\"O}sterreichische Computer
  Gesellschaft}, pp. \bibinfo{pages}{13--28}.

\bibitemdeclare{techreport}{Truthe.2018}
\bibitem{Truthe.2018}
\bibinfo{author}{Bianca \surnamestart Truthe\surnameend}
  (\bibinfo{year}{2018}): \emph{\bibinfo{title}{Hierarchy of Subregular
  Language Families}}.
\newblock \bibinfo{type}{Technical Report},
  \bibinfo{institution}{Justus-Liebig-Universit{\"{a}}t Giessen, Institut
  f{\"{u}}r Informatik, IFIG Research Report 1801}.

\bibitemdeclare{article}{Truthe.2021}
\bibitem{Truthe.2021}
\bibinfo{author}{Bianca \surnamestart Truthe\surnameend}
  (\bibinfo{year}{2021}): \emph{\bibinfo{title}{Generative capacity of
  contextual grammars with subregular selection languages}}.
\newblock {\slshape \bibinfo{journal}{Fundamenta Informaticae}}
  \bibinfo{volume}{180}(\bibinfo{number}{1--2}), pp. \bibinfo{pages}{123--150},
  \doi{10.3233/FI-2021-2037}.

\bibitemdeclare{book}{Wiedemann.1978}
\bibitem{Wiedemann.1978}
\bibinfo{author}{Barbara \surnamestart Wiedemann\surnameend}
  (\bibinfo{year}{1978}): \emph{\bibinfo{title}{Vergleich der
  {L}eistungsf{\"a}higkeit endlicher determinierter {A}utomaten}}.
\newblock \bibinfo{publisher}{Diplomarbeit, Universit{\"a}t Rostock}.

\end{thebibliography}
\newpage
\end{document}